\newtheorem{theorem}{Theorem}[section]
\newtheorem{definition}[theorem]{Definition}
\newtheorem{example}[theorem]{Example}
\newtheorem{remark}[theorem]{Remark}
\newtheorem{lemma}[theorem]{Lemma}
\newtheorem{corollary}[theorem]{Corollary}
\newtheorem{proposition}[theorem]{Proposition}
\DeclarePairedDelimiter\ceil{\lceil}{\rceil}
\DeclarePairedDelimiter\floor{\lfloor}{\rfloor}
\def\la{\lambda}
\def\a{\alpha}
\def\P{\mathbf P}
\def\N{\mathbb N}
\def\Z{\mathbb Z}
\def\fq{\mathbb F_q}
\def\supp{{\rm Supp}}
\def\deg{{\rm deg}}
\def\supp{{\rm supp}}
\def\res{{\rm res}}
\begin{document}

\title[Pure Gaps at Many Places and Multi-point AG Codes from Arbitrary Kummer Extensions]{Pure Gaps at Many Places and Multi-point AG Codes from Arbitrary Kummer Extensions}


\author[1]{\fnm{Huachao} \sur{Zhang}}\email{zhanghch56@mail2.sysu.edu.cn}

\author*[1,2]{\fnm{Chang-An} \sur{Zhao}}\email{zhaochan3@mail.sysu.edu.cn}

\affil[1]{\orgdiv{School of Mathematics}, \orgname{Sun Yat-sen University}, \orgaddress{\city{Guangzhou}, \postcode{510275}, \state{Guangdong Province}, \country{China}}}

\affil[2]{\orgname{Guangdong Key Laboratory of Information Security Technology}, \orgaddress{\city{Guangzhou}, \postcode{510006}, \state{Guangdong Province}, \country{China}}}


\abstract{For a Kummer extension defined by the affine equation $y^{m}=\prod_{i=1}^{r} (x-\a_i)^{\lambda_i}$ over
	an algebraic extension $K$ of a finite field $\fq$, where $\la_i\in \Z\backslash\{0\}$ for $1\leq i\leq r$, $\gcd(m,q) = 1$, and $\a_1,\cdots,\a_r\in K$ are pairwise distinct elements,
	we propose a simple and efficient method to find all pure gaps at many totally ramified places.
	We introduce a bottom set of pure gaps and indicate that the set of pure gaps is completely determined by the bottom set.
	Furthermore, we demonstrate that a pure gap can be deduced from a known pure gap by easily verifying only one inequality.
	Then, in the case where $\lambda_1 = \lambda_2 = \cdots = \lambda_r$, we fully determine an explicit description of the set of pure gaps at many totally ramified places,
	This includes the scenario in which the set of these places contains the infinite place.
	Finally, we apply these results to construct multi-point algebraic geometry codes with good parameters.
	As one of the examples, a presented code with parameters $[74, 60, \geq 10]$ over $\mathbb{F}_{25}$ yields a new record.}

\keywords{Kummer extensions, Pure gaps, Algebraic geometry codes}



\maketitle

\section{Introduction}\label{sec1}
Since Goppa introduced the algebraic geometry (AG) codes \cite{goppaCodesAssociatedDivisors1977},
the research on AG codes has drawn significant attention in the field of coding theory.
Goppa's construction of AG codes relies on a function field over a finite field.
By utilizing a function field, we hope to construct AG codes with desirable parameters, namely the length, the dimension, and the minimum distance.

One approach to constructing multi-point AG codes with good parameters is to utilize pure gaps at several places.
Homma and Kim \cite{hommaGoppaCodesWeierstrass2001} introduced the concept of pure gaps at two places, which can be used to improve the minimum distance of an AG code.
Then, this result was extended to several places by Carvalho and Torres in \cite{carvalhoGoppaCodes2005}.
The study of pure gaps and the construction of AG codes using pure gaps can be found in 
\cite{huMultipointCodes2020,carvalhoGoppaCodes2005,huMultipointCodesKummer2018,bartoliPureGapsCurves2018,tizziottiWeierstrassSemigroupPure2018,castellanosSetPure2024,castellanosOneTwoPointCodes2016,hommaGoppaCodesWeierstrass2001,castellanosWeierstrassSemigroupsPure2024,bartoliAlgebraicGeometricCodes2018,matthewsWeierstrassPairsMinimum2001,castellanosWeierstrassSemigroup2020,tenorioWeierstrassGapsSeveral2019,yangPureWeierstrassGaps2018,filhoWeierstrassPureGaps2022}.

Let $q$ be a prime power and $\fq$ a finite field with cardinality $q$. Let $K$ be an algebraic extension of $\fq$.
Consider the function field $F = K(x,y)/K(x)$ which is a Kummer extension defined by the affine equation
\begin{equation}\label{equation1}
y^m = \prod_{i=1}^r (x-\a_i)^{\la_i},
\end{equation}
where  $\la_i\in \Z\backslash\{0\}$ for $1\leq i \leq r$, $\gcd(m,q) = 1$, and $\a_1,\cdots,\a_r\in K$ are pairwise distinct elements.
Pure gaps at several totally ramified places over Kummer extensions have been extensively studied in the literature.
However, the majority of the results were based on  Kummer extensions defined by (\ref{equation1}), where all the multiplicities $\lambda_1,\cdots,\lambda_r$ are equal.
For example, Castellanos, Masuda and Quoos \cite{castellanosOneTwoPointCodes2016} characterized the set of pure gaps at two totally ramified places.
In \cite{huMultipointCodesKummer2018}, Hu and Yang provided an explicit characterization of pure gaps at several totally ramified places,
and they gave an arithmetic characterization of the set of pure gaps at several places on a quotient of the Hermition function field in \cite{yangPureWeierstrassGaps2018}.
Bartoli, Quoos and Zini \cite{bartoliAlgebraicGeometricCodes2018} proposed a criterion for determining pure gaps and presented some families of pure gaps at many totally ramified places.

Recently, pure gaps over Kummer extensions defined by (\ref{equation1}), where not all the multiplicities $\la_1,\cdots,\la_r$ are equal, have been explored.
In \cite{bartoliPureGapsCurves2018}, Bartoli et al. extended the concept of pure gaps to \textbf{c}-gaps and obtained numerous families of pure gaps at two totally ramified places.
In \cite{castellanosWeierstrassSemigroupsPure2024}, based on the knowledge of the minimal generating set of the Weierstrass semigroup,
Castellanos, Mendoza and Quoos characterized the set of pure gaps at two totally ramified places.
However, these studies only foucsed on investigating  pure gaps at two ramified places over Kummer extensions.

Explicitly describing and identifying the set of pure gaps at more than two places is a complex task.
In \cite{castellanosCompleteSetPure2024} and \cite{castellanosSetPure2024}, using maximal elements in generalized Weierstrass semigroups, Castellanos, Mendoza and Tizziotti provided a way to completely determine the set of pure gaps
at two places and several rational places in an arbitrary function field, respectively.
For Kummer extensions defined by (\ref{equation1}), where all the multiplicities $\lambda_1,\cdots,\lambda_r$ are equal, they provided an explicit description of the set of pure gaps at several totally ramified places distinct to the infinite place.
However, an explicit description of the set of pure gaps at several totally ramified places including the infinite place remains unexplored in the literature.

In this work, we focus on investigating pure gaps at many totally ramified places over Kummer extensions defined by (\ref{equation1}), where not all $\la_1,\cdots,\la_r$ are equal.
Inspired by \cite[Theorem 3.3]{bartoliPureGapsCurves2018}, we propose a new characterization for identifying pure gaps at many totally ramified places.
Then we present a straightforward and efficient method to find all pure gaps.
Our approach differs from the methods mentioned in the preceding literature.
Firstly, we introduce a bottom set of pure gaps, which has the ability to determine the set of pure gaps (see Definition \ref{def_bottom_set} and Theorem \ref{bottom_set}).
Subsequently, we propose that the bottom set of pure gaps at $s+1$ totally ramified places can be efficiently deduced from that of pure gaps at $s$ totally ramified places (see Algorithm \ref{alg2}).
Consequently, by leveraging the principle of induction and the bottom set of pure gaps, we are able to identify all pure gaps at many totally ramified places (see Algorithm \ref{alg3}).
Furthermore, we indicate that a pure gap can be deduced from a known pure gap by easily verifying only one inequality (see Theorem \ref{find_another_pure_gap}).

In particular, for Kummer extensions defined by (\ref{equation1}), where all the multiplicities $\lambda_1,\cdots,\lambda_r$ are equal,
we present a novel description of the set of gaps at some totally ramified place (see Proposition \ref{novel_gap}).
With this new characterization, we fully determine an explicit description of the set of pure gaps at many totally ramified places,
including the scenario where the set of these places contains the infinite place (see Propositons \ref{precise_puregap1}, \ref{precise_puregap2} and \ref{precise_puregap3}).
These extend the results in \cite{castellanosSetPure2024}.

Consecutive pure gaps play a crucial role in improving the minimum distance when constructing multi-point algebraic geometry codes.
Based on our methods, we easily obtain several families of consecutive pure gaps at many totally ramified places (see Propositons \ref{precisefamily1}, \ref{precisefamily2} and \ref{precisefamily3}).
By applying these consecutive pure gaps, we construct multi-point AG codes with good parameters, which outperform the parameters provided in MinT's Tables \cite{mintOnlineDatabaseOptimal}.
A new record-giving $[74,60,\geq10]$-code over $\mathbb{F}_{25}$ is presented as one of the examples (see Example \ref{new_record_example}).

The remainder of the paper is organized as follows.
In Section \ref{sec2}, we briefly review some notations and preliminary results on function fields, AG codes and pure gaps.
In Section \ref{sec3}, we present a new characterization of pure gaps at many totally ramified places and propose a simple and efficient method for finding all pure gaps at many totally ramified places.
In Section \ref{sec4}, we determine an explicit description of set of pure gaps at many totally ramified places from special Kummer extensions and obtain several families of consecutive pure gaps.
In Section \ref{sec5}, we apply our results to construct multi-point AG codes with good parameters and showcase several examples.

\section{Preliminaries}\label{sec2}
Let $q$ be a prime power and $\fq$ a finite field of cardinality $q$.
Denote by $F$ a function field and by $\P(F)$ the set of places of $F$.
Let $K$ be an algebraic extension of $\fq$. For a function field $F/K$ with full constant field $K$,
a place $Q\in \P(F)$ is said to be rational if it has degree one. Denote by $\textbf{P}_1(F)$ the set of rational places of $F$.
Denote by $\mathcal{D}(F)$ the free abelian group generated by the places of $F$. An element in $\mathcal{D}(F)$ is called a divisor.
A divisor can be written in the form $D = \sum_{P\in\P(F)}n_P P$, where $n_P\in\Z$ and $n_P = 0$ for almost all $P$.
The degree of $D$ is defined as $\deg(D) = \sum_{P\in\P(F)} n_P$, and the support set of $D$ is defined as $\supp(D)=\{P\in\P(F)~|~n_p\neq 0\}$.
For a function $z\in F$, denote by $(z)$ the principal divisor of $z$ and by $(z)_\infty$ the pole divisor of $z$.
For a divisor $D\in\mathcal{D}(F)$, the Riemann-Roch vector space with respect to $D$ is defined as
$$\mathcal{L}(D) = \{z\in F~|~(z)+D\geq 0\}\cup\{0\}.$$
Let $\ell(D)$ be the dimension of $\mathcal{L}(D)$ over $K$.

Let $D = Q_1+\cdots Q_n$ be a divisor of a function field $F$ with genus $g$ such that $Q_1,\cdots,Q_n$ are distinct rational places.
Consider another divisor $G$ of $F$ such that $\supp(D)\cap  \supp(G) = \varnothing$.
The differential algebraic geometry code $C_{\Omega}(D,G)$, which depends on the space of differentials $\Omega(G-D)$, is defined as
$$C_{\Omega}(D,G) = \{(\res_{Q_1}(\omega),\cdots,\res_{Q_n}(\omega)) ~|~ \omega\in \Omega(G-D)\}.$$
Then $C_{\Omega}(D,G)$ is an $[n,k_\Omega,d_\Omega]$ code with parameters $k_\Omega = \ell(W+D-G)-\ell(W-G)$ and $d_\Omega \geq \deg(G) - (2g-2)$,
where $W$ is a canonical divisor.
Furthermore, if $2g-2<\deg(G)<n$, then 
$$k_\Omega  = n+g-1-\deg(G).$$
For more information regarding algebraic geometry codes, the reader is referred to \cite{stichtenothAlgebraicFunctionFields2009}.

Let $\N$ be the set of non-negative integers. 
For distinct rational places $Q_1,\cdots,Q_s\in\P_1(F)$, the Weierstrass semigroup at $Q_1,\cdots,Q_s$ is defined as
$$ H(Q_1,\cdots,Q_s) = \{(a_1,\cdots,a_s)\in\N^s ~|~ (z)_\infty  = a_1Q_1+\cdots+a_sQ_s \text{ for some } z\in F\}.$$
The complement $G(Q_1,\cdots,Q_s) = \N^s\backslash H(Q_1,\cdots,Q_s)$ is always a finite set and is called the set of gaps at $Q_1,\cdots, Q_s$. An element $(a_1,\cdots,a_s)\in G(Q_1,\cdots,Q_s)$ is called a gap at $Q_1,\cdots,Q_s$.
The dimension of Riemann - Roch vector spaces can be utilized to characterize gaps, as presented in \cite{carvalhoGoppaCodes2005}. 
An $s$-tuple $(a_1,\cdots,a_s)\in \N^s$ is a gap at $Q_1,\cdots,Q_s$ if and only if
$$\ell\left(\sum_{i=1}^s a_i Q_i\right) = \ell\left(\sum_{i=1}^s a_i Q_i - Q_j\right) \text{ for some } 1\leq j\leq s.$$

Subsequently, we introduce the concept of pure gaps, which will be applied in the construction of  differential algebraic geometry codes.
An $s$-tuple $(a_1,\cdots,a_s)\in\N^s$ is a pure gap at $Q_1,\cdots, Q_s$ if and only if
$$\ell\left(\sum_{i=1}^s a_i Q_i\right) = \ell\left(\sum_{i=1}^s a_i Q_i - Q_j\right) \text{ for all } 1\leq j\leq s.$$
The set of pure gaps at $Q_1,\cdots,Q_s$ is denoted as $G_0(Q_1,\cdots,Q_s)$.
When $s=1$, it is obvious that $G(Q_1) = G_0(Q_1)$. Moreover, the gap set $G(Q_1)$ contains $g$ elements, and its elements $n_1,\cdots,n_g$ satisfy $1=n_1<\cdots<n_g\leq 2g-1$.

\begin{proposition}\label{subpuregap} 
	Suppose that $s\geq 2$, $1\leq t\leq s$, and $Q_1,\cdots,Q_s$ are the $s$ distinct rational places in the function field $F/K$.
	Let $(a_1,\cdots,a_s)\in G_0(Q_1,\cdots,Q_s)$ and $\{i_1,\cdots,i_t\}\subseteq \{1,\cdots,s\}$.
	Then $(a_{i_1},\cdots,a_{i_t})\in G_0(Q_{i_1},\cdots,Q_{i_t})$.
\end{proposition}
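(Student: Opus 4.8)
My plan is to argue directly with Riemann--Roch spaces and the divisor form of the pure-gap condition, avoiding any appeal to the generating set of the Weierstrass semigroup. First I would relabel the places so that $\{i_1,\dots,i_t\}=\{1,\dots,t\}$, and set $D=\sum_{i=1}^{s}a_iQ_i$ and $D'=\sum_{i=1}^{t}a_iQ_i$. Since every $a_i\ge 0$, we have $D-D'=\sum_{i=t+1}^{s}a_iQ_i\ge 0$, hence $D'\le D$ and therefore $\mathcal{L}(D')\subseteq\mathcal{L}(D)$. I would also recall that $\ell(B)=\ell(B-Q_j)$ is equivalent to $\mathcal{L}(B)=\mathcal{L}(B-Q_j)$, because $\mathcal{L}(B-Q_j)\subseteq\mathcal{L}(B)$ always holds and equal $K$-dimension forces equality. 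With this reformulation the hypothesis $(a_1,\dots,a_s)\in G_0(Q_1,\dots,Q_s)$ reads $\mathcal{L}(D)=\mathcal{L}(D-Q_j)$ for every $1\le j\le s$, and the goal becomes to show $\mathcal{L}(D')=\mathcal{L}(D'-Q_j)$ for every $1\le j\le t$.

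The core step I would carry out is the following. Fix $j$ with $1\le j\le t$; the inclusion $\mathcal{L}(D'-Q_j)\subseteq\mathcal{L}(D')$ is automatic, so only the reverse inclusion needs attention. I would take $f\in\mathcal{L}(D')$. Then $f\in\mathcal{L}(D)$ by the inclusion above, and the pure-gap hypothesis gives $f\in\mathcal{L}(D-Q_j)$; evaluating $(f)+D-Q_j\ge 0$ at the place $Q_j$ yields the pole bound $v_{Q_j}(f)\ge 1-a_j$. I would then verify $(f)+D'-Q_j\ge 0$ place by place: at $Q_j$ the coefficient is $v_{Q_j}(f)+a_j-1\ge 0$ by this bound, while at every other place $P$ the coefficient of $D'-Q_j$ equals that of $D'$, so the coefficient of $(f)+D'-Q_j$ at $P$ equals that of $(f)+D'\ge 0$, which holds since $f\in\mathcal{L}(D')$. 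This gives $f\in\mathcal{L}(D'-Q_j)$, hence $\mathcal{L}(D')\subseteq\mathcal{L}(D'-Q_j)$ and thus equality.

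Running this over all $1\le j\le t$ would yield $\ell(D')=\ell(D'-Q_j)$ for each such $j$, i.e.\ $(a_{i_1},\dots,a_{i_t})=(a_1,\dots,a_t)\in G_0(Q_1,\dots,Q_t)$, as claimed. I do not expect a serious obstacle here; the only point that genuinely requires care is the behaviour at the \emph{discarded} places $Q_{t+1},\dots,Q_s$, where one might worry that shrinking $D$ to $D'$ breaks membership under subtraction of $Q_j$. This is precisely what the inclusion $\mathcal{L}(D')\subseteq\mathcal{L}(D)$ settles: $\mathcal{L}(D')$ already imposes the stronger pole-free condition at those places, and that condition is untouched by subtracting $Q_j$ since $Q_j$ sits among the retained places $Q_1,\dots,Q_t$. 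If a reader prefers a more incremental formulation, the same result follows by induction on $s-t$, discarding one place at a time, with the single-place step being exactly the case $t=s-1$ of the computation above.
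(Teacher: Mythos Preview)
Your proof is correct and takes essentially the same approach as the paper: both arguments rest on the inclusion $\mathcal{L}(D')\subseteq\mathcal{L}(D)$ coming from $D'\le D$ and then track the valuation at the chosen place $Q_j$. The only cosmetic difference is that the paper phrases it as a contradiction (a witness $z\in\mathcal{L}(D')\setminus\mathcal{L}(D'-Q_{i_k})$ would lie in $\mathcal{L}(D)\setminus\mathcal{L}(D-Q_{i_k})$), whereas you argue the contrapositive directly; the content is the same.
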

\begin{proof}
	Assume that $\ell\left(\sum_{j=1}^{t}a_{i_j}Q_{i_j}\right) \neq \ell\left(\sum_{j=1}^{t}a_{i_j}Q_{i_j}-Q_{i_k}\right)$ for some $1\leq k\leq t.$
	Then there exists $z\in \mathcal{L}\left(\sum_{j=1}^{t}a_{i_j}Q_{i_j}\right)$ such that $z\in \mathcal{L}\left(\sum_{i=1}^{s}a_{i}Q_{i}\right)\setminus\mathcal{L}\left(\sum_{i=1}^{s}a_{i}Q_{i}-Q_{i_k}\right)$.
	Then $\ell\left(\sum_{i=1}^{s}a_{i}Q_{i}\right)\neq\ell\left(\sum_{i=1}^{s}a_{i}Q_{i}-Q_{i_k}\right)$, which contradicts $(a_1,\cdots,a_s)\in G_0(Q_1,\cdots,Q_s)$.
\end{proof}

The following theorem shows that pure gaps can be employed to improve the lower bound for the minimum distance of a differential algebraic geometry code.
\begin{theorem}[{\cite[Theorem 3.4]{carvalhoGoppaCodes2005}}] \label{construction}
	Let $P_1, \cdots , P_n, Q_1, \cdots, Q_s$ be pairwise distinct rational places of $F$,
	and let $(a_1, \dots , a_s),(b_1, \dots , b_s)\in\N^s$ be two pure gaps at $Q_1, \cdots, Q_s$.
	Consider the divisors $D= P_1+ \cdots + P_n$ and $G= \sum_{i=1}^s (a_i+b_i-1)Q_i$.
	Suppose that $a_i \leq b_i$ for all $i=1, \cdots , s$, and that each $s$-tuple $(c_1, \cdots , c_s)\in\N^s$ such that $a_i \leq c_i \leq b_i$ for $1\leq i\leq s$, is also a pure gap at $Q_1, \cdots , Q_s$.
	Then the minimum distance $d_\Omega$ of $C_{\Omega}(D,G)$ satisfies 
	$$ d_\Omega \geq \deg(G) -(2g-2)+s+\sum_{i=1}^s (b_i-a_i).$$
	\end{theorem}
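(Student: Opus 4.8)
The plan is to pass to the dual differential description of the code and reduce the distance bound to an effectivity statement about a single canonical divisor. First I would take a nonzero codeword of $C_\Omega(D,G)$ of minimal weight $d_\Omega$; by definition it equals $(\res_{P_1}(\omega),\dots,\res_{P_n}(\omega))$ for some $\omega\in\Omega(G-D)$, and after discarding the positions with vanishing residue I may assume that $D'=\sum_{k}P_{i_k}$ collects exactly the $w:=d_\Omega$ places where the residue is nonzero, so that $\omega\in\Omega(G-D')$, $\omega\neq 0$. Then $\omega$ has a simple pole (hence a genuine residue) at every place of $\supp(D')$ and no other pole, because $(\omega)\geq G-D'$ forces $v_P(\omega)\geq -1$ there while a nonvanishing residue rules out $v_P(\omega)\geq 0$. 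Since $\supp(D')$ is disjoint from $\{Q_1,\dots,Q_s\}$ and $G\geq 0$, the divisor $E:=(\omega)-G+D'$ is effective: at each $Q_j$ its coefficient is $v_{Q_j}(\omega)-(a_j+b_j-1)\geq 0$, at each place of $\supp(D')$ it is $v_P(\omega)+1=0$, and it is $\geq 0$ elsewhere.

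Next I would turn the bound into a degree inequality for $E$. As $\omega$ is a nonzero differential, $\deg(\omega)=2g-2$, so taking degrees in $E=(\omega)-G+D'$ gives $w=\deg(E)+\deg(G)-(2g-2)$. Since $\deg(G)=\sum_{i=1}^s(a_i+b_i-1)$, a direct simplification shows that the asserted inequality $d_\Omega\geq \deg(G)-(2g-2)+s+\sum_{i=1}^s(b_i-a_i)$ is equivalent to
\[
\deg(E)\ \geq\ s+\sum_{i=1}^s(b_i-a_i)=\sum_{i=1}^s(b_i-a_i+1),
\]
equivalently $d_\Omega\geq 2\sum_{i=1}^s b_i-(2g-2)$. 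Thus everything reduces to showing that $(\omega)$ vanishes at $Q_1,\dots,Q_s$ (and possibly elsewhere) by a total excess of at least $\sum_i(b_i-a_i+1)$ beyond the prescribed order $a_i+b_i-1$ at each $Q_i$.

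The heart of the argument is to extract this excess vanishing from the hypothesis that the whole box $\{(c_1,\dots,c_s):a_i\leq c_i\leq b_i\}$ consists of pure gaps, and this is the step I expect to be the main obstacle. I would argue by contradiction: assuming $\deg(E)<\sum_i(b_i-a_i+1)$, the coefficients $e_i:=v_{Q_i}(\omega)-(a_i+b_i-1)\geq 0$ satisfy $\sum_i e_i\leq\deg(E)<\sum_i(b_i-a_i+1)$, so some index $j_0$ has $e_{j_0}\leq b_{j_0}-a_{j_0}$, whence $v_{Q_{j_0}}(\omega)\leq 2b_{j_0}-1$. The goal is to convert this controlled vanishing, together with the simple poles of $\omega$ along $D'$, into a function lying in $\mathcal{L}(\sum_i c_iQ_i)\setminus\mathcal{L}(\sum_i c_iQ_i-Q_{j_0})$ for some tuple $(c_1,\dots,c_s)$ of the box; by the characterization recalled above, a pure gap tuple satisfies $\ell(\sum_i c_iQ_i)=\ell(\sum_i c_iQ_i-Q_{j_0})$, so the existence of such a function contradicts the purity of that tuple at $Q_{j_0}$. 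The delicate point, and the reason the full interval rather than merely its two corners is needed, is that the pure-gap conditions constrain only divisors supported on the $Q_i$, whereas $\omega$ also carries poles along $D'$; bridging this requires trading the $w$ simple poles against the vanishing at the $Q_i$ and choosing $c$ and $j_0$ so that the resulting function lands precisely inside the box. Here I would invoke Proposition \ref{subpuregap} to keep the relevant lower-dimensional projections of the corner tuples pure throughout.

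Finally, once $\deg(E)\geq\sum_i(b_i-a_i+1)$ is secured, substituting back into $w=\deg(E)+\deg(G)-(2g-2)$ yields exactly the claimed lower bound on $d_\Omega$, completing the proof.
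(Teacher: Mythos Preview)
The paper does not prove this theorem; it is quoted verbatim from Carvalho--Torres as a tool and no argument is supplied in the text, so there is no in-paper proof to compare your proposal against.

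On the merits of your attempt: the reduction is correct and standard. Taking a minimal-weight differential $\omega$, forming $E=(\omega)-G+D'\geq 0$, and rewriting the claim as $\deg E\geq\sum_i(b_i-a_i+1)$ (equivalently $d_\Omega\geq 2\sum_i b_i-(2g-2)$) is exactly the right shape. The genuine gap is the final step, which you yourself flag as ``the main obstacle'' and do not carry out. From $\sum_i e_i\leq\deg E<\sum_i(b_i-a_i+1)$ you correctly extract an index $j_0$ with $v_{Q_{j_0}}(\omega)\leq 2b_{j_0}-1$, but the proposed conversion of this into a function in $\mathcal L\bigl(\sum_i c_iQ_i\bigr)\setminus\mathcal L\bigl(\sum_i c_iQ_i-Q_{j_0}\bigr)$ for some box tuple $(c_i)$ is not explained, and I do not see how to make it work: the pure-gap condition constrains only Riemann--Roch spaces of divisors supported on $\{Q_1,\dots,Q_s\}$, whereas $\omega$ carries $w$ simple poles along $D'$ that cannot be ``traded away'' without already controlling $w$. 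Proposition~\ref{subpuregap} does not help, since it only restricts pure gaps to sub-tuples and says nothing about differentials with extraneous poles.

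The argument in Carvalho--Torres avoids this difficulty by a different mechanism: one first proves the lemma that $(c_1,\dots,c_s)$ is a pure gap if and only if $\ell\bigl(\sum_i c_iQ_i\bigr)=\ell\bigl(\sum_i(c_i-1)Q_i\bigr)$, then chains these equalities across the entire box to obtain $\ell\bigl(\sum_i b_iQ_i\bigr)=\ell\bigl(\sum_i(a_i-1)Q_i\bigr)$, and deduces the distance bound from this via Riemann--Roch. Your proposal never reaches such an equality of dimensions, and without it the contradiction step remains incomplete.
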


For $a, b\in \mathbb{Z}$, we denote by $\gcd(a,b)$ the greatest commom divisor of $a$ and $b$.
For $c\in \mathbb{R}$, we denote by $\floor{c}$ the largest integer not greater than $c$ and by $\ceil{c}$ the smallest integer not less than $c$.
The following two facts hold for $\floor{c}$ and $\ceil{c}$:

(i) $\floor{-c} = -\ceil{c}$;

(ii) $\ceil{c} = \floor{c}+1$ if $c\notin\Z$.

\begin{lemma}[{\cite[Lemma 7]{morenoExplicitNonspecial2024}}]\label{floor_minus}
	Let $m$ and $r$ be two positive integers such that $\gcd(m,r) = 1$. Then for $1\leq j\leq m-1$,
	$$\floor*{\frac{r(j+1)}{m}} - \floor*{\frac{rj}{m}} \geq \floor*{\frac{r}{m}}.$$
\end{lemma}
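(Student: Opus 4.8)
The plan is to reduce the inequality to the elementary fact that the floor function is non-decreasing, after peeling off the integer part of $r/m$ by Euclidean division. First I would write
$$r = qm + s, \qquad q = \floor*{\frac{r}{m}}, \qquad 0 \leq s < m,$$
so that $s$ is the residue of $r$ modulo $m$. The coprimality hypothesis $\gcd(m,r)=1$ is not actually needed for the inequality itself; it only guarantees $s \neq 0$ when $m \geq 2$, and when $m = 1$ the range $1 \leq j \leq m-1$ is empty so there is nothing to prove. I will keep $m \geq 2$ in mind but phrase the computation so that it holds for any $s$ with $0 \leq s < m$.

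Next I would invoke the standard identity $\floor*{t + c} = t + \floor{c}$ valid for any integer $t$ and real $c$. Writing $\tfrac{r(j+1)}{m} = q(j+1) + \tfrac{s(j+1)}{m}$ and $\tfrac{rj}{m} = qj + \tfrac{sj}{m}$, where $q(j+1)$ and $qj$ are integers, this gives
$$\floor*{\frac{r(j+1)}{m}} - \floor*{\frac{rj}{m}} = q + \floor*{\frac{s(j+1)}{m}} - \floor*{\frac{sj}{m}}.$$
Since $q = \floor*{\frac{r}{m}}$ is exactly the right-hand side of the claimed inequality, it now suffices to show that $\floor*{\frac{s(j+1)}{m}} \geq \floor*{\frac{sj}{m}}$.

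This last step is immediate: because $s \geq 0$ we have $\tfrac{sj}{m} \leq \tfrac{s(j+1)}{m}$, and the floor function preserves order, so the two residual floors satisfy the required inequality. Combining, the whole difference is at least $q = \floor*{\frac{r}{m}}$, which is precisely the assertion of the lemma.

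As for the main obstacle, there is genuinely no hard step here: the only real move is recognizing that isolating the integer quotient $q$ collapses the problem to comparing two floors of an increasing quantity. The one thing I would verify carefully is the handling of the degenerate situations—the vacuous case $m=1$, and the boundary value $s=0$ (which the coprimality assumption excludes for $m \geq 2$ but which in any case does no harm in the displayed identity)—so that the monotonicity argument applies verbatim.
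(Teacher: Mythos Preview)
Your argument is correct: writing $r = qm + s$ with $q = \floor{r/m}$ and $0 \le s < m$, the difference telescopes to $q + \floor{s(j+1)/m} - \floor{sj/m} \ge q$ by monotonicity of the floor, and your remarks on the degenerate cases are accurate. The paper does not supply its own proof of this lemma---it merely quotes it from \cite{morenoExplicitNonspecial2024}---so there is no in-paper argument to compare against; your elementary derivation is a perfectly adequate justification.
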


Consider the Kummer extension $F = K(x,y)/K(x)$ defined by (\ref{equation1})

$$y^m = \prod_{i=1}^r (x-\a_i)^{\la_i},$$
where  $\la_i\in \Z\setminus\{0\}$ for all $1\leq i \leq r$, $\gcd(m,q) = 1$, and $\a_1,\cdots,\a_r\in K$ are pairwise distinct elements.
Let $\la_{0} = -\sum_{i=1}^r \la_i$.
Then, according to \cite[Proposition 3.7.3]{stichtenothAlgebraicFunctionFields2009}, the genus of the function field $F/K$ is given by  
$$ g = \frac{m(r-1)+2-\sum_{i=0}^{r}\gcd(m,\la_i)}{2}.$$
It should be noted that $\sum_{i=0}^r \lambda_i  = 0$. We present a highly useful lemma. 
\begin{lemma}\label{sum_mod}
	Suppose that $(a_0,\cdots,a_r)\in \Z^{r+1}$, $b, c\in\Z$ and $b \equiv c\pmod m$.
	Then $\sum_{i=0}^r\left\lfloor\frac{a_i+b\la_i}{m}\right\rfloor = \sum_{i=0}^r\left\lfloor\frac{a_i+c\la_i}{m}\right\rfloor $.
\end{lemma}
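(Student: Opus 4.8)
The plan is to reduce everything to the single algebraic fact that $\sum_{i=0}^r \la_i = 0$, which the excerpt records just before the lemma. Since the two sums differ only through the terms $b\la_i$ versus $c\la_i$, and $b\equiv c\pmod m$, I would begin by writing $b = c + km$ for some integer $k$, which is exactly what the congruence $b\equiv c\pmod m$ gives us.

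Next I would substitute this into each summand and peel off an integer. For each index $i$ we have
\begin{equation*}
\frac{a_i + b\la_i}{m} = \frac{a_i + c\la_i + km\la_i}{m} = \frac{a_i + c\la_i}{m} + k\la_i.
\end{equation*}
The term $k\la_i$ is an integer, so the elementary property $\floor{x + n} = \floor{x} + n$ for $n\in\Z$ applies, yielding
\begin{equation*}
\floor*{\frac{a_i + b\la_i}{m}} = \floor*{\frac{a_i + c\la_i}{m}} + k\la_i.
\end{equation*}
This is the only nontrivial manipulation, and it is entirely routine.

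Finally I would sum this identity over $i = 0, 1, \dots, r$. The floor terms reassemble into the two sums in the statement, and the leftover contribution is $k\sum_{i=0}^r \la_i$. Because $\la_0 = -\sum_{i=1}^r \la_i$ forces $\sum_{i=0}^r \la_i = 0$, this leftover vanishes, and the two sums coincide. There is no real obstacle here: the lemma is a short bookkeeping argument whose entire content is the vanishing of $\sum_{i=0}^r \la_i$ together with integrality of $k\la_i$. The one point worth stating explicitly is that the congruence is used precisely to produce the integer factor $k$ so that the floor function can absorb $k\la_i$ without error.
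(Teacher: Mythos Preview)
Your proof is correct and follows exactly the same approach as the paper: write $b = c + km$, pull the integer $k\la_i$ out of each floor, and use $\sum_{i=0}^r \la_i = 0$ to kill the leftover term.
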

\begin{proof}
	Suppose that $b = c + km$ for some $k\in \Z$. Then
\begin{align*}
\sum_{i=0}^r\left\lfloor\frac{a_i+b\la_i}{m}\right\rfloor
	&= \sum_{i=0}^r\left\lfloor\frac{a_i+(c+km)\la_i}{m}\right\rfloor\\
	&= \sum_{i=0}^r\left\lfloor\frac{a_i+c\la_i}{m}\right\rfloor +k\sum_{i=0}^{r}\la_i 
	 = \sum_{i=0}^r\left\lfloor\frac{a_i+c\la_i}{m}\right\rfloor.
\end{align*}
\end{proof}

For each $1\leq i\leq r$, let $P_i$ and $P_\infty$ denote the places in $\P(K(x))$ corresponding to the zero of $x-\a_i$ and the pole of $x$, respectively.
If $\gcd(m,\la_i) = 1$, we denote by $Q_i$ the only place in $\P(F)$ lying over $P_i$ and say that $Q_i$ is totally ramified.
If $\gcd(m,\la_{0}) = 1$, we denote by $Q_\infty$ the only place in $\P(F)$ lying over $P_\infty$ and say that  $Q_\infty$ is totally ramified.
For convenience, if $\gcd(m,\la_{0}) = 1$, we also denote by $P_0$ the pole of $x$ in $\P(K(x))$ and denote by $Q_0$ the only place in $\P(F)$ lying over $P_0$.

If $Q_i$ is totally ramified for some $1\leq i\leq r$, then $Q_i$ is rational and 
$$(x-\alpha_i)^{-k} = \frac{km}{\gcd(m,\la_0)}\sum_{Q\in \P(F), Q|P_\infty} Q - kmQ_i \text{ for all } k\in \N.$$
Thus $km\not\in G(Q_i)$ for all $k\in \N$.
If $Q_\infty$ is totally ramified, then for each $1\leq i\leq r$,  $Q_\infty$ is rational and 
$$(x-\alpha_i)^k = \frac{km}{\gcd(m,\la_i)}\sum_{Q\in \P(F), Q|P_i} Q - kmQ_\infty \text{ for all } k\in \N.$$
Thus $km\not\in G(Q_i)$ for all $k\in \N$.

In particular, if $\lambda_1 = \lambda_2 = \cdots = \lambda_r$, one can explicitly describe the set $G(Q_i)$ for $0\leq i\leq r$. 
\begin{lemma}[{\cite[Corollary 2]{huMultipointCodesKummer2018}}]\label{special_gap}
	Suppose that $\lambda_1 = \lambda_2 = \cdots = \lambda_r$ and $\gcd(r\lambda_1, m)  =1$. Then for each $1\leq i\leq r$,
	$$G(Q_i) = \left\{ mk+j ~|~ 1\leq j\leq m-1-\floor*{\frac{m}{r}},~ 0\leq k \leq r-2-\floor*{\frac{rj}{m}} \right\}.$$
	For the infinite place $Q_\infty$,
	$$G(Q_\infty) =  \left\{ mk-rj ~|~ 1\leq j\leq m-1-\floor*{\frac{m}{r}},~ \ceil*{\frac{rj}{m}}\leq k \leq r-1 \right\}.$$
\end{lemma}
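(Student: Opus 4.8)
The plan is to compute $\ell(nQ_i)$ for every $n\ge 0$ and read off the gaps via the criterion that $n$ is a gap at $Q_i$ iff $\ell(nQ_i)=\ell((n-1)Q_i)$. First I would record the local data: since $\gcd(r\la_1,m)=1$ forces $\gcd(\la_1,m)=\gcd(r,m)=1$, the genus is $g=(r-1)(m-1)/2$ and each $Q_i$ and $Q_\infty$ is totally ramified. From $y^m=\prod_j(x-\a_j)^{\la_1}$ one gets $v_{Q_i}(x-\a_i)=m$, $v_{Q_i}(x-\a_j)=0$ for $j\ne i$, $v_{Q_i}(y)=\la_1$, and $v_{Q_\infty}(x-\a_j)=-m$, $v_{Q_\infty}(y)=-r\la_1$.

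Next I would set up a layered decomposition. Every $z\in F$ writes uniquely as $z=\sum_{b=0}^{m-1}g_b(x)y^b$ with $g_b\in K(x)$. Because $\gcd(\la_1,m)=1$, at each of $Q_i$, $Q_j\ (j\ne i)$ and $Q_\infty$ the valuations $v(g_by^b)\equiv b\la_1\pmod m$ are pairwise distinct modulo $m$; hence no cancellation occurs and $z\in\mathcal L(nQ_i)$ iff every summand $g_by^b$ lies in $\mathcal L(nQ_i)$. At a place over a finite $P_\alpha$ with $\alpha\notin\{\a_1,\dots,\a_r\}$ the element $y$ is a unit, so regularity there forces each $g_b$ to have poles only among $\a_1,\dots,\a_r,\infty$; converting the remaining local inequalities into bounds on $v_{P_j}(g_b)$ and $\deg g_b$ shows that $g_by^b\in\mathcal L(nQ_i)$ precisely when $g_b$ lies in the Riemann--Roch space, computed in $K(x)$, of
$$D_b=\floor*{\tfrac{n+b\la_1}{m}}\,P_i+\floor*{\tfrac{b\la_1}{m}}\sum_{j\ne i}P_j-\ceil*{\tfrac{br\la_1}{m}}\,P_\infty.$$
Since $K(x)$ has genus $0$, this gives $\ell(nQ_i)=\sum_{b=0}^{m-1}\max\{0,\deg D_b+1\}$.

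Then I would extract the gaps. For each $n$ there is a unique $b=b(n)\in\{0,\dots,m-1\}$ with $n\equiv-b\la_1\pmod m$, and only the summand $\deg D_{b(n)}$ increases (by $1$) in passing from $n-1$ to $n$; therefore $n$ is a gap iff $\deg D_{b(n)}<0$. As then $n+b\la_1\equiv0\pmod m$ the first floor is exact, and this inequality becomes $n<N_b$ with $N_b=m\ceil*{\tfrac{br\la_1}{m}}-m(r-1)\floor*{\tfrac{b\la_1}{m}}-b\la_1$. Putting $j=(-b\la_1)\bmod m$ and writing $b\la_1=m\ceil*{\tfrac{b\la_1}{m}}-j$, the identities $\floor*{-c}=-\ceil*{c}$ and $\ceil*{c}=\floor*{c}+1$ collapse $N_b$ to the $\la_1$-free quantity $m\!\left(r-1-\floor*{\tfrac{rj}{m}}\right)+j$. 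Writing $n=mk+j$, the gap condition $n<N_b$ reads $0\le k\le r-2-\floor*{\tfrac{rj}{m}}$, and nonemptiness of this range forces $1\le j\le m-1-\floor*{\tfrac{m}{r}}$, which is exactly the claimed $G(Q_i)$; one checks the total equals $g$. The place $Q_\infty$ is handled by the same scheme, now with all $r$ finite places contributing poles of order $\floor*{b\la_1/m}$ to $g_b$ and $P_\infty$ carrying the $n$-dependent coefficient; the analogous simplification (with $j=m-(b\la_1\bmod m)$ and a substitution $k=r-s$) yields the description of $G(Q_\infty)$.

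The main obstacle is the decomposition step: one must argue carefully that a function regular away from the target place really splits as $\sum_b g_by^b$ with each summand regular, and in particular must pin down the \emph{exact} pole orders allowed for $g_b$ at the other ramified places---namely poles of order up to $\floor*{b\la_1/m}$ at each $Q_j$ with $j\ne i$. Omitting these admissible poles produces too many apparent gaps and an answer inconsistent with the genus. Once the divisors $D_b$ are correct, the rest is the floor/ceiling bookkeeping that eliminates $\la_1$ and re-indexes by $(j,k)$; here the elementary floor identities, together with Lemma~\ref{floor_minus} for the behaviour of $\floor*{rj/m}$, control the range of $k$.
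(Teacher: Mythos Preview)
The paper does not prove this lemma at all: it is quoted from \cite{huMultipointCodesKummer2018} and used as a black box (the paper's own contribution in this direction is the alternative description in Proposition~\ref{novel_gap}, proved via Lemma~\ref{simple_criterion}, and even that proof invokes the cardinality identity $g=\sum_j\bigl(r-1-\lfloor rj/m\rfloor\bigr)$ extracted from the present cited lemma). So there is no in-paper argument to compare yours against; you have supplied a genuine proof where the paper supplies only a citation.

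Your direct approach---decompose $\mathcal L(nQ_i)$ along the basis $1,y,\dots,y^{m-1}$, reduce to Riemann--Roch on the genus-zero field $K(x)$, and track which $n$ make the relevant $\deg D_b$ cross zero---is standard and the floor/ceiling bookkeeping that collapses $N_b$ to the $\la_1$-free form $j+m\bigl(r-1-\lfloor rj/m\rfloor\bigr)$ is correct. The one soft spot is precisely the one you flag: your ``valuations distinct mod $m$'' argument rules out cancellation at the totally ramified places $Q_1,\dots,Q_r,Q_\infty$, but at a place $Q$ lying over an \emph{unramified} $P$ every summand $g_b(x)y^b$ has valuation $v_P(g_b)$ (since $y$ is a unit there), and valuations alone do not preclude cancellation among the $g_b$. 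The clean fix is Galois averaging: with $\sigma\colon y\mapsto\zeta y$ generating $\mathrm{Gal}(F/K(x))$ and $nQ_i$ a $\sigma$-invariant divisor (because $Q_i$ is totally ramified), one has $g_b y^b=\frac{1}{m}\sum_{k=0}^{m-1}\zeta^{-bk}\sigma^k(z)\in\mathcal L(nQ_i)$ directly. This uses that $K$ contains a primitive $m$-th root of unity, which is the standing hypothesis for Kummer extensions implicit in the paper via \cite[Proposition~3.7.3]{stichtenothAlgebraicFunctionFields2009}. With that patch your argument goes through for both $G(Q_i)$ and, after the indicated re-indexing, $G(Q_\infty)$.
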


Let $\{j_0,j_1\cdots,j_r\}$ be a permutation of $\{0,1,\cdots,r\}$.
Let $Q_{j_0}, Q_{j_1},\cdots,Q_{j_s}$ be $s+1$ totally ramified places.
For $0\leq i\leq s$, note that $km\not\in G(Q_{j_i})$ for all $k\in \N$.
Given $(a_0,a_1,\cdots,a_s)\in G_0(Q_{j_0}, Q_{j_1},\cdots,Q_{j_s})$, according to Proposition \ref{subpuregap}, we conclude that $a_i\not\equiv 0 \pmod m$ for all $0\leq i\leq s$.
Based on this observation, we introduce the concept of bottom sets of pure gaps at many totally ramified places.

\begin{definition}\label{def_bottom_set}
	Suppose that $0\leq s \leq r$ and $\gcd(m,\la_{j_i})=1$ for all $0\leq i \leq s$. The bottom set of pure gaps at  $Q_{j_0},Q_{j_1},\cdots, Q_{j_s}$ is defined as
	$$\overline{G}_0(Q_{j_0},Q_{j_1},\cdots, Q_{j_s}) = G_0(Q_{j_0},Q_{j_1},\cdots, Q_{j_s})\cap [1,m-1]^{s+1},$$
	where $[1,m-1]^{s+1} = \{(a_0,a_1,\cdots,a_s)\in \N^{s+1}~|~1\leq a_i\leq m-1 \text{ for } 0\leq i\leq s\}$.
	In particular, if $s = 0$, the bottom set of gaps at $Q_{j_0}$ is defined as 
	$$\overline{G}(Q_{j_0}) = G(Q_{j_0})\cap\{1,2,\cdots,m-1\}.$$
\end{definition}

The following theorem will serve as a critical foundation for investigating how to identify pure gaps at several totally ramified places.
\begin{theorem}[{\cite[Theorem 3.3]{bartoliPureGapsCurves2018}}]\label{puregapsmanypoints}
	Suppose that $0\leq s \leq r$ and $\gcd(m,\la_{j_i})=1$ for all $0\leq i \leq s$.
	Then $(a_0,a_1,\cdots,a_s)\in \N^{s+1}$ is a pure gap at $Q_{j_0},Q_{j_1},\cdots, Q_{j_s}$ if and only if,
	for each $t\in\{0,1,\cdots, m-1\}$ exactly one of the following two conditions is satisfied:
	
	(i) $\sum_{i=0}^{s}\left\lfloor\frac{a_i+t\la_{j_i}}{m}\right\rfloor + \sum_{i=s+1}^{r}\left\lfloor\frac{t\la_{j_i}}{m}\right\rfloor< 0$;

	(ii) $\sum_{i=0}^{s}\left\lfloor\frac{a_i+t\la_{j_i}}{m}\right\rfloor + \sum_{i=s+1}^{r}\left\lfloor\frac{t\la_{j_i}}{m}\right\rfloor \geq 0$ and $\left\lfloor\frac{a_i+t\la_{j_i}}{m}\right\rfloor = \left\lfloor\frac{a_i+t\la_{j_i}-1}{m}\right\rfloor$ for all $0\leq i\leq s$.
\end{theorem}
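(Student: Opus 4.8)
The plan is to compute both $\ell\!\left(\sum_{i=0}^s a_i Q_{j_i}\right)$ and $\ell\!\left(\sum_{i=0}^s a_i Q_{j_i}-Q_{j_k}\right)$ explicitly and compare them, since by the definition recalled in the excerpt $(a_0,\dots,a_s)$ is a pure gap exactly when these two dimensions coincide for every $0\le k\le s$. First I would record the local data at the totally ramified places: from $y^m=\prod_{i=1}^r(x-\a_i)^{\la_i}$ and $\la_0=-\sum_{i=1}^r\la_i$ one gets $v_{Q_{j_i}}(x-\a_{j_i})=m$ and $v_{Q_{j_i}}(y)=\la_{j_i}$ at a finite place, with the analogous $v_{Q_0}(y)=\la_0$ and $v_{Q_0}(x-\a_j)=-m$ at the infinite place. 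Because each $Q_{j_i}$ in the support is totally ramified, it is the unique place over $P_{j_i}$ and hence is fixed by the Kummer Galois group; thus $D=\sum_{i=0}^s a_iQ_{j_i}$ and $D-Q_{j_k}$ are Galois-invariant divisors.

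\textbf{Key step.} I would write $z\in F$ in the basis $z=\sum_{t=0}^{m-1}u_t(x)y^t$ with $u_t\in K(x)$, where the $y^t$ are the eigenvectors of the cyclic action $y\mapsto\zeta y$. Since $D$ is Galois-invariant, $\mathcal L(D)$ is a Galois-stable $K$-space and decomposes into eigenspaces, giving $\mathcal L(D)=\bigoplus_{t=0}^{m-1}\mathcal L_{K(x)}(D_t)\,y^t$, where $D_t$ is the divisor on $K(x)$ cut out by the condition $u_ty^t\in\mathcal L(D)$. A place-by-place valuation computation — using $v_Q(u_t)=e(Q|P)\,v_P(u_t)$, the values of $v_Q(y)$ above, and taking the binding constraint $\max_{Q|P}\lceil\cdot\rceil$ over the places above each $P$ — shows that the coefficient of $D_t$ at $P_{j_i}$ is $\floor*{\frac{a_i+t\la_{j_i}}{m}}$ for $0\le i\le s$ and $\floor*{\frac{t\la_{j_i}}{m}}$ for $s<i\le r$, while every other place contributes $0$. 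Hence $\deg D_t=N_t:=\sum_{i=0}^{s}\floor*{\frac{a_i+t\la_{j_i}}{m}}+\sum_{i=s+1}^{r}\floor*{\frac{t\la_{j_i}}{m}}$, and since $K(x)$ has genus $0$ we get $\ell_{K(x)}(D_t)=\max\{0,N_t+1\}$. Therefore $\ell(D)=\sum_{t=0}^{m-1}\max\{0,N_t+1\}$, and by Lemma \ref{sum_mod} it suffices to let $t$ range over $\{0,\dots,m-1\}$.

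\textbf{Finishing.} Repeating the computation for $D-Q_{j_k}$ only changes the coefficient of $D_t$ at $P_{j_k}$, replacing $\floor*{\frac{a_k+t\la_{j_k}}{m}}$ by $\floor*{\frac{a_k+t\la_{j_k}-1}{m}}$; this lowers $N_t$ by exactly $1$ when $m\mid(a_k+t\la_{j_k})$ and leaves it unchanged otherwise. Since $\max\{0,\cdot\}$ is nondecreasing, every summand of $\ell(D-Q_{j_k})$ is at most the corresponding summand of $\ell(D)$, so $\ell(D)=\ell(D-Q_{j_k})$ forces term-by-term equality; a summand can only drop when $m\mid(a_k+t\la_{j_k})$ and $N_t\ge 0$. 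Hence $(a_0,\dots,a_s)$ is a pure gap iff for every $t$ and every $0\le i\le s$ with $m\mid(a_i+t\la_{j_i})$ one has $N_t<0$. Finally, rewriting $m\mid(a_i+t\la_{j_i})$ as the failure of $\floor*{\frac{a_i+t\la_{j_i}}{m}}=\floor*{\frac{a_i+t\la_{j_i}-1}{m}}$ turns this into the stated dichotomy: for each $t$ either $N_t<0$ (condition (i)), or $N_t\ge 0$ together with $\floor*{\frac{a_i+t\la_{j_i}}{m}}=\floor*{\frac{a_i+t\la_{j_i}-1}{m}}$ for all $i$ (condition (ii)); as these two are mutually exclusive, exactly one holds.

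\textbf{Main obstacle.} The crux is the structural reduction $\ell(D)=\sum_{t=0}^{m-1}\max\{0,N_t+1\}$: justifying the eigenspace decomposition of $\mathcal L(D)$ and, above all, computing the coefficients of each $D_t$ correctly at every place — the totally ramified support places, the possibly non-totally-ramified places outside the support, and especially the infinite place in both cases $0\in\{j_0,\dots,j_s\}$ and $0\notin\{j_0,\dots,j_s\}$. Once $\deg D_t=N_t$ is established, the dimension comparison and the translation into conditions (i)--(ii) are routine bookkeeping.
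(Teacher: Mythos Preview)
The paper does not prove this statement: Theorem~\ref{puregapsmanypoints} is quoted verbatim from \cite[Theorem~3.3]{bartoliPureGapsCurves2018} and used as a black box, so there is no in-paper proof to compare against. Your proposal is therefore being measured against the original source rather than the present paper.

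That said, your outline is the standard argument and is essentially how results of this type (including the cited one) are established: decompose $\mathcal L(D)$ into $y^t$-eigenspaces under the cyclic Galois action, push each eigenspace down to a Riemann--Roch space on the genus-zero field $K(x)$, read off $\deg D_t=N_t$ from the local valuations, and compare term by term with $\mathcal L(D-Q_{j_k})$. Your valuation computations at the ramified places (both in and out of the support, finite and infinite) are correct, and the bookkeeping translating ``$N_t$ drops exactly when $m\mid(a_k+t\la_{j_k})$'' into the dichotomy (i)/(ii) is sound.

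One small caveat worth making explicit in a polished write-up: the eigenspace decomposition requires a primitive $m$-th root of unity in the constant field, which the paper's hypothesis ``$K$ an algebraic extension of $\fq$ with $\gcd(m,q)=1$'' does not literally guarantee. This is harmless---$\ell(D)$ is invariant under constant field extension, so one may pass to $K(\zeta_m)$---but you should say so rather than leave it implicit.
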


\section{Pure gaps from arbitrary Kummer extensions}\label{sec3}

Throughout this section, $F = K(x,y)/K(x)$ is a Kummer extension defined by (\ref{equation1}).
For each $1\leq i\leq r$, let $P_i$ and $P_\infty$ denote the places in $\P(K(x))$ corresponding to the zero of $x-\a_i$ and the pole of $x$, respectively.
If $\gcd(m,\la_i) = 1$ for $1\leq i\leq r$, we denote by $Q_i$ the only place in $\P(F)$ lying over $P_i$.
For convenience, if $\gcd(m,\la_{0}) = 1$, we denote by $P_0$ the pole of $x$ in $\P(K(x))$ and denote by $Q_0$ the only place in $\P(F)$ lying over $P_0$.
Let $\{j_0,j_1\cdots,j_r\}$ be a permutation of $\{0,1,\cdots,r\}$.

\subsection{A new characterization of pure gaps}
In this subsection, inspired by Theorem \ref{puregapsmanypoints}, we point out that the set of pure gaps is completely determined by the bottom set through the following theorem.
This theorem provides a new characterization for identifying pure gaps at many totally ramified places.

\begin{theorem}\label{bottom_set}
	Suppose that $0\leq s \leq r$ and $\gcd(m,\lambda_{j_i})=1$ for all $0\leq i \leq s$.
	Let $\sigma_{j_i}\in \Z$ be the inverse of $\lambda_{j_i}$ modulo $m$ for $0\leq i\leq s$.
	For $\mathbf{a} = (a_0,a_1,\cdots,a_s)\in \overline{G}_0(Q_{j_0},Q_{j_1},\cdots, Q_{j_s})$, define
	$$k_{\mathbf{a}} = -1-\max_{0\leq v\leq s}\left\{\sum_{i=0}^{s}\left\lfloor\frac{a_i-a_v\sigma_{j_v}\la_{j_i}}{m}\right\rfloor - \sum_{i=s+1}^{r}\left\lfloor\frac{-a_v\sigma_{j_v}\la_{j_i}}{m}\right\rfloor\right\}.$$
	Then
	\begin{align*}
	&G_0(Q_{j_0},Q_{j_1},\cdots, Q_{j_s}) = \Big\{(mk_0+a_0,mk_1+a_1,\cdots,mk_s+a_s)~\Big|\\
	&\mathbf{a} = (a_0,a_1,\cdots,a_s)\in\overline{G}_0(Q_{j_0},Q_{j_1},\cdots, Q_{j_s}),~k_i\geq 0 \text{ for } 0\leq i\leq s, \text{ and } \sum_{i=0}^s k_i \leq k_{\mathbf{a}}\Big\}.
	\end{align*}
\end{theorem}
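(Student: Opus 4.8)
The plan is to establish the two inclusions simultaneously by reducing, through Theorem \ref{puregapsmanypoints}, membership in $G_0(Q_{j_0},\dots,Q_{j_s})$ to a finite family of sign inequalities indexed by the modular inverses $\sigma_{j_v}$, and then separating off the residues modulo $m$. First I record the structural decomposition already implicit before Definition \ref{def_bottom_set}: since $km\notin G(Q_{j_i})$ for every $k\in\N$, Proposition \ref{subpuregap} forces each coordinate of any pure gap to be $\not\equiv 0\pmod m$. Hence every $\mathbf b=(b_0,\dots,b_s)\in G_0(Q_{j_0},\dots,Q_{j_s})$ can be written uniquely as $b_i=mk_i+a_i$ with $k_i\ge 0$ and $1\le a_i\le m-1$. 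Writing $\mathbf a=(a_0,\dots,a_s)$ and $\mathbf k=(k_0,\dots,k_s)$, the two inclusions collapse to the single equivalence: the tuple with entries $mk_i+a_i$ is a pure gap if and only if $\mathbf a\in\overline{G}_0(Q_{j_0},\dots,Q_{j_s})$ and $\sum_{i=0}^s k_i\le k_{\mathbf a}$.

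For the core reduction, abbreviate by $S(\mathbf b,t)$ the left-hand side appearing in Theorem \ref{puregapsmanypoints}, namely
\[
 S(\mathbf b,t)=\sum_{i=0}^{s}\left\lfloor\frac{b_i+t\lambda_{j_i}}{m}\right\rfloor+\sum_{i=s+1}^{r}\left\lfloor\frac{t\lambda_{j_i}}{m}\right\rfloor .
\]
The elementary fact that $\lfloor x/m\rfloor=\lfloor (x-1)/m\rfloor$ holds exactly when $x\not\equiv 0\pmod m$ shows that the secondary clause of condition (ii) is satisfied precisely when $b_i+t\lambda_{j_i}\not\equiv 0\pmod m$ for all $0\le i\le s$. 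Because $b_i\equiv a_i\pmod m$, this clause depends only on the residues, and it fails exactly at the values $t\equiv -a_v\sigma_{j_v}\pmod m$ for some $v$ (the ``critical'' residues, each nonzero since $a_v\ne 0$). I will then split the range $t\in\{0,\dots,m-1\}$: at a non-critical $t$ the secondary clause holds, so (i) and (ii) reduce to the mutually exclusive and exhaustive alternatives $S(\mathbf b,t)<0$ and $S(\mathbf b,t)\ge 0$, and ``exactly one holds'' is automatic, imposing no constraint; at a critical $t$ condition (ii) is void, so ``exactly one holds'' forces (i), that is $S(\mathbf b,t)<0$. Therefore $\mathbf b$ is a pure gap if and only if $S(\mathbf b,\tau_v)<0$ for every $v$, where $\tau_v$ denotes any integer with $\tau_v\equiv -a_v\sigma_{j_v}\pmod m$.

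It remains to unwind this system in terms of $\mathbf a$ and $\mathbf k$. Since $\lfloor (mk_i+x)/m\rfloor=k_i+\lfloor x/m\rfloor$, I obtain $S(\mathbf b,\tau_v)=\sum_{i=0}^s k_i+S(\mathbf a,\tau_v)$. As $\sum_i k_i\ge 0$, the inequalities $S(\mathbf b,\tau_v)<0$ in particular give $S(\mathbf a,\tau_v)<0$ for all $v$; applying the previous paragraph to $\mathbf a$ itself (the case $\mathbf k=\mathbf 0$) this says exactly that $\mathbf a\in\overline{G}_0(Q_{j_0},\dots,Q_{j_s})$, which is the residue half of the decomposition. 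The remaining system $\sum_i k_i+S(\mathbf a,\tau_v)<0$ for all $v$ is then equivalent to the single inequality $\sum_{i=0}^s k_i\le -1-\max_{0\le v\le s}S(\mathbf a,\tau_v)$. Finally, choosing the representative $\tau_v=-a_v\sigma_{j_v}$ and invoking Lemma \ref{sum_mod} (which applies because $\sum_{i=0}^r\lambda_{j_i}=0$) to evaluate the floor-sum, I identify the threshold $-1-\max_v S(\mathbf a,\tau_v)$ with the integer $k_{\mathbf a}$ of the statement, which completes the equivalence.

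The main obstacle is the case analysis inside Theorem \ref{puregapsmanypoints}: one must argue cleanly that the non-critical values of $t$ contribute no condition, while the finitely many critical residues $-a_v\sigma_{j_v}$ carry all the information and are governed by the residues $a_i$ alone. The subtler trap, which I expect to require the most care, is the passage to the representative $\tau_v=-a_v\sigma_{j_v}$: the partial sums $\sum_{i\le s}$ and $\sum_{i>s}$ are not separately invariant modulo $m$, so Lemma \ref{sum_mod} must be applied to the complete expression $S(\mathbf a,\tau_v)$, whose floor arguments carry coefficients $\lambda_{j_i}$ summing to zero. Only this full sum is well defined modulo $m$, and it is precisely here that care with signs is essential.
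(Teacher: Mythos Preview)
Your proposal is correct and follows essentially the same route as the paper: both reduce Theorem \ref{puregapsmanypoints} to the inequalities $S(\mathbf b,\tau_v)<0$ at the critical residues $\tau_v\equiv -a_v\sigma_{j_v}$, then separate off $\sum_i k_i$ via $S(\mathbf b,\tau_v)=\sum_i k_i+S(\mathbf a,\tau_v)$. The only organizational difference is that the paper isolates your steps on the critical/non-critical dichotomy as a standalone lemma (Lemma \ref{simple_criterion}) and then invokes it, whereas you inline that reduction directly.
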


In order to prove Theorem \ref{bottom_set}, we propose a simple criterion for identifying pure gaps at many totally ramified places.
\begin{lemma}\label{simple_criterion}
	Suppose that $0\leq s \leq r$ and $\gcd(m,\la_{j_i})=1$ for all $0\leq i \leq s$.
	Let $\sigma_{j_i}\in \Z$ be the inverse of $\lambda_{j_i}$ modulo $m$ for $0\leq i\leq s$.
	Then $(a_0,a_1,\cdots,a_s)\in \N^{s+1}$ is a pure gap at $Q_{j_0},Q_{j_1},\cdots, Q_{j_s}$ if and only if,
	for each $0\leq k\leq s$ the following condition holds:
	$$\sum_{i=0}^s\floor*{\frac{a_i-a_k\sigma_{j_k}\lambda_{j_i}}{m}} + \sum_{i=s+1}^{r}\floor*{\frac{-a_k\sigma_{j_k}\lambda_{j_i}}{m}}\leq -1.$$
\end{lemma}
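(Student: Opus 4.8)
The plan is to deduce this criterion directly from Theorem \ref{puregapsmanypoints} by pinning down the only values of the index $t$ that can cause a failure and then rewriting the resulting sum with Lemma \ref{sum_mod}. Throughout, write
$$S(t) = \sum_{i=0}^{s}\left\lfloor\frac{a_i+t\lambda_{j_i}}{m}\right\rfloor + \sum_{i=s+1}^{r}\left\lfloor\frac{t\lambda_{j_i}}{m}\right\rfloor$$
for $t\in\{0,1,\dots,m-1\}$, so that condition (i) of Theorem \ref{puregapsmanypoints} reads $S(t)<0$ while condition (ii) asks for $S(t)\geq 0$ together with a floor-equality constraint.

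First I would simplify that constraint. Using the elementary fact that $\left\lfloor N/m\right\rfloor = \left\lfloor (N-1)/m\right\rfloor$ holds exactly when $m\nmid N$, the equality $\left\lfloor(a_i+t\lambda_{j_i})/m\right\rfloor = \left\lfloor(a_i+t\lambda_{j_i}-1)/m\right\rfloor$ is equivalent to $a_i+t\lambda_{j_i}\not\equiv 0\pmod m$. Since (i) and (ii) are mutually exclusive through the sign of $S(t)$, the demand that exactly one of them hold for each $t$ collapses to a single implication: whenever $a_i+t\lambda_{j_i}\equiv 0\pmod m$ for some $0\leq i\leq s$, one must have $S(t)\leq -1$. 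At all other $t$ the hypothesis of (ii) is automatically met, so no constraint arises when $S(t)\geq 0$.

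Next I would locate the offending values of $t$. As $\sigma_{j_k}$ inverts $\lambda_{j_k}$ modulo $m$, the congruence $a_k+t\lambda_{j_k}\equiv 0\pmod m$ has the unique solution $t\equiv -a_k\sigma_{j_k}\pmod m$ in $\{0,\dots,m-1\}$; denote it by $t_k$. Hence the set of $t$ triggering the hypothesis above is precisely $\{t_0,\dots,t_s\}$, any coincidences among the $t_k$ being harmless, and the criterion reduces to requiring $S(t_k)\leq -1$ for every $0\leq k\leq s$. Finally, viewing $S(t)$ as an instance of the sum in Lemma \ref{sum_mod}---taken over the permutation $\{j_0,\dots,j_r\}$ of $\{0,\dots,r\}$, with numerator offsets $a_i$ for $i\leq s$ and $0$ for $i>s$---and applying that lemma with $b=t_k$ and $c=-a_k\sigma_{j_k}$ replaces $t_k$ by $-a_k\sigma_{j_k}$ throughout, yielding
$$S(t_k) = \sum_{i=0}^{s}\left\lfloor\frac{a_i-a_k\sigma_{j_k}\lambda_{j_i}}{m}\right\rfloor + \sum_{i=s+1}^{r}\left\lfloor\frac{-a_k\sigma_{j_k}\lambda_{j_i}}{m}\right\rfloor,$$
which is exactly the left-hand side in the statement. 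Thus $S(t_k)\leq -1$ is the asserted inequality, and the equivalence follows.

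I expect the only real care to lie in the bookkeeping of the reduction step: one must verify that no $t$ outside $\{t_0,\dots,t_s\}$ imposes a condition, and that Lemma \ref{sum_mod} genuinely applies to the permuted, partially truncated sum $S(t)$ once the numerator offsets are filled in with zeros on the indices $s+1,\dots,r$. Neither point is deep, but both are needed to ensure that a single inequality per index $k$ really captures the full ``exactly one of (i), (ii)'' requirement.
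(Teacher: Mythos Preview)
Your proposal is correct and follows essentially the same route as the paper: both arguments reduce Theorem~\ref{puregapsmanypoints} to the finitely many critical residues $t\equiv -a_k\sigma_{j_k}\pmod m$ via the observation that the floor equality in (ii) fails precisely when $m\mid(a_i+t\lambda_{j_i})$, and then invoke Lemma~\ref{sum_mod} to rewrite $S(t_k)$ in the stated form. The paper presents the two implications separately while you compress them into a single equivalence chain, but the logical content is identical.
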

\begin{proof}
	For each $0\leq k\leq s$, note that $\floor*{\frac{a_k+t\lambda_{j_k}}{m}}\neq \floor*{\frac{a_k+t\lambda_{j_k}-1}{m}}$ if and only if $t\equiv -a_k\sigma_{j_k} \pmod m$.
	Assume that  $(a_0,a_1,\cdots,a_s)$ is a pure gap at $Q_{j_0},Q_{j_1},\cdots, Q_{j_s}$. 
	Then, for each $t\in\{0,1,\cdots, m-1\}$, exactly one of two conditions in Theorem \ref{puregapsmanypoints} is satisfied.
	Thus for $t\in\{t~|~t\equiv -a_k\sigma_{j_k}\pmod m,~0\leq k\leq s\}\cap\{0,1,\cdots,m-1\}$, it follows from Lemma \ref{sum_mod} that
	\begin{align*}
		\sum_{i=0}^s\floor*{\frac{a_i-a_k\sigma_{j_k}\lambda_{j_i}}{m}} + \sum_{i=s+1}^{r}\floor*{\frac{-a_k\sigma_{j_k}\lambda_{j_i}}{m}} = \sum_{i=0}^s\floor*{\frac{a_i+t\lambda_{j_i}}{m}} + \sum_{i=s+1}^{r}\floor*{\frac{t\lambda_{j_i}}{m}}\leq -1.
	\end{align*}
	Now assume that 
	$$\sum_{i=0}^s\floor*{\frac{a_i-a_k\sigma_{j_k}\lambda_{j_i}}{m}} + \sum_{i=s+1}^{r}\floor*{\frac{-a_k\sigma_{j_k}\lambda_{j_i}}{m}}\leq -1 \text{ for all } 0\leq k\leq s.$$
	Then, for $t\in\{t~|~t\equiv -a_k\sigma_{j_k}\pmod m,~0\leq k\leq s\}\cap\{0,1,\cdots,m-1\}$, by Lemma \ref{sum_mod}, we obtain
	\begin{align*}
		\sum_{i=0}^s\floor*{\frac{a_i+t\lambda_{j_i}}{m}} + \sum_{i=s+1}^{r}\floor*{\frac{t\lambda_{j_i}}{m}} = \sum_{i=0}^s\floor*{\frac{a_i-a_k\sigma_{j_k}\lambda_{j_i}}{m}} + \sum_{i=s+1}^{r}\floor*{\frac{-a_k\sigma_{j_k}\lambda_{j_i}}{m}} \leq -1.
	\end{align*}
	For $t\in\{t~|~t\not\equiv -a_k\sigma_{j_k}\pmod m,~0\leq k\leq s\}\cap\{0,1,\cdots,m-1\}$, either $\sum_{i=0}^{s}\left\lfloor\frac{a_i+t\la_{j_i}}{m}\right\rfloor + \sum_{i=s+1}^{r}\left\lfloor\frac{t\la_{j_i}}{m}\right\rfloor< 0$ or 
	$\left\lfloor\frac{a_i+t\la_{j_i}}{m}\right\rfloor = \left\lfloor\frac{a_i+t\la_{j_i}-1}{m}\right\rfloor$ for all $0\leq i\leq s$.
	Thus, by Theorem \ref{puregapsmanypoints}, we have $(a_0,a_1,\cdots,a_s)$ is a pure gap at $Q_{j_0},Q_{j_1},\cdots, Q_{j_s}$.
\end{proof}

Now we are in a position to give the proof of Theorem \ref{bottom_set}.
\begin{proof}[Proof of Theorem \ref{bottom_set}]
	We set $G = \{(mk_0+a_0,mk_1+a_1,\cdots,mk_s+a_s)~|~\mathbf{a} = (a_0,a_1,\cdots,a_s)\in\overline{G}_0(Q_{j_0},Q_{j_1},\cdots, Q_{j_s}),$ $k_i\geq 0 \text{ for } 0\leq i\leq s, \text{ and } \sum_{i=0}^s k_i \leq k_{\mathbf{a}}\}$
	and $G_0 = G_0(Q_{j_0},Q_{j_1}$, $\cdots, Q_{j_s})$.
	Given $\mathbf{b} = (b_0,b_1,\cdots,b_s)\in G_0$, for each $0\leq i\leq s$, we write $b_i = mk_i+c_i$ with $k_i\geq 0$ and $1\leq c_i\leq m-1$.
	Let $\mathbf{c} = (c_0,c_1,\cdots,c_s)$.
	By Lemmas \ref{sum_mod} and \ref{simple_criterion}, for each $0\leq v\leq s$, we have
	\begin{align*}
		&~\sum_{i=0}^s\floor*{\frac{c_i-c_v\sigma_{j_v}\lambda_{j_i}}{m}} + \sum_{i=s+1}^{r}\floor*{\frac{-c_v\sigma_{j_v}\lambda_{j_i}}{m}}+\sum_{i=0}^s k_i\\
		=&~\sum_{i=0}^s\floor*{\frac{b_i-b_v\sigma_{j_v}\lambda_{j_i}}{m}} + \sum_{i=s+1}^{r}\floor*{\frac{-b_v\sigma_{j_v}\lambda_{j_i}}{m}}\leq -1.
	\end{align*}
	Thus, for each $0\leq v\leq s$, we have
	$$\sum_{i=0}^s\floor*{\frac{c_i-c_v\sigma_{j_v}\lambda_{j_i}}{m}} - \sum_{i=s+1}^{r}\floor*{\frac{-c_v\sigma_{j_v}\lambda_{j_i}}{m}}\leq -\sum_{i=0}^s k_i -1.$$
	It follows from Lemma \ref{simple_criterion} that $\mathbf{c}\in \overline{G}_0(Q_{j_0},\cdots,Q_{j_s})$ and $\sum_{i = 0}^s k_i\leq k_{\mathbf{c}}$.
	Therefore, we conclude that $\mathbf{b}\in G$. This implies that $G_0\subseteq G$.

	Now let $\mathbf{b} =  (b_0,b_1,\cdots,b_s) = (mk_0+a_0,mk_1+a_1,\cdots,mk_s+a_s)\in G$.
	By Lemmas \ref{sum_mod} and \ref{simple_criterion}, for each $0\leq v\leq s$, we have
	\begin{align*}
		&~\sum_{i=0}^s\floor*{\frac{b_i-b_v\sigma_{j_v}\lambda_{j_i}}{m}} + \sum_{i=s+1}^{r}\floor*{\frac{-b_v\sigma_{j_v}\lambda_{j_i}}{m}}\\
		=&~\sum_{i=0}^s\floor*{\frac{a_i-a_v\sigma_{j_v}\lambda_{j_i}}{m}} + \sum_{i=s+1}^{r}\floor*{\frac{-a_v\sigma_{j_v}\lambda_{j_i}}{m}}+\sum_{i=0}^s k_i\\
		\leq&~ -1-k_{\mathbf{a}}+\sum_{i=0}^s k_i\leq -1.
	\end{align*}
	Consequently, it follows from Lemma \ref{simple_criterion} that $\mathbf{b}\in G_0$, and thus $G\subseteq G_0$.
	Hence, we have $G_0 = G$.
\end{proof}

\subsection{An efficient method to find all pure gaps}
In this subsection, we first provide a novel description of the set of gaps at a totally ramified place.
Then we present an efficient method for determining the bottom set of pure gaps at $s+1$ totally ramified places from the bottom set of pure gaps at $s$ totally ramified places.
Therefore, by utilizing the bottom set of gaps at a totally ramified place and the principle of induction, 
we are able to identify all pure gaps at $s + 1$ totally ramified places.
Furthermore, we demonstrate that a pure gap can be deduced from a known pure gap by easily verifying only one inequality.

We start this subsection by presenting our description of the set of gap at a totally ramified place,
which is distinct from \cite[Corollary 3.6]{abdonWeierstrassPoints2019}.

\begin{proposition}\label{gap_set}
	Suppose that $\gcd(m,\la_s)=1$ for some $0\leq s\leq r$ and let $\la \in \mathbb{Z} $ be the inverse of $\la_s$ modulo $m$.
	Then 
	$$G(Q_s) = \left\{a\in \N ~\Big |~ \sum_{i=0}^r\left\lfloor\frac{-a\la\la_i}{m}\right\rfloor + \left\lceil\frac{a}{m}\right\rceil \leq -1\right\}.$$
	
	Furthermore, suppose that $\lambda_1=\cdots=\lambda_r$ and $\gcd(m,r\lambda_1) = 1$. Then for $1\leq s\leq r$,
	$$G(Q_s) = \left\{a\in \N ~\Big |~ \left\lfloor\frac{ra}{m}\right\rfloor - (r-1)\left\lceil\frac{a}{m}\right\rceil \leq -1\right\}.$$
	Let $ r^\prime \in \mathbb{Z}$ be the inverse of $r$ modulo $m$. Then
	$$G(Q_\infty) = \left\{a\in \N ~\Big |~ r\left\lfloor\frac{ar^\prime}{m}\right\rfloor+ \left\lfloor\frac{-arr^\prime}{m}\right\rfloor+\left\lceil\frac{a}{m}\right\rceil \leq -1\right\}.$$
\end{proposition}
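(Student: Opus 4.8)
The plan is to deduce all three descriptions from the single-place instance of Lemma~\ref{simple_criterion}, exploiting the fact recorded earlier that for one place $G(Q_s)=G_0(Q_s)$, so that membership $a\in G(Q_s)$ is equivalent to the pure-gap criterion. Concretely, I would apply Lemma~\ref{simple_criterion} with a single chosen place $Q_{j_0}=Q_s$ (so $s=0$ in the notation of that lemma and $\sigma_{j_0}=\la$). For $a\in\N$ the criterion then reads
$$\floor*{\frac{a-a\la\la_s}{m}}+\sum_{i\neq s}\floor*{\frac{-a\la\la_i}{m}}\leq -1,$$
where the sum runs over $i\in\{0,1,\dots,r\}\setminus\{s\}$. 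Everything else is a matter of rewriting this one inequality into each of the three stated forms.

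For the first formula, the only summand differing from the target $\sum_{i=0}^r\floor*{\frac{-a\la\la_i}{m}}+\ceil*{\frac{a}{m}}$ is the $i=s$ term, so I would establish the elementary identity
$$\floor*{\frac{a-a\la\la_s}{m}}=\floor*{\frac{-a\la\la_s}{m}}+\ceil*{\frac{a}{m}}.$$
This follows by writing $a\la\la_s=a+\ell m$ for some $\ell\in\Z$ (valid since $\la\la_s\equiv 1\pmod m$), whence the left-hand side equals $-\ell$, while $\floor*{\frac{-a\la\la_s}{m}}=-\ceil*{\frac{a}{m}}-\ell$ by the rule $\floor{-c}=-\ceil{c}$. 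Substituting turns the criterion's first term into the $i=s$ summand plus $\ceil*{\frac{a}{m}}$, giving exactly the claimed description of $G(Q_s)$.

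For the remaining two formulas I would specialise via $\la_1=\cdots=\la_r$ and $\la_0=-r\la_1$ and simplify the sum $\sum_{i=0}^r\floor*{-a\la\la_i/m}$ coming from the first formula. For a finite place $Q_s$ with $1\leq s\leq r$ one has $\la\la_1\equiv 1\pmod m$; writing $a\la\la_1=a+\ell m$ collapses the $r$ equal summands together with the $i=0$ summand $\floor*{ra\la\la_1/m}$ into $\floor*{ra/m}-r\ceil*{a/m}$, and adding $\ceil*{a/m}$ yields $\floor*{ra/m}-(r-1)\ceil*{a/m}$. For $Q_\infty=Q_0$ I would instead set $w=a\la\la_1\in\Z$, use the integer identity $\la\la_0=-r\la\la_1$ to write the $i=0$ term as $\floor*{rw/m}$ and each of the $r$ remaining terms as $\floor*{-w/m}=-\ceil*{w/m}$; then, since $\la\la_1\equiv -r'\pmod m$ gives $w\equiv -ar'\pmod m$, a final integer shift converts $\floor*{rw/m}-r\ceil*{w/m}+\ceil*{a/m}$ into $r\floor*{ar'/m}+\floor*{-arr'/m}+\ceil*{a/m}$, as required.

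The only real obstacle is the bookkeeping in these floor/ceiling manipulations: each reduction modulo $m$ must be promoted to an honest integer shift, and the stray multiples of $m$ must cancel exactly. Introducing the auxiliary integer $w=a\la\la_1$ (rather than reducing modulo $m$ prematurely) is what keeps the shifts transparent, so that every step reduces to the two facts $\floor{-c}=-\ceil{c}$ and $\floor*{\frac{c+nm}{m}}=\floor*{\frac{c}{m}}+n$ for $n\in\Z$.
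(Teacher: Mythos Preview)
Your proposal is correct and follows essentially the same route as the paper: both start from the single-place case of Lemma~\ref{simple_criterion}, establish the identity $\floor*{\frac{a-a\la\la_s}{m}}=\floor*{\frac{-a\la\la_s}{m}}+\ceil*{\frac{a}{m}}$ via the integer shift $\la\la_s=1+km$, and then specialise to the equal-multiplicity case. The only cosmetic difference is that the paper packages the mod-$m$ shifts by invoking Lemma~\ref{sum_mod}, whereas you carry them out by hand with the auxiliary integer $w=a\la\la_1$; the underlying computation is identical.
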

\begin{proof}
	According to Lemma \ref{simple_criterion}, we have $a$ is a gap at $Q_s$ if and only if 
	$$ \left\lfloor\frac{a-a\la\la_s}{m}\right\rfloor+\sum_{i=0, i\neq s}^r\left\lfloor\frac{-a\la\la_i}{m}\right\rfloor \leq -1.$$
	Let $ \la\la_s = 1 + km$ for some $k\in \Z$. Then $\left\lfloor\frac{a-a\la\la_s}{m}\right\rfloor  = -ka$ and $\left\lfloor\frac{-a\la\la_s}{m}\right\rfloor = -ka + \left\lfloor\frac{-a}{m}\right\rfloor = -ka -\left\lceil\frac{a}{m}\right\rceil$.
	Thus, we obtain that $\left\lfloor\frac{a-a\la\la_s}{m}\right\rfloor = \left\lfloor\frac{-a\la\la_s}{m}\right\rfloor + \left\lceil\frac{a}{m}\right\rceil$.
	It follows that
	$$ \left\lfloor\frac{a-a\la\la_s}{m}\right\rfloor+\sum_{i=0, i\neq s}^r\left\lfloor\frac{ a\la\la_i}{m}\right\rfloor = \sum_{i=0}^r\left\lfloor\frac{-a\la\la_i}{m}\right\rfloor +\left\lceil\frac{a}{m}\right\rceil.$$
	Therefore, we conclude that $a$ is a gap at $Q_s$ if and only if $\sum_{i=0}^r\left\lfloor\frac{-a\la\la_i}{m}\right\rfloor + \left\lceil\frac{a}{m}\right\rceil \leq -1$.
	
	In the case where $\lambda_1=\cdots=\lambda_r$ for $1\leq s\leq r$, by Lemma \ref{sum_mod}, we have 
	$$\sum_{i=0}^r\left\lfloor\frac{-a\la\la_i}{m}\right\rfloor + \left\lceil\frac{a}{m}\right\rceil = \sum_{i=1}^r\left\lfloor\frac{-a}{m}\right\rfloor + \left\lfloor\frac{ra}{m}\right\rfloor + \left\lceil\frac{a}{m}\right\rceil
	= \left\lfloor\frac{ra}{m}\right\rfloor - (r-1)\left\lceil\frac{a}{m}\right\rceil.$$
	Hence, we obtain that $a\in G(Q_s)$ if and only if $\left\lfloor\frac{ra}{m}\right\rfloor - (r-1)\left\lceil\frac{a}{m}\right\rceil \leq -1$.
	Let $\sigma\in\Z$ be the inverse of $-r\lambda_1$ modulo $m$, then $-r^\prime \equiv \sigma\lambda_1 \pmod{m}$.
	By Lemma \ref{sum_mod}, we obtain
	$$\sum_{i=0}^r \floor*{\frac{-a\sigma\lambda_i}{m}} +\ceil*{\frac{a}{m}} = \sum_{i=1}^{r}\floor*{\frac{-a\sigma\lambda_1}{m}}+\floor*{\frac{ar\sigma\lambda_1}{m}}+\ceil*{\frac{a}{m}}=r\floor*{\frac{ar^\prime}{m}}+\floor*{\frac{-arr^\prime}{m}}+\ceil*{\frac{a}{m}}.$$
	Therefore, we have $a\in G(Q_\infty)$ if and only if $r\floor*{\frac{ar^\prime}{m}}+\floor*{\frac{-arr^\prime}{m}}+\ceil*{\frac{a}{m}}\leq -1$.
\end{proof}

\begin{remark}\label{special_infinity}
	Suppose that $\la_1=\cdots = \la_r$ and $\gcd(m,r\la_1) = 1$. If $r \equiv -1 \pmod m$, then $r^\prime\equiv -1 \pmod m$. By Lemma \ref{sum_mod}, we obtain
	$$ r\left\lfloor\frac{ar^\prime}{m}\right\rfloor+ \left\lfloor\frac{-arr^\prime}{m}\right\rfloor+\left\lceil\frac{a}{m}\right\rceil = r\left\lfloor\frac{-a}{m}\right\rfloor + \left\lfloor\frac{ar}{m}\right\rfloor + \left\lceil\frac{a}{m}\right\rceil = \left\lfloor\frac{ra}{m}\right\rfloor - (r-1)\left\lceil\frac{a}{m}\right\rceil.$$
	Therefore, we have $G(Q_s) = G(Q_\infty)$ for $1\leq s \leq r$.
\end{remark}
Based on Proposition \ref{gap_set}, we are able to design Algorithm \ref{alg1} to acquire the bottom set of gaps at a totally ramified place.

\begin{algorithm}[H]
	\caption{Find the bottom set of gaps at a totally ramified place.}
	\label{alg1}
	\begin{algorithmic}[1]
		\Require A totally ramified place $Q_s$ of the function field $F$.
		\Ensure The bottom set $G$ of gaps at the place $Q_s$.
		\State $G\leftarrow \varnothing $;
		\State Compute $\la$ such that $1\leq \la\leq m-1$ and $\la\la_{s}\equiv 1 \pmod m$;
		\For {$a=1$ to $m-1$}
		\If {$\sum_{i=0}^r\left\lfloor\frac{-a\la\la_i}{m}\right\rfloor\leq -2$}
		\State $G\leftarrow G\cup \{a\}$;
		\EndIf
		\EndFor
		\State\textbf{return} $G$.
	\end{algorithmic}
\end{algorithm}

\begin{proposition}\label{find_bottom_gap}
	Suppose that $1\leq s \leq r$ and $\gcd(m,\la_{j_i})=1$ for all $0\leq i \leq s$.
	Let $(a_0,\cdots,a_{s-1})\in \overline{G}_0(Q_{j_0},\cdots, Q_{j_{s-1}})$ and $1\leq a\leq m-1$.
	Let $\la\in \Z$ be the inverse of $\la_{j_s}$ modulo $m$.
	Suppose that 
	$$\sum_{i=0}^{s-1}\left\lfloor\frac{a_i-a\la\la_{j_i}}{m}\right\rfloor + \left\lfloor\frac{a-a\la\la_{j_s}}{m}\right\rfloor+\sum_{i=s+1}^{r}\left\lfloor\frac{-a\la\la_{j_i}}{m}\right\rfloor\leq -1.$$
	Then $(a_0,\cdots,a_{s-1},a)\in  \overline{G}_0(Q_{j_0},\cdots, Q_{j_{s-1}},Q_{j_s})$ if and only if,
	for each $1\leq k\leq a-1$, at least one of the following two conditions holds:
	\begin{align}
		&\sum_{i=0}^{s-1}\left\lfloor\frac{a_i-k\la\la_{j_i}}{m}\right\rfloor + \left\lfloor\frac{a-k\la\la_{j_s}}{m}\right\rfloor+\sum_{i=s+1}^{r}\left\lfloor\frac{-k\la\la_{j_i}}{m}\right\rfloor \leq -1, \label{condition3}\\
		&a_i\not\equiv k\la\la_{j_i} \text{ for all } 0\leq i\leq s-1. \label{condition4}
	\end{align}
\end{proposition}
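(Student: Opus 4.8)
The plan is to test membership in $\overline{G}_0(Q_{j_0},\dots,Q_{j_s})$ through the residue-by-residue criterion of Lemma \ref{simple_criterion}, and to compare that test for the extended tuple $(a_0,\dots,a_{s-1},a)$ with the one for $(a_0,\dots,a_{s-1})$, which is already a bottom-set pure gap by hypothesis. Since $(a_0,\dots,a_{s-1})\in[1,m-1]^s$ and $1\le a\le m-1$, the tuple $(a_0,\dots,a_{s-1},a)$ automatically lies in $[1,m-1]^{s+1}$, so it belongs to $\overline{G}_0(Q_{j_0},\dots,Q_{j_s})$ precisely when it is a pure gap at these $s+1$ places. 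Writing $\sigma_{j_s}=\lambda$ and, for $t\in\{0,\dots,m-1\}$,
$$\Phi(t)=\sum_{i=0}^{s-1}\left\lfloor\frac{a_i+t\lambda_{j_i}}{m}\right\rfloor+\left\lfloor\frac{a+t\lambda_{j_s}}{m}\right\rfloor+\sum_{i=s+1}^{r}\left\lfloor\frac{t\lambda_{j_i}}{m}\right\rfloor,$$
Lemma \ref{simple_criterion} together with Lemma \ref{sum_mod} says that the extended tuple is a pure gap if and only if $\Phi(t)\le -1$ for every residue $t\equiv -a_v\sigma_{j_v}\pmod m$ with $0\le v\le s$. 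The residue with $v=s$, namely $t\equiv -a\lambda$, is exactly the standing hypothesis (rewrite it through Lemma \ref{sum_mod}), so it remains only to control the residues $t\equiv -a_v\sigma_{j_v}$ for $0\le v\le s-1$.

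To do this I would compare $\Phi(t)$ with the corresponding sum $\Psi(t)=\sum_{i=0}^{s-1}\left\lfloor(a_i+t\lambda_{j_i})/m\right\rfloor+\sum_{i=s}^{r}\left\lfloor t\lambda_{j_i}/m\right\rfloor$ for the shorter tuple. Since $(a_0,\dots,a_{s-1})\in\overline{G}_0(Q_{j_0},\dots,Q_{j_{s-1}})$, Lemma \ref{simple_criterion} gives $\Psi(t)\le -1$ for every residue $t\equiv -a_v\sigma_{j_v}$ with $0\le v\le s-1$. The two sums differ only in the $i=s$ term, so
$$\Phi(t)-\Psi(t)=\left\lfloor\frac{a+t\lambda_{j_s}}{m}\right\rfloor-\left\lfloor\frac{t\lambda_{j_s}}{m}\right\rfloor\in\{0,1\},$$
because $0<a/m<1$. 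The key computation is to determine exactly when this difference equals $1$: this happens iff $t\lambda_{j_s}\bmod m\ge m-a$, equivalently $t\lambda_{j_s}\equiv -k\pmod m$ for some $1\le k\le a$, equivalently $t\equiv -k\lambda\pmod m$ for some $1\le k\le a$. Hence $\Phi(t)=\Psi(t)$ off the residue set $\{-k\lambda\bmod m : 1\le k\le a\}$, while $\Phi(t)=\Psi(t)+1$ on it; injectivity of $k\mapsto -k\lambda\bmod m$ (valid since $\gcd(\lambda,m)=1$) makes these $a$ residues distinct.

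Combining the two comparisons finishes the equivalence. For a special residue $t\equiv -a_v\sigma_{j_v}$ with $0\le v\le s-1$ that does not lie in $\{-k\lambda\bmod m:1\le k\le a\}$ one has $\Phi(t)=\Psi(t)\le -1$ for free, so it imposes no condition; the residue $-a\lambda$ is covered by the hypothesis. Thus the extended tuple is a pure gap iff $\Phi(t)\le -1$ holds at every residue $t=-k\lambda\bmod m$, $1\le k\le a-1$, that is simultaneously special. For fixed $k$, the residue $-k\lambda$ is special exactly when $a_v\equiv k\lambda\lambda_{j_v}\pmod m$ for some $0\le v\le s-1$, i.e.\ exactly when (\ref{condition4}) fails; and for that residue, rewriting $\Phi(-k\lambda)$ via Lemma \ref{sum_mod} turns $\Phi(t)\le -1$ into (\ref{condition3}). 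Therefore the requirement ``special $\Rightarrow\Phi\le -1$'' for each $k$ is precisely ``(\ref{condition4}) or (\ref{condition3})'', which is the asserted criterion.

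The step I expect to be the main obstacle is the bookkeeping in the last paragraph: one must argue carefully that the only special residues capable of breaking the extended criterion are those sitting at the jump $\Phi=\Psi+1$, and that among these the value $k=a$ is absorbed by the hypothesis while $k=1,\dots,a-1$ produce the listed alternatives. Once the set $\{t:\Phi(t)=\Psi(t)+1\}=\{-k\lambda\bmod m:1\le k\le a\}$ is correctly identified, the remaining equivalences are mechanical translations through Lemma \ref{sum_mod}.
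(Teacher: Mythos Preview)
Your proposal is correct and follows essentially the same route as the paper's proof: both hinge on comparing the simple-criterion sum for the extended tuple with that for the shorter tuple, and identifying the residues where the extra $\lfloor(a+t\lambda_{j_s})/m\rfloor$ term jumps above $\lfloor t\lambda_{j_s}/m\rfloor$ as exactly $\{-k\lambda\bmod m:1\le k\le a\}$ (the paper records this as the equivalence (\ref{equivalent0})). The only organizational difference is that the paper splits the argument into separate $\Rightarrow$ and $\Leftarrow$ directions, whereas you package both at once via the functions $\Phi,\Psi$ and the jump set; the content is the same.
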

\begin{proof}
	Let $1\leq k\leq m$ and $\la\la_{j_s} = bm+1$ for some $b\in \Z$. Then, we have
	$\floor*{{\frac{a-k\la\la_{j_s}}{m}}} = \floor*{\frac{a-k}{m}} - kb$ and $\floor*{{\frac{-k\la\la_{j_s}}{m}}} = -kb-1$.
	This implies that $\left\lfloor\frac{a-k\la\la_{j_s}}{m}\right\rfloor = \left\lfloor\frac{-k\la\la_{j_s}}{m}\right\rfloor$ if and only if $\floor*{\frac{a-k}{m}} = -1$.
	Consequently,
	\begin{align}
		\left\lfloor\frac{a-k\la\la_{j_s}}{m}\right\rfloor = \left\lfloor\frac{-k\la\la_{j_s}}{m}\right\rfloor \text{ if and only if }  a+1\leq k\leq m. \label{equivalent0}
	\end{align}
	For $0\leq i \leq s-1$, let $\sigma_{j_i}\in\Z$ be the inverse of $\lambda_{j_i}$ modulo $m$.
	Then, the condition (\ref{condition4}) is equivalent to $a_i\sigma_{j_i}\not\equiv k\la$ for all $0\leq i\leq s-1$.
	
	Assume that $(a_0,\cdots,a_{s-1},a)\in \overline{G}_0(Q_{j_0},\cdots,Q_{j_{s-1}},Q_{j_s})$.
	For each $1\leq k\leq a-1$, if $a_l\sigma_{j_l}\equiv k\la$ for some $0\leq l\leq s-1$,
	then by Lemmas \ref{sum_mod} and \ref{simple_criterion}, we have
	\begin{align*}
		~&\sum_{i=0}^{s-1}\left\lfloor\frac{a_i-k\la\la_{j_i}}{m}\right\rfloor + \left\lfloor\frac{a-k\la\la_{j_s}}{m}\right\rfloor+\sum_{i=s+1}^{r}\left\lfloor\frac{-k\la\la_{j_i}}{m}\right\rfloor\\
		=~&\sum_{i=0}^{s-1}\left\lfloor\frac{a_i-a_l\sigma_{j_l}\la_{j_i}}{m}\right\rfloor + \left\lfloor\frac{a-a_l\sigma_{j_l}\la_{j_s}}{m}\right\rfloor+\sum_{i=s+1}^{r}\left\lfloor\frac{-a_l\sigma_{j_l}\la_{j_i}}{m}\right\rfloor \leq -1.
	\end{align*}

	Conversely, assume that, for each $1\leq k\leq a-1$, at least one of the conditions (\ref{condition3}) and (\ref{condition4}) is satisfied.
	For each $0\leq l\leq s-1$, if $a_l\sigma_{j_l}\equiv k\la$ for some $1\leq k\leq a$, then by Lemma \ref{sum_mod}, we have
	\begin{align*}
		~&\sum_{i=0}^{s-1}\left\lfloor\frac{a_i-a_l\sigma_{j_l}\la_{j_i}}{m}\right\rfloor + \left\lfloor\frac{a-a_l\sigma_{j_l}\la_{j_s}}{m}\right\rfloor+\sum_{i=s+1}^{r}\left\lfloor\frac{-a_l\sigma_{j_l}\la_{j_i}}{m}\right\rfloor\\
		=~&\sum_{i=0}^{s-1}\left\lfloor\frac{a_i-k\la\la_{j_i}}{m}\right\rfloor + \left\lfloor\frac{a-k\la\la_{j_s}}{m}\right\rfloor+\sum_{i=s+1}^{r}\left\lfloor\frac{-k\la\la_{j_i}}{m}\right\rfloor \leq -1.
	\end{align*}
	If $a_l\sigma_{j_l}\equiv k\la$ for some $a+1\leq k\leq m$, then by Lemma \ref{sum_mod} and (\ref{equivalent0}), we have 
	\begin{align*}
		~&\sum_{i=0}^{s-1}\left\lfloor\frac{a_i-a_l\sigma_{j_l}\la_{j_i}}{m}\right\rfloor + \left\lfloor\frac{a-a_l\sigma_{j_l}\la_{j_s}}{m}\right\rfloor+\sum_{i=s+1}^{r}\left\lfloor\frac{-a_l\sigma_{j_l}\la_{j_i}}{m}\right\rfloor\\
		=~&\sum_{i=0}^{s-1}\left\lfloor\frac{a_i-k\la\la_{j_i}}{m}\right\rfloor + \left\lfloor\frac{a-k\la\la_{j_s}}{m}\right\rfloor+\sum_{i=s+1}^{r}\left\lfloor\frac{-k\la\la_{j_i}}{m}\right\rfloor\\
		=~&\sum_{i=0}^{s-1}\left\lfloor\frac{a_i-k\la\la_{j_i}}{m}\right\rfloor + \left\lfloor\frac{-k\la\la_{j_s}}{m}\right\rfloor+\sum_{i=s+1}^{r}\left\lfloor\frac{-k\la\la_{j_i}}{m}\right\rfloor\\
		=~&\sum_{i=0}^{s-1}\left\lfloor\frac{a_i-a_l\sigma_{j_l}\la_{j_i}}{m}\right\rfloor + \left\lfloor\frac{-a_l\sigma_{j_l}\la_{j_s}}{m}\right\rfloor+\sum_{i=s+1}^{r}\left\lfloor\frac{-a_l\sigma_{j_l}\la_{j_i}}{m}\right\rfloor.
	\end{align*}
	Since  $(a_0,\cdots,a_{s-1})\in \overline{G}_0(Q_{j_0},\cdots, Q_{j_{s-1}})$, it follows that
	$$\sum_{i=0}^{s-1}\left\lfloor\frac{a_i-a_l\sigma_{j_l}\la_{j_i}}{m}\right\rfloor + \left\lfloor\frac{a-a_l\sigma_{j_l}\la_{j_s}}{m}\right\rfloor+\sum_{i=s+1}^{r}\left\lfloor\frac{-a_l\sigma_{j_l}\la_{j_i}}{m}\right\rfloor\leq -1.$$
	Therefore, by Lemma \ref{simple_criterion}, we have $(a_0,\cdots,a_{s-1},a)\in \overline{G}_0(Q_{j_0},\cdots,Q_{j_{s-1}},Q_{j_s})$.
\end{proof}

Based on Proposition \ref{find_bottom_gap}, we immediately derive Algorithm \ref{alg2},
which shows an efficient way to obtain the bottom set of pure gaps at $s+1$ totally ramified places from the bottom set of pure gaps at $s$ totally ramified places.
\begin{algorithm}[H]
	\caption{Find $\overline{G}_0(Q_{j_0},\cdots,Q_{j_{s-1}},Q_{j_s})$ from $\overline{G}_0(Q_{j_0},\cdots,Q_{j_{s-1}})$}
	\label{alg2}
	\begin{algorithmic}[1]
		\Require $G = \overline{G}_0(Q_{j_0},\cdots,Q_{j_{s-1}})$.
		\Ensure  $A = \overline{G}_0(Q_{j_0},\cdots,Q_{j_{s-1}},Q_{j_s})$.
		\State $A\leftarrow \varnothing $;
		\State Compute $\la$ such that $1\leq \la\leq m-1$ and $\la\la_{j_s}\equiv 1 \pmod m$;
		\For {each $\mathbf{a} = (a_0,\cdots,a_{s-1}) \in G$}
		\For{$a=1$ to $m-1$}
		\If {$\sum_{i=0}^{s-1}\left\lfloor\frac{a_i-a\la\la_{j_i}}{m}\right\rfloor + \left\lfloor\frac{a-a\la\la_{j_s}}{m}\right\rfloor+\sum_{i=s+1}^{r}\left\lfloor\frac{-a\la\la_{j_i}}{m}\right\rfloor \leq -1$}
		\State $A\leftarrow A\cup \{(a_0,a_1,\cdots,a_{s-1},a)\}$;
		\ElsIf {$a_i\equiv a\la\la_{j_i} \pmod m$ for some $0\leq i\leq s-1$}
		\State \textbf{break};
		\EndIf
		\EndFor
		\EndFor
		\State \textbf{return} $A$.
	\end{algorithmic}
\end{algorithm}

	Based on Algorithm \ref{alg2}, we can determine the bottom set of pure gaps at $s+1$ totally ramified places from the bottom set of pure gaps at $s$ totally ramified places.
	Furthermore, we can determine the bottom set of pure gaps at $s$ totally ramified places from the bottom set of pure gaps at $s-1$ totally ramified places.
	By applying the principle of induction, we can determine the bottom set of pure gaps at $s+1$ totally ramified places from the bottom set of gaps at a single totally ramified place.
	Then, by Theorem \ref{bottom_set}, we can determine the set of pure gaps at $s+1$ totally ramified places.
	We describe this process as the following Algorithm.

\begin{algorithm}[H]
	\caption{Find the set ${G}_0(Q_{j_0},Q_{j_1},\cdots,Q_{j_s})$.}
	\label{alg3}
	\begin{algorithmic}[1]
		\Require $s+1$ totally ramified places $Q_{j_0},Q_{j_1},\cdots,Q_{j_s}$.
		\Ensure The set of pure gaps $A = {G}_0(Q_{j_0},Q_{j_1},\cdots,Q_{j_s})$.
		\State Input $Q_{j_0}$ into Algorithm \ref{alg1} and output the set $G = \overline{G}(Q_{j_0})$;
		\For {$i=1$ to $s$}
		\State Input $G$ to Algorithm \ref{alg2} and output the set $A$;
		\State $G\leftarrow A$;
		\EndFor
		\State Compute $\sigma_{j_i}$ such that $1\leq \sigma_{j_i}\leq m-1$ and $\sigma_{j_i}\la_{j_i}\equiv 1 \pmod m$ for $0\leq i\leq s$;
		\State $A\leftarrow \varnothing$;
		\For {each $\mathbf{a} = (a_0,a_1,\cdots,a_s)\in G$}
		\State $k \leftarrow \max_{0\leq t\leq s}\left\{\sum_{i=0}^{s}\left\lfloor\frac{a_i-a_t\sigma_{j_t}\la_{j_i}}{m}\right\rfloor+\sum_{i=s+1}^{r}\left\lfloor\frac{-a_t\sigma_{j_t}\la_{j_i}}{m}\right\rfloor\right\}$;
		\State $A\leftarrow A\cup \{(mk_0+a_0,\cdots,mk_s+a_s)~|~k_i\geq 0 \text{ for } 0\leq i\leq s,~\sum_{i=1}^s k_i\leq -1-k\}$;
		\EndFor
		\State \textbf{return} $A$.
	\end{algorithmic}
\end{algorithm}

	Algorithm \ref{alg3} presents a straightforward and efficient method to find all pure gaps at many totally ramified places.
	Moreover, this method can identify consecutive pure gaps, which are highly valuable for constructing AG codes with good parameters in Proposition \ref{construction}.
	At the end of this section, we show that given knowledge of a pure gap, we can derive another pure gap by easily verifying only one inequality.

\begin{theorem}\label{find_another_pure_gap}
	Suppose that $1\leq s \leq r$ and $\gcd(m,\la_{j_i})=1$ for all $0\leq i \leq s$, and let $\la \in \mathbb{Z} $ be the inverse of $\la_{j_s}$ modulo $m$.
	Suppose that $(a_0,\cdots,a_{s-1})\in G_0(Q_{j_0}, \cdots, Q_{j_{s-1}})$.
	Then $(a_0,\cdots,a_{s-1},1) \in G_0(Q_{j_0},\cdots, Q_{j_{s-1}},Q_{j_s})$ if and only if
	$$\sum_{i=0}^{s-1}\left\lfloor\frac{a_i-\la\la_{j_i}}{m}\right\rfloor + \left\lfloor\frac{1-\la\la_{j_s}}{m}\right\rfloor+\sum_{i=s+1}^{r}\left\lfloor\frac{-\la\la_{j_i}}{m}\right\rfloor \leq -1.$$
	
	Furthermore, suppose that $(a_0,\cdots,a_{s-1},a) \in G_0(Q_{j_0},\cdots,Q_{j_{s-1}},Q_{j_s})$ for some $1\leq a\leq m-2$.
	Then $(a_0,\cdots,a_{s-1},a+1)\in G_0(Q_{j_0},\cdots, Q_{j_{s-1}}, Q_{j_s})$ if and only if
	$$\sum_{i=0}^{s-1}\left\lfloor\frac{a_i-(a+1)\la\la_{j_i}}{m}\right\rfloor + \left\lfloor\frac{(a+1)-(a+1)\la\la_{j_s}}{m}\right\rfloor + \sum_{i=s+1}^{r}\left\lfloor\frac{-(a+1)\la\la_{j_i}}{m}\right\rfloor \leq -1.$$
\end{theorem}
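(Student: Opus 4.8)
The plan is to reduce everything to the single-inequality criterion of Lemma~\ref{simple_criterion}. Writing $\mathbf{b}=(a_0,\dots,a_{s-1},1)$ and letting $\sigma_{j_i}$ denote the inverse of $\la_{j_i}$ modulo $m$, Lemma~\ref{simple_criterion} says that $\mathbf{b}\in G_0(Q_{j_0},\dots,Q_{j_s})$ precisely when, for every $0\le k\le s$,
$$S_k:=\sum_{i=0}^{s-1}\floor*{\frac{a_i-b_k\sigma_{j_k}\la_{j_i}}{m}}+\floor*{\frac{1-b_k\sigma_{j_k}\la_{j_s}}{m}}+\sum_{i=s+1}^{r}\floor*{\frac{-b_k\sigma_{j_k}\la_{j_i}}{m}}\le -1,$$
where $b_k=a_k$ for $k<s$ and $b_s=1$. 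First I would observe that the index $k=s$ reproduces exactly the stated inequality: since $b_s\sigma_{j_s}=\sigma_{j_s}\equiv\la\pmod m$, Lemma~\ref{sum_mod} lets me replace $\sigma_{j_s}$ by $\la$ inside every floor, so $S_s\le -1$ is precisely the displayed condition. Hence the theorem amounts to showing that the remaining conditions $S_0\le -1,\dots,S_{s-1}\le -1$ are automatic once $(a_0,\dots,a_{s-1})\in G_0(Q_{j_0},\dots,Q_{j_{s-1}})$ and $S_s\le -1$ hold.

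To dispatch an index $k$ with $0\le k\le s-1$, I would compare $S_k$ with
$$T_k:=\sum_{i=0}^{s-1}\floor*{\frac{a_i-a_k\sigma_{j_k}\la_{j_i}}{m}}+\sum_{i=s}^{r}\floor*{\frac{-a_k\sigma_{j_k}\la_{j_i}}{m}},$$
which is exactly the $k$-th sum attached by Lemma~\ref{simple_criterion} to the shorter tuple $(a_0,\dots,a_{s-1})$, hence $\le -1$ by hypothesis. Splitting off the $i=s$ term gives $S_k-T_k=\floor*{\frac{1-a_k\sigma_{j_k}\la_{j_s}}{m}}-\floor*{\frac{-a_k\sigma_{j_k}\la_{j_s}}{m}}\in\{0,1\}$, and this difference equals $1$ exactly when $a_k\sigma_{j_k}\la_{j_s}\equiv 1$, i.e. when $a_k\sigma_{j_k}\equiv\la\pmod m$. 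Thus if $a_k\sigma_{j_k}\not\equiv\la\pmod m$ then $S_k=T_k\le -1$ for free. If instead $a_k\sigma_{j_k}\equiv\la\pmod m$, then $-b_k\sigma_{j_k}\equiv-\la\pmod m$, so applying Lemma~\ref{sum_mod} to the numerator constants $(a_0,\dots,a_{s-1},1,0,\dots,0)$ shows $S_k=S_s$, and $S_k\le -1$ is equivalent to the stated inequality. Either way each $S_k$ with $k<s$ is $\le -1$, yielding the ``if'' direction; the ``only if'' direction is immediate since $S_s\le -1$ is one of the conditions forced by $\mathbf{b}\in G_0$.

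For the second part I would run the identical argument with $1$ replaced by $a+1$ and with the shorter tuple replaced by the known pure gap $(a_0,\dots,a_{s-1},a)$. The index $k=s$ reproduces the second displayed inequality (again via $b_s\sigma_{j_s}=(a+1)\sigma_{j_s}\equiv(a+1)\la$), and for $k<s$ the comparison sum becomes the $k$-th sum of $(a_0,\dots,a_{s-1},a)$, which is $\le -1$ because that tuple is a pure gap. The relevant floor difference $\floor*{\frac{(a+1)-a_k\sigma_{j_k}\la_{j_s}}{m}}-\floor*{\frac{a-a_k\sigma_{j_k}\la_{j_s}}{m}}$ is again $0$ or $1$, equalling $1$ exactly when $a_k\sigma_{j_k}\equiv(a+1)\la\pmod m$, in which case Lemma~\ref{sum_mod} forces the $k$-th and $s$-th sums to coincide. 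The hypothesis $1\le a\le m-2$ only guarantees that the new coordinate $a+1$ lies in $\{1,\dots,m-1\}$ (so it is not $\equiv 0\pmod m$) and plays no role in the floor manipulations.

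The floor arithmetic is routine; the one step that needs care, and where I expect the real content to sit, is recognizing that the ``dangerous'' indices $k$ --- those for which the floor jumps, so that inheritance from the shorter tuple fails --- are \emph{exactly} the indices whose sum $S_k$ collapses onto $S_s$ via Lemma~\ref{sum_mod}. This dichotomy is what makes the single inequality $S_s\le -1$ simultaneously necessary and sufficient, and it is the crux that must be verified cleanly rather than the underlying computations.
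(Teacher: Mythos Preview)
Your proposal is correct and follows essentially the same approach as the paper's proof. Both arguments reduce to Lemma~\ref{simple_criterion}, identify the $k=s$ condition with the displayed inequality, and for $k<s$ observe that the relevant floor difference is $0$ or $1$: when it is $0$ the inequality is inherited from the hypothesis on the previous tuple, and when it is $1$ the congruence $a_k\sigma_{j_k}\equiv (a{+}1)\la\pmod m$ together with Lemma~\ref{sum_mod} collapses $S_k$ onto $S_s$.
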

\begin{proof}
	For $0\leq i \leq s-1$, let $\sigma_{j_i}\in\Z$ be the inverse of $\lambda_{j_i}$ modulo $m$.
	By Lemma \ref{simple_criterion}, $(a_0,\cdots,a_{s-1},a)\in G_0(Q_{j_0},\cdots,Q_{j_{s-1}},Q_{j_s})$ if and only if, for all $0\leq k\leq s-1$, the following two conditions are both satisfied:
	\begin{align}
	&\sum_{i=0}^{s-1}\left\lfloor\frac{a_i-a\la\la_{j_i}}{m}\right\rfloor + \left\lfloor\frac{a-a\la\la_{j_s}}{m}\right\rfloor+\sum_{i=s+1}^{r}\left\lfloor\frac{-a\la\la_{j_i}}{m}\right\rfloor \leq -1, \label{condition1}\\
	&\sum_{i=0}^{s-1}\left\lfloor\frac{a_i-a_k\sigma_{j_k}\la_{j_i}}{m}\right\rfloor + \left\lfloor\frac{a-a_k\sigma_{j_k}\la_{j_s}}{m}\right\rfloor+\sum_{i=s+1}^{r}\left\lfloor\frac{-a_k\sigma_{j_k}\la_{j_i}}{m}\right\rfloor \leq -1. \label{condition2}
	\end{align}
	
	For each $0\leq k\leq s-1$, note that $\floor*{\frac{a-a_k\sigma_{j_k}\lambda_{j_s}}{m}}\neq \floor*{\frac{a-1-a_k\sigma_{j_k}\lambda_{j_s}}{m}}$ if and only if $a\lambda\equiv a_k\sigma_{j_k}\pmod m$.
	Suppose that $a\lambda\equiv a_k\sigma_{j_k}\pmod m$. According to Lemma \ref{sum_mod}, we have
	\begin{align*}
		&~\sum_{i=0}^{s-1}\left\lfloor\frac{a_i-a_k\sigma_{j_k}\la_{j_i}}{m}\right\rfloor + \left\lfloor\frac{a-a_k\sigma_{j_k}\la_{j_s}}{m}\right\rfloor+\sum_{i=s+1}^{r}\left\lfloor\frac{-a_k\sigma_{j_k}\la_{j_i}}{m}\right\rfloor\\
		=&~\sum_{i=0}^{s-1}\left\lfloor\frac{a_i-a\la\la_{j_i}}{m}\right\rfloor + \left\lfloor\frac{a-a\la\la_{j_s}}{m}\right\rfloor+\sum_{i=s+1}^{r}\left\lfloor\frac{-a\la\la_{j_i}}{m}\right\rfloor.
	\end{align*}
	Then the condition (\ref{condition2}) is equivalent to the condition (\ref{condition1}).
	Suppose that $a\lambda\not\equiv a_k\sigma_{j_k}\pmod m$. Then we have
	\begin{align*}
	~&\sum_{i=0}^{s-1}\left\lfloor\frac{a_i-a_k\sigma_{j_k}\la_{j_i}}{m}\right\rfloor + \left\lfloor\frac{a-a_k\sigma_{j_k}\la_{j_s}}{m}\right\rfloor+\sum_{i=s+1}^{r}\left\lfloor\frac{-a_k\sigma_{j_k}\la_{j_i}}{m}\right\rfloor\\
	=~& \sum_{i=0}^{s-1}\left\lfloor\frac{a_i-a_k\sigma_{j_k}\la_{j_i}}{m}\right\rfloor + \left\lfloor\frac{a-1-a_k\sigma_{j_k}\la_{j_s}}{m}\right\rfloor+\sum_{i=s+1}^{r}\left\lfloor\frac{-a_k\sigma_{j_k}\la_{j_i}}{m}\right\rfloor.
	\end{align*}
	If $a=1$, since$(a_0,\cdots,a_{s-1})\in G_0(Q_{j_0}$, $\cdots$, $Q_{j_{s-1}})$, by Lemma \ref{simple_criterion}, the condition (\ref{condition2}) is satisfied.
	If $a\geq2$, since $(a_0,\cdots,a_{s-1},a+1)\in G_0(Q_{j_0},\cdots, Q_{j_{s-1}}, Q_{j_s})$, by Lemma \ref{simple_criterion}, the condition (\ref{condition2}) is satisfied.
	
	Therefore, we obtain that $(a_0,\cdots,a_{s-1},a)\in G_0(Q_{j_0},\cdots,Q_{j_{s-1}},Q_{j_s})$ if and only if the condition (\ref{condition1}) is satisfied.
	This concludes the proof.
\end{proof}

\section{Pure gaps from special Kummer extensions}\label{sec4}
Throughout this section, we consider a special Kummer extension $F = K(x,y)/K(x)$ defined by the affine equation
\begin{equation}\label{equation2}
	y^m = f(x)^\la = \prod_{i=1}^r(x-\alpha_i)^\la,
\end{equation}
where $\lambda\in \mathbb{Z}\setminus\{0\}$, $\gcd(r\la,m) = 1$, and $\alpha_1,\cdots,\alpha_r\in K$ are pairwise distinct elements. The function field $F/K$ has genus $g = (r-1)(m-1)/2$.

For each $1\leq i\leq r$, let $Q_i$ and $Q_\infty$ denote the places in $\P(F)$ corresponding to the zero of $x-\a_i$ and the pole of $x$, respectively.
It is obvious that $ Q_1,\cdots, Q_r$ and $Q_{\infty}$ all are rational and totally ramified.
An explicit description of $G_0(Q_1,\cdots,Q_s)$ has been provided by Castellanos et al. in \cite[Theorem 5.12]{castellanosSetPure2024}.
However, an explicit description of $G_0(Q_\infty, Q_1,\cdots,Q_s)$ has not been provided in the literature thus far.

In this section, applying Theorem \ref{bottom_set} and Theorem \ref{find_another_pure_gap} in the previous section, we aim to establish explicit descriptions of $G_0(Q_1,\cdots,Q_s)$ and $G_0(Q_\infty,Q_1,\cdots,Q_s)$ for $2\leq s\leq r$.
Our description of $G_0(Q_1,\cdots,Q_s)$ aligns with the one presented in \cite[Theorem 5.12]{castellanosSetPure2024}.
In the case $mv = r+1$ for some $v\geq 1$, our description of $G_0(Q_\infty,Q_1,\cdots,Q_s)$ bears a resemblance to that of $G_0(Q_1,\cdots,Q_{s})$.
Our descriptions can be formed as the union of sets of consecutive pure gaps, enabling us to easily obtain families of consecutive pure gaps.
Specifically, our main results are as follows.
\begin{proposition}\label{precise_puregap1}
	Suppose that $2\leq s\leq r$.
	If $2\leq s\leq r-\floor*{\frac{r}{m}}-1$, for each $0\leq k\leq r-\floor*{\frac{r}{m}}-1-s$, let $t_{i,k} = m-\ceil*{\frac{m(k+i)}{r}}$ for $1\leq i\leq s$.
	Then
	\begin{align*}
		G_0(Q_1,\cdots,Q_s) = \bigcup\limits_{k=0}^{r-\floor*{\frac{r}{m}}-1-s} \Big\{(mk_1+j_1,\cdots,mk_s+j_s)~\Big|~ k_i\geq 0 \text{ for } 1\leq i\leq s,\\
		\sum_{i=1}^s k_i = k,~ 1\leq j_{\sigma(i)}\leq t_{i,k} \text{ for } 1\leq i\leq s,~\sigma\in S_s\Big\},
	\end{align*}
	where  $S_s$ is the set of permutations of the set $\{1,\cdots,s\}$.

	If $r-\floor*{\frac{r}{m}}\leq s \leq r$, then $G_0(Q_1,\cdots,Q_s) = \varnothing$.
\end{proposition}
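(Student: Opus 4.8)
The plan is to apply Theorem \ref{bottom_set}, so that everything reduces to computing the bottom set $\overline G_0(Q_1,\dots,Q_s)$ together with the threshold governing $\sum_i k_i$; the engine throughout is Lemma \ref{simple_criterion}, specialized to $\la_1=\cdots=\la_r=\la$ and $\la_0=-r\la$. I would apply Lemma \ref{simple_criterion} to the $s$ places $Q_1,\dots,Q_s$, so that the unchosen indices are $0$ and $s+1,\dots,r$, carrying multiplicities $-r\la$ and $\la$. Writing a candidate as $b_i=mk_i+a_i$ with $k_i\ge 0$ and $1\le a_i\le m-1$ (forced, since no coordinate of a gap is $\equiv 0 \pmod m$), the first step is to clear the quotient part by Lemma \ref{sum_mod}, exactly as in the proof of Proposition \ref{gap_set}: the defining quantity of Lemma \ref{simple_criterion} at place $v$, evaluated at this $\mathbf b$, equals $\Sigma_v(\mathbf a)+\sum_{i=1}^s k_i$, where
$$\Sigma_v(\mathbf a):=\sum_{i=1}^s\floor*{\frac{a_i-a_v}{m}}+(r-s)\floor*{\frac{-a_v}{m}}+\floor*{\frac{ra_v}{m}}=\floor*{\frac{ra_v}{m}}-\#\{i:a_i<a_v\}-(r-s),$$
the middle equality using $1\le a_v\le m-1$ and the cancellation of the quotient contributions coming from $\sum_{i=0}^r\la_i=0$. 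Hence $\mathbf b\in G_0(Q_1,\dots,Q_s)$ if and only if $\mathbf a\in[1,m-1]^s$ satisfies $\floor*{\frac{ra_v}{m}}-\#\{i:a_i<a_v\}\le r-s-1-\sum_i k_i$ for all $v$, which is Theorem \ref{bottom_set} made explicit here.

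Next I would convert these inequalities into the staircase shape. Set $k=\sum_ik_i$. The key elementary fact is that, for integers $v\ge 1$, $1\le q\le s$ and $0\le k$ with $k+q\le r-1$,
$$\floor*{\frac{rv}{m}}\le r-k-q-1 \iff v\le m-\ceil*{\frac{m(k+q)}{r}}=t_{q,k},$$
which follows from $\floor*{\frac{rv}{m}}\le r-k-q-1\iff rv<(r-k-q)m\iff v<m-\tfrac{(k+q)m}{r}$, using that $\gcd(m,r)=1$ forces $\tfrac{(k+q)m}{r}\notin\Z$ when $1\le k+q\le r-1$. I would then sort the coordinates decreasingly, $a_{(1)}\ge\cdots\ge a_{(s)}$. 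For a value $v$ occupying the consecutive ranks $p,\dots,p+\mu-1$ one has $\#\{i:a_i<v\}=s-(p+\mu-1)$, so the inequality for $v$ reads $\floor*{\frac{rv}{m}}\le r-k-(p+\mu-1)-1$, that is $a_{(p+\mu-1)}=v\le t_{p+\mu-1,k}$ by the equivalence. Since the $t_{q,k}$ are non-increasing in $q$, these block-bottom constraints are jointly equivalent to $a_{(q)}\le t_{q,k}$ for all $1\le q\le s$. Thus a point of $G_0$ with $\sum_ik_i=k$ is exactly one whose bottom coordinates obey the decreasing staircase $a_{(q)}\le t_{q,k}$.

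Because $t_{1,k}\ge\cdots\ge t_{s,k}$, a greedy (Hall-type) matching argument then shows that a tuple $(j_1,\dots,j_s)$ obeys this staircase if and only if some permutation $\sigma\in S_s$ satisfies $1\le j_{\sigma(i)}\le t_{i,k}$ for all $i$; this is precisely the set described in the statement. To pin the range of $k$ and settle emptiness, I would use the lower bound $\max_v\big(\floor*{\frac{ra_v}{m}}-\#\{i:a_i<a_v\}\big)\ge\floor*{\frac{r}{m}}$, obtained by taking $v$ to be a minimal coordinate (the counting term vanishes and $a_v\ge1$). Membership therefore forces $k\le r-s-1-\floor*{\frac{r}{m}}$, with equality attained at the all-ones tuple, so the union runs over $0\le k\le r-\floor*{\frac{r}{m}}-1-s$, consistent with $t_{s,k}\ge1$. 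When $r-\floor*{\frac{r}{m}}\le s\le r$ one has $r-s-1<\floor*{\frac{r}{m}}$ already at $k=0$, so the bottom set is empty and Theorem \ref{bottom_set} forces $G_0(Q_1,\dots,Q_s)=\varnothing$.

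I expect the main obstacle to be the bookkeeping rather than any single hard idea: the infinite place contributes the lone term $\floor*{\frac{ra_v}{m}}$ that distinguishes this computation from the purely finite case in \cite{castellanosSetPure2024}, and the ties in the decreasing sort must be handled carefully (via the block-bottom ranks) so that the per-coordinate criterion matches the staircase with no off-by-one error; the repeated floor--ceiling conversions all hinge on $\gcd(m,r)=1$.
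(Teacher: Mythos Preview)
Your argument is correct and complete. The core computation---specializing Lemma \ref{simple_criterion} with $\la_1=\cdots=\la_r=\la$, $\la_0=-r\la$, then using Lemma \ref{sum_mod} to peel off the $k_i$'s and reduce to $\Sigma_v(\mathbf a)=\floor*{\tfrac{ra_v}{m}}-\#\{i:a_i<a_v\}-(r-s)$---matches the paper's, as does the floor/ceiling equivalence (which is the paper's Lemma \ref{equivalent}) and the bound on $k$ via the minimal coordinate.

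Where you differ from the paper is in the structure. The paper treats the two inclusions separately: for $\bigcup_k G_k\subseteq G_0(Q_1,\dots,Q_s)$ it first proves an auxiliary family of consecutive pure gaps (Proposition \ref{precisefamily1}) by induction via Theorem \ref{find_another_pure_gap}, and then invokes Theorem \ref{bottom_set}; only the reverse inclusion is done directly from Lemma \ref{simple_criterion}, by sorting and bounding each $j_{\sigma(l)}$. You instead establish both inclusions at once, showing directly that the $s$ inequalities of Lemma \ref{simple_criterion} are \emph{equivalent} to the staircase $a_{(q)}\le t_{q,k}$, and then that the staircase is equivalent to the permutation description by a pigeonhole/Hall argument. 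Your route is shorter and avoids Proposition \ref{precisefamily1} and Theorem \ref{find_another_pure_gap} entirely; the paper's route has the side benefit of isolating an explicit family of consecutive pure gaps (Proposition \ref{precisefamily1}), which it later reuses for the code constructions in Section \ref{sec5}.
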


\begin{proposition}\label{precise_puregap2}
	Suppose that $mv = r+1$ for some $v \geq 1$ and $1\leq s\leq r$.
	If $1\leq s\leq r-v-1$, for each $0\leq k\leq r-v-1-s$, let $t_{i,k} = m-\ceil*{\frac{m(k+i+1)}{r}}$ for $0\leq i\leq s$.
	Then
	\begin{align*}
		&G_0(Q_\infty,Q_1,\cdots,Q_s) = \bigcup\limits_{k=0}^{r-v-1-s} \Big\{(mk_0+j_0,mk_1+j_1,\cdots,mk_s+j_s)~\Big|\\
		k_i\geq 0 &\text{ for } 0\leq i\leq s,~\sum_{i=0}^s k_i = k,~ 1\leq j_{\sigma(i)}\leq t_{i,k} \text{ for } 0\leq i\leq s,~\sigma\in S_{s+1}\Big\},
	\end{align*}
	where  $S_{s+1}$ is the set of permutations of the set $\{0,1,\cdots,s\}$.

	If $r-v\leq s \leq r$, then $G_0(Q_\infty,Q_1,\cdots,Q_s) = \varnothing$.
\end{proposition}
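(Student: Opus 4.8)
The plan is to show that, under the hypothesis $mv=r+1$, the infinite place $Q_\infty$ is interchangeable with an ordinary finite place at the level of the pure-gap criterion, thereby reducing $G_0(Q_\infty,Q_1,\dots,Q_s)$ to the set $G_0(Q_1,\dots,Q_{s+1})$ already described in Proposition \ref{precise_puregap1}. First I would record the arithmetic input. The multiplicity attached to $Q_\infty$ is $\lambda_0=-r\lambda$, and $mv=r+1$ means $r\equiv -1\pmod m$, whence $\lambda_0=-r\lambda\equiv\lambda\pmod m$; in particular the inverse $\sigma$ of $\lambda$ modulo $m$ is simultaneously the inverse of $\lambda_0$, and $\lfloor r/m\rfloor=v-1$. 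This is exactly the single-place coincidence of Remark \ref{special_infinity}, and the aim is to promote it to a statement about $(s+1)$-tuples.

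The heart of the argument is to apply Lemma \ref{simple_criterion} to $G_0(Q_\infty,Q_1,\dots,Q_s)$ and simplify each of the $s+1$ inequalities. Fixing the coordinate index $k$ and writing $a_k\sigma\lambda=a_k+e_km$, I would use $\sigma\lambda_0=-r\sigma\lambda\equiv 1\pmod m$ together with $(r+1)a_k=mva_k$ to obtain $a_k\sigma\lambda_0=a_k+(f_k-re_k)m$ with $f_k=-va_k$. Substituting these into the floor sum, the terms proportional to $e_k$ cancel identically, because the chosen and unchosen place counts conspire through $\sum_{i}\lambda_i=0$; this cancellation is precisely the content of Lemma \ref{sum_mod}. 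Every inequality then collapses to the single symmetric form
\[
(r-s)\left\lfloor\frac{-a_k}{m}\right\rfloor+\sum_{j=0}^{s}\left\lfloor\frac{a_j-a_k}{m}\right\rfloor+v a_k\le -1,\qquad 0\le k\le s.
\]
The decisive point is that this condition depends only on the multiset $\{a_0,\dots,a_s\}$ and on $r,s,v$, not on which of the $s+1$ places happens to be $Q_\infty$. Performing the identical computation for $G_0(Q_1,\dots,Q_{s+1})$, where all chosen places are finite, yields literally the same system of inequalities, so that, provided $s+1\le r$ (which holds automatically whenever the set below is nonempty),
\[
G_0(Q_\infty,Q_1,\dots,Q_s)=G_0(Q_1,\dots,Q_{s+1})
\]
as subsets of $\N^{s+1}$ after matching coordinates in order.

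Finally I would transport Proposition \ref{precise_puregap1} through this identity with $s$ replaced by $s+1$. Using $\lfloor r/m\rfloor=v-1$, its validity range $2\le s+1\le r-\lfloor r/m\rfloor-1$ becomes $1\le s\le r-v-1$, its upper summation index $r-\lfloor r/m\rfloor-1-(s+1)$ becomes $r-v-1-s$, and its thresholds $m-\lceil m(k+i)/r\rceil$ for $1\le i\le s+1$ become, after the shift $i\mapsto i+1$, exactly the $t_{i,k}=m-\lceil m(k+i+1)/r\rceil$ for $0\le i\le s$ in the statement, with $S_{s+1}$ the symmetric group on $s+1$ letters in both formulations. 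This delivers the displayed union for $1\le s\le r-v-1$ and emptiness for $r-v\le s\le r-1$. The single leftover case $s=r$, where $Q_1,\dots,Q_{s+1}$ would require $r+1$ finite places, I would dispose of by projection: Proposition \ref{subpuregap} forces any element of $G_0(Q_\infty,Q_1,\dots,Q_r)$ to restrict to an element of $G_0(Q_1,\dots,Q_r)$, which is empty by Proposition \ref{precise_puregap1} since $r$ lies in its vanishing range $r-v+1\le s\le r$.

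The step I expect to be the main obstacle is the floor-function bookkeeping in the middle paragraph: one must check cleanly that the $e_k$-corrections cancel and that the residual term $v a_k$ is produced correctly and identically in both configurations, so that the two pure-gap systems are genuinely equal rather than merely similar. Everything after that coincidence is established is routine reindexing against Proposition \ref{precise_puregap1}.
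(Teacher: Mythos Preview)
Your proposal is correct and takes a genuinely different route from the paper. The paper proves Proposition~\ref{precise_puregap2} by essentially rerunning the argument of Proposition~\ref{precise_puregap1}: it first establishes the auxiliary Proposition~\ref{precisefamily2} (the infinite-place analogue of Proposition~\ref{precisefamily1}) by induction via Theorem~\ref{find_another_pure_gap}, then invokes Theorem~\ref{bottom_set} for one inclusion, and finally analyzes an arbitrary pure gap via Lemma~\ref{simple_criterion} for the reverse inclusion. Your approach instead exploits the observation that when $mv=r+1$ one has $\lambda_0\equiv\lambda\pmod m$, so that in the floor-sum criterion of Lemma~\ref{simple_criterion} the place $Q_\infty$ is indistinguishable from a finite place; the resulting identity $G_0(Q_\infty,Q_1,\dots,Q_s)=G_0(Q_1,\dots,Q_{s+1})$ then reduces everything to a direct citation of Proposition~\ref{precise_puregap1}. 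Your cancellation computation is right (and is a straightforward instance of Lemma~\ref{sum_mod}), the reindexing $\lfloor r/m\rfloor=v-1$ and $i\mapsto i+1$ is accurate, and the separate treatment of $s=r$ via Proposition~\ref{subpuregap} is clean. What you gain is brevity and a conceptual explanation of why the infinite-place description mirrors the finite one; what the paper's approach gains is independence from Proposition~\ref{precise_puregap1} and a proof template that it reuses verbatim for Proposition~\ref{precise_puregap3}, where no such reduction is available.
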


\begin{proposition}\label{precise_puregap3}
	Suppose that $1\leq s\leq r$.
	If $1\leq s\leq r-\floor*{\frac{r}{m}}-1$,
	for each $1\leq j_0\leq m-1-\floor*{\frac{m}{r}}$ and $1\leq k\leq r-1$, let 
	$$t_{j_0,i,k} = \left\{\begin{array}{cc}
		m-\ceil*{\frac{(k+i)m}{r}}+j_0,~&\text{if}~ k+i \neq r,\\
		m-\ceil*{\frac{(k+i)m}{r}}+j_0-1,~&\text{if}~ k+i = r,\\
	\end{array}\right.$$ for all $1\leq i\leq s$, and $d_{j_0} = r-\floor*{\frac{(1-j_0)r}{m}}-s-1$.
	Let $c = m-1-\floor*{\frac{m}{r}}$. Then
	\begin{align*}
		&G_0(Q_\infty,Q_1,\cdots,Q_s) = \bigcup_{j_0 = 1}^{c}\bigcup_{k=\ceil*{\frac{rj_0}{m}}}^{d_{j_0}}\Big\{(mk_0-rj_0,mk_1+j_1,\cdots,mk_s+j_s)~\Big|~k_0\geq\ceil*{\frac{rj_0}{m}}, \\
		&~k_i\geq 0 \text{ for } 1\leq i\leq s,~ \sum_{i=0}^s k_i = k,~ 1\leq j_{\sigma(i)}\leq t_{j_0,i,k} \text{ for } 1\leq i\leq s,~\sigma \in S_s\Big\},
	\end{align*}
	where  $S_s$ is the set of permutations of the set $\{1,\cdots,s\}$.

	If $r-\floor*{\frac{r}{m}}\leq s\leq r$, then $G_0(Q_\infty,Q_1,\cdots,Q_s) = \varnothing$.
\end{proposition}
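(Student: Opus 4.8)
The plan is to run everything through the single-inequality criterion of Lemma \ref{simple_criterion}, regarding $Q_\infty$ as the ramified place carrying multiplicity $\lambda_0=-r\lambda$ and $Q_1,\dots,Q_s$ as the places carrying multiplicity $\lambda$. I would first dispose of the vanishing claim: when $r-\floor*{\frac{r}{m}}\le s\le r$, Proposition \ref{precise_puregap1} gives $G_0(Q_1,\dots,Q_s)=\varnothing$, and since Proposition \ref{subpuregap} sends any element of $G_0(Q_\infty,Q_1,\dots,Q_s)$ to its projection in $G_0(Q_1,\dots,Q_s)$, the larger set must be empty as well.

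For the range $1\le s\le r-\floor*{\frac{r}{m}}-1$ I would set up coordinates adapted to the two kinds of places. By Proposition \ref{subpuregap} and Lemma \ref{special_gap}, the infinite coordinate of any pure gap lies in $G(Q_\infty)$, so I write it as $a_0=mk_0-rj_0$ with $1\le j_0\le c=m-1-\floor*{\frac{m}{r}}$ and $k_0\ge\ceil*{\frac{rj_0}{m}}$, and the finite coordinates as $a_i=mk_i+j_i$ with $1\le j_i\le m-1$. Feeding this into Lemma \ref{simple_criterion} and simplifying the $s+1$ inequalities with Lemma \ref{sum_mod}, a direct floor computation shows that whether the tuple is a pure gap depends only on the residues $j_0,j_1,\dots,j_s$ and on the total level $k=\sum_{i=0}^s k_i$, not on how $k$ is distributed. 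Because $\gcd(r,m)=1$, the inequality attached to a finite place $Q_l$ inverts into a bound of the form $j_l\le t_{j_0,\iota,k}$, where $\iota$ records the rank of $j_l$ among the finite residues taken in decreasing order; the discontinuity at $k+\iota=r$, where the relevant ceiling lands on an integer, is exactly what produces the two-case definition of $t_{j_0,i,k}$. Running over all finite places assembles these into the staircase $1\le j_{\sigma(i)}\le t_{j_0,i,k}$, and the symmetry of $Q_1,\dots,Q_s$ supplies the permutation $\sigma\in S_s$.

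It remains to pin down the range of the level $k$. The lower end $k\ge\ceil*{\frac{rj_0}{m}}$ comes from requiring $a_0\ge 1$, while the upper end $d_{j_0}$ is governed by the inequality attached to $Q_\infty$ together with the demand that the staircase stay nonempty. Here I would invoke Theorem \ref{bottom_set}: computing the quantity $k_{\mathbf{a}}$ for the bottom tuples with infinite residue $j_0$ and adding the shift $\ceil*{\frac{rj_0}{m}}$ yields the closed form for $d_{j_0}$, while Theorem \ref{find_another_pure_gap} lets me fill out each staircase row by verifying a single inequality, thereby establishing both inclusions simultaneously.

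The main obstacle I expect is the floor/ceiling bookkeeping that turns the criterion inequalities into the precise constants $t_{j_0,i,k}$ and $d_{j_0}$: correctly capturing the jump at $k+i=r$, keeping the interaction between the infinite-place inequality and the finite staircase consistent, and controlling the off-by-one behaviour as $j_0$ varies. Handling coincidences among the finite residues and the existential permutation $\sigma$ without double counting is the other delicate point; Lemma \ref{floor_minus} should furnish the monotonicity of $i\mapsto t_{j_0,i,k}$ that makes the sorting argument rigorous.
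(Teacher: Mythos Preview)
Your plan is essentially the paper's own argument: both inclusions are run through Lemma~\ref{simple_criterion}, with Theorem~\ref{bottom_set} providing the passage from bottom tuples to arbitrary levels and Theorem~\ref{find_another_pure_gap} (packaged in the paper as Proposition~\ref{precisefamily3}) filling out the staircase. Your shortcut for the emptiness case---projecting via Proposition~\ref{subpuregap} onto $G_0(Q_1,\dots,Q_s)$ and invoking Proposition~\ref{precise_puregap1}---is legitimate and a bit cleaner than the paper's route, which extracts $s\le r-\floor*{r/m}-1$ directly from the $l=s$ inequality together with Lemma~\ref{floor_minus}.

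One point to correct: the upper bound $d_{j_0}$ does \emph{not} come from the inequality attached to $Q_\infty$. In the paper's computation, after sorting the finite residues so that $j_{\sigma(1)}\ge\cdots\ge j_{\sigma(s)}$, it is the inequality of Lemma~\ref{simple_criterion} for the finite index $l=s$ that yields
\[
k+\floor*{\tfrac{(j_{\sigma(s)}-j_0)r}{m}}-r+s\le -1,
\]
and inserting $j_{\sigma(s)}\ge 1$ gives $k\le d_{j_0}$. The $Q_\infty$ inequality only controls how many of the $j_i$ lie below $j_0$ and gives a weaker bound. When you compute $k_{\mathbf a}$ from Theorem~\ref{bottom_set} you will see that the maximum is attained at the finite place carrying the smallest residue, not at $Q_\infty$; adjust your narrative accordingly. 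Likewise, Lemma~\ref{floor_minus} is not needed for the monotonicity of $i\mapsto t_{j_0,i,k}$ (that is immediate from the ceiling), but the paper does use it in its direct emptiness derivation---which you are bypassing anyway.
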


We will divide into three subsections to prove each of the above three propositions. Initially, we present a lemma.
\begin{lemma}\label{equivalent}
	Let $k$ be an integer such that $1\leq k \leq 2r-1$ and let $j$ be an integer such that $1-m\leq j\leq m-1$.
	Then
	$$\floor*{\frac{rj}{m}}\leq r-k-1 \text{ if and only if } j\leq \left\{
		\begin{array}{cc}
			m-\ceil*{\frac{km}{r}},~&\text{if}~ k \neq r,\\
			m-\ceil*{\frac{km}{r}}-1,~&\text{if}~ k = r.
		\end{array}\right.$$
	Furthermore suppose that $mv = r +1$ for some $v\geq 1$, $1\leq k\leq r-1$, and $1\leq j\leq m-1$.
	Then
	$$j\leq m-\ceil*{\frac{km}{r}} \text{ if and only if } vj \leq r -k.$$
\end{lemma}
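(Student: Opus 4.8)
The plan is to prove both equivalences by reducing each floor/ceiling inequality to an elementary inequality between rational numbers and then converting back to an integer bound on $j$, using only facts (i) and (ii) on $\floor{\,\cdot\,}$ and $\ceil{\,\cdot\,}$ together with the section-wide hypothesis $\gcd(r\la,m)=1$, which in particular yields $\gcd(m,r)=1$.

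For the first equivalence, I would start from the observation that, since $r-k-1\in\Z$, the inequality $\floor*{\frac{rj}{m}}\leq r-k-1$ holds if and only if $\frac{rj}{m}<r-k$, that is, $j<m-\frac{mk}{r}$. It then remains to translate this strict inequality on the integer $j$ into the stated non-strict bound, and the crux is deciding when $\frac{mk}{r}$ is an integer. Because $\gcd(m,r)=1$, we have $r\mid mk$ if and only if $r\mid k$, and in the range $1\leq k\leq 2r-1$ this occurs exactly when $k=r$. If $k\neq r$, then $\frac{mk}{r}\notin\Z$, so $m-\frac{mk}{r}\notin\Z$ and the strict inequality $j<m-\frac{mk}{r}$ is equivalent to $j\leq\ceil*{m-\frac{mk}{r}}-1$; applying fact (i) gives $\ceil*{m-\frac{mk}{r}}=m-\floor*{\frac{mk}{r}}$, and fact (ii) then gives $m-\floor*{\frac{mk}{r}}-1=m-\ceil*{\frac{mk}{r}}$, which is precisely the first branch. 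If $k=r$, then $\frac{mk}{r}=m\in\Z$ and the bound reads $j<0$, i.e. $j\leq-1=m-\ceil*{\frac{mk}{r}}-1$, matching the second branch.

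For the second equivalence, I would exploit the relation $mv=r+1$ by writing $r=mv-1$, so that $\frac{rj}{m}=vj-\frac{j}{m}$. Since $1\leq j\leq m-1$ forces $0<\frac{j}{m}<1$ while $vj\in\Z$, this gives the clean evaluation $\floor*{\frac{rj}{m}}=vj-1$. Invoking the first part with $k\leq r-1<r$ (so only the first branch applies), I get the chain
$$j\leq m-\ceil*{\frac{km}{r}}\iff\floor*{\frac{rj}{m}}\leq r-k-1\iff vj-1\leq r-k-1\iff vj\leq r-k,$$
which is the claimed statement.

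The main obstacle is purely bookkeeping: one must handle the boundary case $k=r$ separately in the first part and be careful to apply the ceiling/floor conversions (facts (i) and (ii)) only under the correct integrality hypothesis. Everything reduces to the single arithmetic fact that, under $\gcd(m,r)=1$, the quantity $\frac{mk}{r}$ is an integer precisely when $k=r$ in the given range; once this is pinned down the rest is a routine manipulation of inequalities.
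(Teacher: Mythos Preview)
Your argument is correct and follows essentially the same route as the paper: reduce $\floor{rj/m}\leq r-k-1$ to the strict rational inequality $j<m-mk/r$, then convert back to an integer bound via the floor/ceiling identities, treating $k=r$ separately; the second part is identical in both. The only difference is organizational: you pivot on the non-integrality of $mk/r$ when $k\neq r$ (from $\gcd(m,r)=1$ and $1\leq k\leq 2r-1$), which lets you pass directly from $j<m-mk/r$ to $j\leq m-\ceil{mk/r}$ without a case split on $j$, whereas the paper pivots on the non-integrality of $rj/m$ for $j\neq 0$ and therefore has to handle $j=0$ separately. Your version is a bit more economical, but the underlying idea is the same.
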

\begin{proof}
	First, assume that $k\neq r$.
	Suppose that $j\neq 0$. Since $1-m\leq j\leq m-1$ and $\gcd(m,r) = 1$, it follows that $m\nmid rj$.
	If $\floor*{\frac{rj}{m}}\leq r-k-1$, then $\frac{rj}{m}< r-k$.
	This implies that $j<m-\frac{km}{r}$, and thus $j\leq \floor*{m-\frac{km}{r}} = m-\ceil*{\frac{km}{r}}$.
	If $j\leq m-\ceil*{\frac{km}{r}}$, then $j\leq m-\ceil*{\frac{km}{r}} = m + \floor*{\frac{-km}{r}} = \floor*{\frac{(r-k)m}{r}}$.
	This implies that $rj\leq \floor*{\frac{(r-k)m}{r}}r\leq (r-k)m$, and thus $\frac{rj}{m}\leq r-k$.
	Since $m\nmid rj$, it follows that $\frac{rj}{m}< r-k$, and hence $\floor*{\frac{rj}{m}}\leq r-k-1$.
	
	Now, suppose that $j = 0$. We need to prove that $0\leq r-k-1$ if and only if $0\leq m-\ceil*{\frac{km}{r}}$.
	If $k\leq r-1$, then $m-\ceil*{\frac{km}{r}}\geq m-\ceil*{\frac{(r-1)m}{r}}\geq 0$.
	If $0\leq m-\ceil*{\frac{km}{r}} = \floor*{\frac{m(r-k)}{r}}$, given $k\neq r$ and $(m,r) = 1$, it follows that $r\nmid m(r-k)$.
	Thus $\frac{m(r-k)}{r}>0$ and then $r-k>0$. This yields that $r-k-1\geq 0$.
	
	Next, assume that $k = r$. It is obvious that $\floor*{\frac{rj}{m}} \leq -1 = r-k-1$ if and only if $j\leq -1 = m-\ceil*{\frac{km}{r}}-1$. 
	
	Furthermore suppose that $mv = r +1$ for some $v\geq 1$, $1\leq k\leq r-1$, and $1\leq j\leq m-1$.
	Then $\floor*{\frac{rj}{m}} = vj + \floor*{\frac{-j}{m}} = vj-1$ and $k\neq r$.
	It follows that $j\leq m-\ceil*{\frac{km}{r}}$ if and only if $vj \leq r -k$.
\end{proof}

For each $1\leq i\leq r$, since $g = \# G(Q_i)$ , by Lemma \ref{special_gap}, we obtain that 
\begin{equation}\label{number_of_gap}
\frac{(r-1)(m-1)}{2} =  \sum_{j=1}^{m-1-\floor*{\frac{m}{r}}}\left(r-1-\floor*{\frac{rj}{m}}\right).
\end{equation} 
Using this equation and Lemma \ref{equivalent}, we present a novel characterization of $G(Q_i)$ for $1\leq i\leq r$ and $G(Q_\infty)$ as follows.
\begin{proposition}\label{novel_gap}
	For each $1\leq i\leq m$,
	$$G(Q_i) = \left\{ mk+j ~\Big| ~0\leq k\leq r-2-\floor*{\frac{r}{m}},~1\leq j\leq m-\ceil*{\frac{(k+1)m}{r}} \right\},$$
	Furthermore, in the case where $mv = r +1$ for some $v\geq 1$,
	$$G(Q_\infty) = \left\{ mk+j ~\Big| ~0\leq k\leq r-v-1,~1\leq j\leq m-\ceil*{\frac{(k+1)m}{r}} \right\}.$$
	In the case where $m = ur +1$ for some $u\geq 1$,
	$$G(Q_\infty) = \left\{ mk-rj ~| ~1\leq k\leq r-1,~1\leq j\leq ku\right\}.$$
\end{proposition}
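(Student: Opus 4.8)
The plan is to derive all three formulas from the explicit gap descriptions in Lemma \ref{special_gap} by \emph{transposing} the region of admissible index pairs: in the first formula one trades the outer constraint on $j$ for a constraint on $k$, and the inner floor inequality for a ceiling bound on $j$. Each $a\in\N$ with $a\not\equiv 0\pmod m$ has a unique expression $a=mk+j$ with $k\ge 0$ and $1\le j\le m-1$, and the elements $mk-rj$ of $G(Q_\infty)$ are likewise indexed by pairs $(k,j)$; hence it suffices to prove that the two regions of index pairs describe the same integers. I shall repeatedly use $\gcd(r,m)=1$ with $r\ge 2$, which gives $\ceil*{\frac{m}{r}}=\floor*{\frac{m}{r}}+1$, together with the monotonicity $\floor*{\frac{rj}{m}}\ge\floor*{\frac{r}{m}}$ for $j\ge 1$.

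For the first formula, fix $(k,j)$ with $k\ge 0$ and $1\le j\le m-1$. By Lemma \ref{special_gap}, $mk+j\in G(Q_i)$ if and only if $j\le m-1-\floor*{\frac{m}{r}}$ and $\floor*{\frac{rj}{m}}\le r-2-k$. If $k\ge r-1$ both this and the target condition fail (the former since $\floor*{\frac{rj}{m}}\ge 0>r-2-k$, the latter since $r-2-\floor*{\frac{r}{m}}\le r-2<k$), so I may assume $0\le k\le r-2$. Writing $\floor*{\frac{rj}{m}}\le r-(k+1)-1$ with $1\le k+1\le r-1$ and applying Lemma \ref{equivalent} with $k+1$ in place of its $k$ converts the inner inequality into $j\le m-\ceil*{\frac{(k+1)m}{r}}$. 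The outer ranges then match automatically: the bound $k\le r-2-\floor*{\frac{rj}{m}}$ forces $k\le r-2-\floor*{\frac{r}{m}}$ by monotonicity, while $j\le m-\ceil*{\frac{(k+1)m}{r}}\le m-\ceil*{\frac{m}{r}}=m-1-\floor*{\frac{m}{r}}$. Thus the two conditions are equivalent; the cardinality may be cross-checked against $g$ through the identity (\ref{number_of_gap}).

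For the second formula I avoid a new computation. When $mv=r+1$ we have $r\equiv -1\pmod m$, so Remark \ref{special_infinity} gives $G(Q_\infty)=G(Q_i)$ for $1\le i\le r$; moreover $\frac{r}{m}=v-\frac{1}{m}$ with $0<\frac{1}{m}<1$ yields $\floor*{\frac{r}{m}}=v-1$, so the upper bound $r-2-\floor*{\frac{r}{m}}$ of the first formula equals $r-v-1$ while the bound on $j$ is unchanged, which is exactly the stated description. For the third formula, $m=ur+1$ gives $\floor*{\frac{m}{r}}=u$, so Lemma \ref{special_gap} reads $G(Q_\infty)=\{\,mk-rj\mid 1\le j\le u(r-1),\ \ceil*{\frac{rj}{m}}\le k\le r-1\,\}$. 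Here I verify the region equality directly: for $1\le k\le r-1$, the chain $\ceil*{\frac{rj}{m}}\le k\iff rj\le k(ur+1)\iff r(j-ku)\le k\le r-1<r$ shows $\ceil*{\frac{rj}{m}}\le k$ is equivalent to $j\le ku$, and the remaining inequalities ($k\ge\ceil*{\frac{rj}{m}}\ge 1$ and $j\le ku\le u(r-1)$) are immediate. Hence the index region coincides with $\{\,(k,j)\mid 1\le k\le r-1,\ 1\le j\le ku\,\}$, giving the claim.

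The step I expect to be most delicate is the boundary bookkeeping in the first formula: one must confirm that the excluded value $k+1=r$ of Lemma \ref{equivalent} never arises after the reduction to $0\le k\le r-2$, and that each of the two one-sided range constraints transfers correctly across the transpose. This is precisely where $\gcd(r,m)=1$ (through $\ceil*{\frac{m}{r}}=\floor*{\frac{m}{r}}+1$) and the monotonicity of $j\mapsto\floor*{\frac{rj}{m}}$ are used.
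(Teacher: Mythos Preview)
Your proof is correct but proceeds by a genuinely different route from the paper's. The paper establishes the first and third formulas through its own Proposition~\ref{gap_set}: for $G(Q_i)$ it reduces membership of $mk+j$ to the inequality $\floor*{\frac{rj}{m}}\le r-(k+1)-1$, translates this via Lemma~\ref{equivalent} to obtain one inclusion $G_1\subseteq G(Q_i)$, and then closes by counting $\#G_1=g$ through identity~(\ref{number_of_gap}); the case $m=ur+1$ is handled analogously, computing directly with the inverse $r'\equiv -u\pmod m$ in Proposition~\ref{gap_set} and again matching cardinalities. You instead take the known description in Lemma~\ref{special_gap} as the starting point and prove a direct equivalence of index regions by transposing the pair of constraints (trading the floor bound on $k$ for a ceiling bound on $j$ and vice versa) with Lemma~\ref{equivalent}; your cardinality remark is then only a sanity check, not a logical step. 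Your approach is more elementary in that it bypasses Proposition~\ref{gap_set} altogether, whereas the paper's approach showcases that proposition as a working tool and keeps the argument self-contained within the machinery developed in Section~\ref{sec3}. For the second formula the two arguments coincide, both appealing to Remark~\ref{special_infinity} together with the observation $\floor*{\frac{r}{m}}=v-1$.
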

\begin{proof}
	Let $G_1 = \left\{ mk+j ~|~ ~0\leq k\leq r-2-\floor*{\frac{r}{m}},~1\leq j\leq m-\ceil*{\frac{(k+1)m}{r}} \right\}$.
	It is sufficient to show that $G(Q_1) = G_1$.
	For $0\leq k\leq r-2-\floor*{\frac{r}{m}}$ and $1\leq j\leq m-1$,
	\begin{align*}
		\floor*{\frac{(mk+j)r}{m}}-(r-1)\ceil*{\frac{mk+j}{m}}& = ~kr+\floor*{\frac{rj}{m}}-(r-1)(k+1) = \floor*{\frac{rj}{m}}-r+k+1
	\end{align*}
	By Proposition \ref{gap_set}, $mk+j\in G(Q_1)$ if and only if $\floor*{\frac{rj}{m}} \leq r-(k+1)-1$.
	It follows from Lemma \ref{equivalent} that $\floor*{\frac{rj}{m}} \leq r-(k+1)-1$ if and only if  $j\leq m-\ceil*{\frac{(k+1)m}{r}}$.
	Then $G_1\subseteq G(Q_1)$ by Proposition \ref{gap_set}. On the other hand,
	by (\ref{number_of_gap}), we have
	\begin{align*}
		\#G_1 &= \sum_{k=0}^{r-2-\floor*{\frac{r}{m}}}\left(m-\ceil*{\frac{(k+1)m}{r}}\right) \\
		&= \sum_{k=1}^{r-1-\floor*{\frac{r}{m}}}\left(m-1-\floor*{\frac{km}{r}}\right) = \frac{(m-1)(r-1)}{2} = g.
	\end{align*}
	Thus we conclude that $G_1 = G(Q_1)$. 
	
	Furthermore, in the case where $mv = r +1$ for some $v\geq 1$,
	by Remark \ref{special_infinity}, we have $G(Q_\infty) = G(Q_1)$.
	In the case where $m = ur+1$ for some $u\geq 1$, let $G_\infty = \left\{ mk-rj ~|~ 1\leq k\leq r-1,1\leq j\leq ku \right\}$.
	In this case, $-u$ is the inverse of $r$ modulo $m$. For $1\leq k\leq r-1$ and $1\leq j\leq ku$, by Lemma \ref{sum_mod}, we have
	\begin{align*}
		&~ r\floor*{\frac{-(mk-rj)u}{m}} + \floor*{\frac{(mk-rj)ur}{m}}  +\ceil*{\frac{mk-rj}{m}}\\
		=&~ -kur + r\floor*{\frac{-j}{m}} + kur + \floor*{\frac{rj}{m}} + k + \ceil*{\frac{-rj}{m}} =  k - r \leq -1.
	\end{align*}
	Then  by Proposition \ref{gap_set}, we have $G_\infty\subseteq G(Q_\infty)$. On the other hand, we have
	$$\#G_\infty = \sum_{k=1}^{r-1} ku = \frac{(r-1)ur}{2} = \frac{(r-1)(m-1)}{2} = g.$$
	Thus we conclude that $G_\infty = G(Q_\infty)$.
\end{proof}

\subsection{\texorpdfstring{Pure gaps at $Q_1,\cdots,Q_s$}{Pure gaps at Q1, ..., Qs}}
Using the description of $G(Q_i)$ for $1\leq i \leq r$ presented in Proposition \ref{novel_gap}, we obtain a family of consecutive pure gaps as follows.
\begin{proposition}\label{precisefamily1}
		Suppose that $2\leq s\leq r-\floor*{\frac{r}{m}}-1$ and $0\leq k\leq r-\floor*{\frac{r}{m}}-1-s$. 
		Let $S_s$ be the set of permutations of the set $\{1,\cdots,s\}$.
		Let $t_i = m -\ceil*{\frac{(k+i)m}{r}}$ for $1\leq i\leq s$.
		Then for all $\sigma\in S_s$,
		$$\{(mk+j_1,j_2,\cdots,j_s)~|~ 1\leq j_{i}\leq t_i \text{ for } 1\leq i\leq s\}\subseteq G_0(Q_{\sigma(1)},Q_{\sigma(2)},\cdots, Q_{\sigma(s)}).$$
\end{proposition}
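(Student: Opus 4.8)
The plan is to verify directly, through Lemma~\ref{simple_criterion}, that every tuple in the asserted box is a pure gap. First I would remove the permutation $\sigma$ by symmetry: since $\la_i=\la$ for all $1\le i\le r$, the criterion of Lemma~\ref{simple_criterion} for the places $Q_{\sigma(1)},\dots,Q_{\sigma(s)}$ depends only on the common inverse $\sigma_0:=\la^{-1}\bmod m$, on the multiplicities $\la_{\sigma(i)}=\la$ of the $s$ chosen places, and on the $r-s$ remaining finite places (each carrying $\la$) together with the infinite place (carrying $\la_0=-r\la$). None of these data change when a different $s$-subset of $\{1,\dots,r\}$ is chosen, so $G_0(Q_{\sigma(1)},\dots,Q_{\sigma(s)})$, viewed as a subset of $\N^s$, is independent of $\sigma\in S_s$. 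It therefore suffices to prove that $(mk+j_1,j_2,\dots,j_s)\in G_0(Q_1,\dots,Q_s)$.

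Next I would set $a_1=mk+j_1$ and $a_i=j_i$ for $2\le i\le s$, i.e. $a_i=mk_i+j_i$ with $k_1=k$ and $k_2=\dots=k_s=0$, and simplify the inequality of Lemma~\ref{simple_criterion} attached to the $v$-th place, for each $1\le v\le s$. Writing $\sigma_0\la=1+mt$, the terms carrying $t$ cancel (the $s$ chosen places contribute $-s\,ta_v$, the infinite place contributes $+\,r\,ta_v$, and the remaining $r-s$ finite places contribute $-(r-s)\,ta_v$, summing to $0$); this is exactly the mechanism behind Lemma~\ref{sum_mod}, and the inequality becomes
$$\sum_{i=1}^{s}\floor*{\frac{a_i-a_v}{m}}+\floor*{\frac{a_v r}{m}}-(r-s)\ceil*{\frac{a_v}{m}}\le -1.$$
Substituting $a_i=mk_i+j_i$, the $k_v$-contributions cancel for the same reason; moreover $\ceil*{\frac{a_v}{m}}=k_v+1$, $\floor*{\frac{a_v r}{m}}=k_v r+\floor*{\frac{j_v r}{m}}$, and $\floor*{\frac{a_i-a_v}{m}}=k_i-k_v+\floor*{\frac{j_i-j_v}{m}}$ with $\floor*{\frac{j_i-j_v}{m}}\in\{0,-1\}$, equal to $-1$ exactly when $j_i<j_v$. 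Hence the $v$-th condition collapses to the single inequality
$$k+\floor*{\frac{j_v r}{m}}-\#\{\,i:j_i<j_v\,\}-(r-s)\le -1.$$

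To establish this reduced inequality I would feed the box constraints through Lemma~\ref{equivalent}. Since $j_i\le t_i=m-\ceil*{\frac{(k+i)m}{r}}$ and $1\le k+i\le r-1$ (the upper bound because $k+i\le k+s\le r-\floor*{\frac{r}{m}}-1$), Lemma~\ref{equivalent} gives $\floor*{\frac{j_i r}{m}}\le r-(k+i)-1$ for every $1\le i\le s$. Put $p=\floor*{\frac{j_v r}{m}}$. As $j\mapsto\floor*{\frac{jr}{m}}$ is non-decreasing, every index $i$ with $j_i\ge j_v$ satisfies $p\le\floor*{\frac{j_i r}{m}}\le r-k-i-1$, forcing $i\le r-k-1-p$; thus $A:=\{\,i:j_i\ge j_v\,\}\subseteq\{1,\dots,r-k-1-p\}$ and $\#\{\,i:j_i<j_v\,\}=s-|A|\ge s-(r-k-1-p)$. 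Plugging this lower bound in bounds the left-hand side of the reduced inequality above by $k+p-\bigl(s-(r-k-1-p)\bigr)-(r-s)=-1$, which is precisely the required estimate.

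The routine parts are the two cancellations in the simplification and the translation through Lemma~\ref{equivalent}; the crux of the argument, and the step I expect to need the most care, is the combinatorial bound on $|A|$: one must notice that the monotonicity of $j\mapsto\floor*{\frac{jr}{m}}$ turns the individually loose constraints $\floor*{\frac{j_i r}{m}}\le r-k-i-1$ into a count $\#\{\,i:j_i<j_v\,\}\ge s-(r-k-1-p)$ that is sharp enough to close the inequality with no slack. A final bookkeeping check is that the boxes are genuinely nonempty, i.e. each $t_i\ge 1$ under the hypotheses $2\le s\le r-\floor*{\frac{r}{m}}-1$ and $0\le k\le r-\floor*{\frac{r}{m}}-1-s$, which again follows from Lemma~\ref{equivalent} applied with $j=1$.
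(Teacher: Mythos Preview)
Your proof is correct, but it follows a genuinely different route from the paper's. The paper proceeds by induction on the number of places via Theorem~\ref{find_another_pure_gap}: starting from $mk+j_1\in G(Q_{\sigma(1)})$ (Proposition~\ref{novel_gap}), at each step $l$ only the \emph{single} new inequality attached to the $l$-th coordinate needs to be verified, and there the crude estimate $\sum_{i=1}^{l}\floor*{\frac{j_i-j_l}{m}}\le 0$ already suffices to reach $\floor*{\frac{rj_l}{m}}-r+l+k\le -1$ using only the constraint $j_l\le t_l$.

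You instead attack all $s$ inequalities of Lemma~\ref{simple_criterion} at once. Because for a generic index $v$ the crude bound $\sum_i\floor*{\frac{j_i-j_v}{m}}\le 0$ is not strong enough, you are forced into the sharper counting argument: from $j_i\le t_i$ and monotonicity of $j\mapsto\floor*{\frac{jr}{m}}$ you deduce $\{i:j_i\ge j_v\}\subseteq\{1,\dots,r-k-1-p\}$ with $p=\floor*{\frac{j_v r}{m}}$, which closes the inequality with no slack. This is elegant and self-contained (it bypasses Theorem~\ref{find_another_pure_gap} entirely), but the paper's inductive route explains \emph{why} one can get away with the loose estimate: the burden of the other $s-1$ inequalities has already been absorbed into the inductive hypothesis. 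Your symmetry reduction for $\sigma$ is valid since the criterion of Lemma~\ref{simple_criterion} depends only on the multiset of multiplicities, which is unchanged under permuting the $s$ finite places.
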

\begin{proof}
	For each $1\leq i\leq s$, by Lemma \ref{equivalent}, we have
	\begin{equation}\label{equivalent1}
		j_{i}\leq m-\ceil*{\frac{(k+i)m}{r}} \text{ if and only if } \floor*{\frac{rj_{i}}{m}}\leq r-(k+i)-1.
	\end{equation} 
	We set $G = \{(mk+j_1,j_2,\cdots,j_s)~|~ 1\leq j_{i}\leq t_{i} \text{ for } 1\leq i\leq s\}$.
	For any $(mk+j_1,j_2,\cdots,j_s)\in G$ and $\sigma\in S$, by Proposition \ref{novel_gap}, 
	it is obvious that $mk+j_1 \in G(Q_{\sigma(1)})$.
	We will show that $(mk+j_{1},j_2,\cdots,j_{l})\in G_0(Q_{\sigma(1)},Q_{\sigma(2)}\cdots,Q_{\sigma(l)})$ for $2\leq l\leq s$ by induction.
	Let $\lambda^\prime\in \Z$ be the inverse of $\lambda$ modulo $m$.
	For each $1\leq j_{2}\leq t_2$, by Lemma \ref{sum_mod}, we have
	\begin{align*}
		&~\left\lfloor\frac{mk+j_{1}-j_{2}\la\la^\prime}{m}\right\rfloor+\left\lfloor\frac{j_{2}-j_{2}\la\la^\prime}{m}\right\rfloor +\sum_{i=3}^{r}\left\lfloor\frac{-j_{2}\la\la^\prime}{m}\right\rfloor + \left\lfloor\frac{j_{2}r\la\la^\prime}{m}\right\rfloor\\
		=&~k+\sum_{i=1}^2\left\lfloor\frac{j_{i}-j_{2}}{m}\right\rfloor +\sum_{i=3}^{r}\left\lfloor\frac{-j_{2}}{m}\right\rfloor + \left\lfloor\frac{rj_{2}}{m}\right\rfloor \leq \left\lfloor\frac{rj_{2}}{m}\right\rfloor -r+2+k.
	\end{align*}
	It follows from (\ref{equivalent1}) that $\floor*{\frac{rj_{2}}{m}}\leq r-(k+2)-1$, and then $\left\lfloor\frac{rj_{2}}{m}\right\rfloor -r+2+k\leq -1$.
	Therefore,  by Theorem \ref{find_another_pure_gap}, we have $(mk+j_{1},j_{2})\in G_0(Q_{\sigma(1)},Q_{\sigma(2)})$.
	
	Assume that $(mk+j_1,j_2,\cdots,j_{l-1})\in G_0(Q_{\sigma(1)},Q_{\sigma(2)},\cdots,Q_{\sigma(l-1)})$ for some $3\leq l\leq s$.
	For each $1\leq j_{l}\leq t_l$, by Lemma \ref{sum_mod}, we have
	\begin{align*}
		&~\left\lfloor\frac{mk+j_{1}-j_{l}\la\la^\prime}{m}\right\rfloor+\sum_{i=2}^{l}\left\lfloor\frac{j_{i}-j_{l}\la\la^\prime}{m}\right\rfloor + \sum_{i=l+1}^{r}\left\lfloor\frac{-j_{l}\la\la^\prime}{m}\right\rfloor + \left\lfloor\frac{j_{l}r\la\la^\prime}{m}\right\rfloor\\
	  =&~ k+\sum_{i=1}^{l}\left\lfloor\frac{j_{i}-j_{l}}{m}\right\rfloor +\sum_{i=l+1}^{r}\left\lfloor\frac{-j_{l}}{m}\right\rfloor + \left\lfloor\frac{rj_{l}}{m}\right\rfloor \leq \left\lfloor\frac{rj_{l}}{m}\right\rfloor-r+l+k.
	\end{align*}
	It follows from (\ref{equivalent1}) that $\floor*{\frac{rj_{l}}{m}}\leq r-(k+l)-1$, and then $\left\lfloor\frac{rj_{l}}{m}\right\rfloor -r+l+k \leq -1$.
	Therefore, by Theorem \ref{find_another_pure_gap}, we have $(mk+j_1,j_2,\cdots,j_{l})\in G_0(Q_{\sigma(1)},Q_{\sigma(2)},\cdots,Q_{\sigma(l)})$.
	By induction, we obtain that $(mk+j_1,j_2,\cdots,j_{s})\in G_0(Q_{\sigma(1)},Q_{\sigma(2)},\cdots,Q_{\sigma(s)})$.
	Thus we have $G\subseteq G_0(Q_{\sigma(1)},Q_{\sigma(2)},\cdots,Q_{\sigma(s)})$ for all $\sigma\in S_s$.	
\end{proof}

	In a recent study by Castellanos et al.\cite[Theorem 5.12]{castellanosSetPure2024}, the authors determine a simple explicit description of set of pure gaps at several totally ramified places distinct to the infinite place $Q_\infty$.
	Here, we present an alternative proof by applying our previous results.
	\begin{proof}[Proof of Proposition \ref{precise_puregap1}]
	If $2\leq s\leq r-\floor*{\frac{r}{m}}-1$, for each $0\leq k\leq r-\floor*{\frac{r}{m}}-1-s$, we set $G_k = \{(mk_1+j_1,\cdots,mk_s+j_s)~|~  k_i\geq 0 \text{ for } 1\leq i\leq s,~\sum_{i=1}^s k_i=k,~1\leq j_{\sigma(i)}\leq t_{i,k} \text{ for } 1\leq i\leq s,~\sigma\in S_s\}$
	and  $\hat{G}_k = \{(mk+j_1,j_2,\cdots,j_s)~|~1\leq j_{i}\leq t_{i,k} \text{ for } 1\leq i\leq s\}$.
	According to Proposition \ref{precisefamily1}, we have $\hat{G}_k \subseteq G_0(Q_{\sigma(1)},\cdots,Q_{\sigma(s)})$ for all $\sigma\in S_s$.
	It follows from Theorem \ref{bottom_set} that $G_k\subseteq G_0(Q_1,\cdots,Q_s)$.
	
	For each $1\leq i\leq s$ and $0\leq k\leq r-\floor*{\frac{r}{m}}-1-s$, by Lemma \ref{equivalent}, we have
	\begin{equation}\label{equivalent2}
		j_{\sigma(i)}\leq m-\ceil*{\frac{(k+i)m}{r}} \text{ if and only if } \floor*{\frac{rj_{\sigma(i)}}{m}}\leq r-(k+i)-1.
	\end{equation} 
	Next we will show that  $$G_0(Q_1,\cdots,Q_s)\subseteq \bigcup\limits_{k=0}^{r-\floor*{\frac{r}{m}}-1-s} G_k.$$
	
	For any $(a_1,\cdots,a_s)\in G_0(Q_1,\cdots,Q_s)$,
	by Propositions \ref{subpuregap} and \ref{novel_gap}, we can write $a_i = mk_i+j_i$ with $0\leq k_i\leq r-2-\floor*{\frac{r}{m}}$ and $1\leq j_i\leq m-\ceil*{\frac{(k_i+1)m}{r}}$ for $1\leq i\leq s$.
	There exists some $\sigma\in S_s$ such that $j_{\sigma(1)}\geq j_{\sigma(2)}\geq \cdots\geq j_{\sigma(s)}$.
	Let $\lambda^\prime\in \Z$ be the inverse of $\lambda$ modulo $m$.
	Then for each $1\leq l\leq s$, by Lemma \ref{sum_mod}, we have 
	\begin{align*}
		&~\sum_{i=1}^{s}\left\lfloor\frac{a_{\sigma(i)}-a_{\sigma(l)}\la\la^\prime}{m}\right\rfloor+\sum_{i=s+1}^{r}\left\lfloor\frac{-a_{\sigma(l)}\la\la^\prime}{m}\right\rfloor + \left\lfloor\frac{a_{\sigma(l)} r\la\la^\prime}{m}\right\rfloor\\
	  = &~\sum_{i=1}^{s}k_{\sigma(i)}+\sum_{i=1}^{s}\left\lfloor\frac{j_{\sigma(i)}-j_{\sigma(l)}}{m}\right\rfloor+\sum_{i=s+1}^{r}\left\lfloor\frac{-j_{\sigma(l)}}{m}\right\rfloor + \left\lfloor\frac{rj_{\sigma(l)}}{m}\right\rfloor\\
	  = &~\sum_{i=1}^{s}k_{i}+\sum_{i=l+1}^{s}\left\lfloor\frac{j_{\sigma(i)}-j_{\sigma(l)}}{m}\right\rfloor+\left\lfloor\frac{rj_{\sigma(l)}}{m}\right\rfloor -r+s.
	\end{align*}
	Let $k = \sum_{i=1}^{s}k_i$. When $l=s$, by Lemma \ref{simple_criterion}, we must have $k+\floor*{\frac{rj_{\sigma(s)}}{m}}-r+s\leq -1$,
	which is equivalent to $s\leq r-\floor*{\frac{rj_{\sigma(s)}}{m}}-k-1$.
	Since $j_{\sigma(s)}\geq 1$, we must have that $k\leq r-\floor*{\frac{r}{m}}-1-s$.
	This implies that $s\leq r-\floor*{\frac{r}{m}}-1$ since $k\geq 0$. Thus we have that $Q_0(Q_1,\cdots,Q_s) = \varnothing$ for $r-\floor*{\frac{r}{m}}\leq s\leq r$.
	
	Again by Lemma \ref{simple_criterion}, for $1\leq l\leq s$,  we have $k+\sum_{i=l+1}^{s}\left\lfloor\frac{j_{\sigma(i)}-j_{\sigma(l)}}{m}\right\rfloor+\left\lfloor\frac{rj_{\sigma(l)}}{m}\right\rfloor -r+s \leq -1,$
	which is equivalent to 
	$$\left\lfloor\frac{rj_{\sigma(l)}}{m}\right\rfloor\leq r-k-s-1-\sum_{i=l+1}^{s}\left\lfloor\frac{j_{\sigma(i)}-j_{\sigma(l)}}{m}\right\rfloor.$$
	Since $\sum_{i=l+1}^{s}\left\lfloor\frac{j_{\sigma(i)}-j_{\sigma(l)}}{m}\right\rfloor\geq -(s-l)$, we have 
	$$\left\lfloor\frac{rj_{\sigma(l)}}{m}\right\rfloor\leq r-k-s-1+(s-l) = r-(k+l)-1.$$
	It follows from (\ref{equivalent1}) that $(a_1,\cdots,a_s)\in G_k$, and then 
	$$G_0(Q_1,\cdots,Q_s)\subseteq \bigcup\limits_{k=0}^{r-\floor*{\frac{r}{m}}-1-s} G_k.$$
	The proof is complete.
	\end{proof}

	\begin{corollary}\label{precise_bottom1}
	Suppose that $2\leq s\leq r$.
	If $2\leq s\leq r-\floor*{\frac{r}{m}}-1$, let $t_{i} = m-\ceil*{\frac{im}{r}}$ for $1\leq i\leq s$.
	Then
	$$\overline{G}_0(Q_1,\cdots,Q_s) = \{(j_1,\cdots,j_s) ~|~ 1\leq j_{\sigma(i)}\leq t_{i} \text{ for } 1\leq i\leq s,~\sigma\in S_s\},$$
	where $S_s$ is the set of permutations of the set $\{1,\cdots,s\}$.

	If $r-\floor*{\frac{r}{m}}\leq s \leq r$, then $\overline{G}_0(Q_1,\cdots,Q_s) = \varnothing$.
\end{corollary}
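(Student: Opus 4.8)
The plan is to deduce the corollary directly from Proposition~\ref{precise_puregap1}, because by Definition~\ref{def_bottom_set} we have $\overline{G}_0(Q_1,\cdots,Q_s)=G_0(Q_1,\cdots,Q_s)\cap[1,m-1]^s$. The degenerate range is then immediate: when $r-\floor*{\frac{r}{m}}\leq s\leq r$, Proposition~\ref{precise_puregap1} already gives $G_0(Q_1,\cdots,Q_s)=\varnothing$, so its bottom set is empty and nothing further is required.

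For $2\leq s\leq r-\floor*{\frac{r}{m}}-1$, I would intersect the explicit union of Proposition~\ref{precise_puregap1} with the box $[1,m-1]^s$. A generic element of the $k$-th layer has the shape $(mk_1+j_1,\cdots,mk_s+j_s)$ with $k_i\geq 0$, $\sum_{i=1}^s k_i=k$, and $1\leq j_{\sigma(i)}\leq t_{i,k}$ for some $\sigma\in S_s$; here $t_{i,k}=m-\ceil*{\frac{m(k+i)}{r}}\leq m-1$ since $\ceil*{\frac{m(k+i)}{r}}\geq 1$ for $k\geq 0$ and $i\geq 1$, so each $j$-coordinate already lies in $[1,m-1]$. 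The decisive observation is that a coordinate $mk_i+j_i$ belongs to $[1,m-1]$ if and only if $k_i=0$: the inequality $j_i\geq 1$ forces $mk_i\leq m-1-j_i<m$, hence $k_i=0$, and conversely $k_i=0$ returns $j_i\in[1,m-1]$. Therefore only the layer $k=0$ can meet the box, and there $\sum_{i=1}^s k_i=0$ forces every $k_i=0$; specializing $t_{i,0}=m-\ceil*{\frac{mi}{r}}=t_i$ then yields exactly the claimed set $\{(j_1,\cdots,j_s)\mid 1\leq j_{\sigma(i)}\leq t_i,~\sigma\in S_s\}$.

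I anticipate no substantive obstacle, since all the genuine work is already carried by Proposition~\ref{precise_puregap1}; what remains is the elementary verification that the box $[1,m-1]^s$ isolates precisely the $k=0$ layer. The single point deserving a word of care is the bound $t_{i,k}\leq m-1$, which guarantees that the $j$-coordinates never leave the box and that passing to the bottom set imposes no condition beyond $k_i=0$.
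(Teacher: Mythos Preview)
Your proposal is correct and follows essentially the same route as the paper: both deduce the corollary directly from Proposition~\ref{precise_puregap1} by intersecting with $[1,m-1]^s$ and observing that only the $k=0$ layer survives, after which $t_{i,0}=t_i$. The paper phrases the intersection as a union over $k$ and then collapses it using the monotonicity $t_{i,k}\leq t_{i,0}$, while you argue directly that $k_i=0$ forces $k=0$; the content is the same.
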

\begin{proof}
	If $2\leq s\leq r-\floor*{\frac{r}{m}}-1$, for each $0\leq k\leq r-\floor*{\frac{r}{m}}-1-s$, let $t_{i,k} = m-\ceil*{\frac{m(k+i)}{r}}$ for $1\leq i\leq s$.
	According to Proposition \ref{precise_puregap1}, we have 
	\begin{align*}
		\overline{G}_0(Q_1,\cdots,Q_s) &= \bigcup\limits_{k=0}^{r-\floor*{\frac{r}{m}}-1-s} \{(j_1,\cdots,j_s)~|~1\leq j_{\sigma(i)}\leq t_{i,k} \text{ for } 1\leq i\leq s, ~\sigma\in S_s\}\\
		&= \{(j_1,\cdots,j_s)~|~1\leq j_{\sigma(i)}\leq t_{i,0} \text{ for } 1\leq i\leq s, ~\sigma\in S_s\}\\
		&= \{(j_1,\cdots,j_s) ~|~ 1\leq j_{\sigma(i)}\leq t_{i} \text{ for } 1\leq i\leq s,~\sigma\in S_s\},
	\end{align*}
	where $S_s$ is the set of permutations of the set $\{1,\cdots,s\}$.
	
	If $r-\floor*{\frac{r}{m}}\leq s \leq r$, then $\overline{G}_0(Q_1,\cdots,Q_s) = \varnothing$ since $G_0(Q_1,\cdots,Q_s) = \varnothing$.
\end{proof}

\subsection{\texorpdfstring{Pure gaps at $Q_\infty, Q_1,\cdots,Q_s$ for $mv=r+1$}{Pure gaps at Q0, Q1, ..., Qs for mv=r+1}}

In the case where $mv = r+1$ for some $v \geq 1$, by Proposition \ref{novel_gap}, we have $G(Q_\infty) = G(Q_i)$ for all $1\leq i\leq r$.
In this subsection, we will prove Proposition \ref{precise_puregap2}, which shows that the description of $G_0(Q_\infty,Q_1,\cdots,Q_s)$ bears resemblance to that of $G_0(Q_1,\cdots,Q_{s})$ for $1\leq s\leq r$.
Firstly, we present a family of consecutive pure gaps as follows.
\begin{proposition}\label{precisefamily2}
	Suppose that $mv = r+1$ for some $v \geq 1$, $1\leq s\leq r-v-1$, and $0\leq k\leq r-v-1-s$.
	Let $S_{s+1}$ be the set of permutations of the set $\{0,1,\cdots,s\}$ and $Q_0 = Q_\infty$. 
	Let $t_i = m-\ceil*{\frac{m(k+i+1)}{r}}$ for $0\leq i\leq s$.
	Then for all $\sigma\in S_{s+1}$,
	$$\{(mk+j_0,j_1,\cdots,j_s)~|~ 1\leq j_{i}\leq t_i \text{ for } 0\leq i\leq s\}\subseteq G_0(Q_{\sigma(0)},Q_{\sigma(1)},\cdots, Q_{\sigma(s)}).$$
\end{proposition}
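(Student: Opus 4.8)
The plan is to follow the inductive scheme of Proposition \ref{precisefamily1}, appending the places $Q_{\sigma(0)},\dots,Q_{\sigma(s)}$ one at a time via Theorem \ref{find_another_pure_gap}, but now the infinite place $Q_0=Q_\infty$, whose multiplicity is $\la_0=-r\la$, is allowed to sit among the gap places. First I would record the basic equivalence: since $mv=r+1$ and $0\le k\le r-v-1-s$, each index $k+i+1$ with $0\le i\le s$ lies in $[1,r-1]$, so the second part of Lemma \ref{equivalent} gives that $j_i\le t_i=m-\ceil*{\frac{(k+i+1)m}{r}}$ if and only if $vj_i\le r-(k+i+1)$. For the base case I would invoke Proposition \ref{novel_gap}: because $mv=r+1$ forces $G(Q_\infty)=G(Q_i)$ for every $1\le i\le r$, and since $0\le k\le r-v-1$ together with $1\le j_0\le t_0$, the integer $mk+j_0$ lies in $G(Q_{\sigma(0)})$ irrespective of which place $Q_{\sigma(0)}$ happens to be.

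Fixing $\sigma\in S_{s+1}$, I would prove by induction on $l$ that $(mk+j_0,j_1,\dots,j_l)\in G_0(Q_{\sigma(0)},\dots,Q_{\sigma(l)})$, with an inner induction on the value $j_l$ running from $1$ up to $t_l$ (so that both parts of Theorem \ref{find_another_pure_gap} become available). The heart of the matter is to evaluate the single inequality of Theorem \ref{find_another_pure_gap} produced when the place $Q_{\sigma(l)}$ is appended with value $j_l$. Let $\la'$ be the inverse of $\la$ modulo $m$. The key observation is that every place among $Q_0,Q_1,\dots,Q_s$ has multiplicity congruent to $\la$ modulo $m$ --- indeed $\la_0=-r\la\equiv\la\pmod m$ because $mv=r+1$ gives $r\equiv-1\pmod m$ --- so the inverse of the multiplicity of $Q_{\sigma(l)}$ is congruent to $\la'$ modulo $m$. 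Hence Lemma \ref{sum_mod} lets me rewrite the left-hand side of that inequality as $\sum_P\floor*{\frac{v_P-j_l\la'\la_P}{m}}$, the sum ranging over all $r+1$ places $P$, with $\la_P$ the multiplicity of $P$ and $v_P$ the prescribed value at $P$ (namely $mk+j_0,j_1,\dots,j_l$ at the gap places and $0$ elsewhere).

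Writing $\la'\la=1+em$ and using $r=mv-1$, the contributions of the factor $e$ coming from the $Q_\infty$ summand and from the $r$ finite summands cancel, so that
\[
\sum_P\floor*{\frac{v_P-j_l\la'\la_P}{m}}=vj_l+\sum_P\floor*{\frac{v_P-j_l}{m}},
\]
where the right-hand sum again runs over all $r+1$ places and now treats $Q_\infty$ exactly like a finite ramified place. In this last sum the single gap place carrying the offset $mk+j_0$ contributes $k+\floor*{\frac{j_0-j_l}{m}}\le k$, the other $l-1$ gap places contribute $\sum_i\floor*{\frac{j_i-j_l}{m}}\le 0$, the appended place contributes $0$, and each of the $r-l$ non-gap places contributes $-1$. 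Thus the quantity is at most $vj_l+k-(r-l)$, and the equivalence $vj_l\le r-(k+l+1)$ from the first paragraph forces it to be $\le -1$. Theorem \ref{find_another_pure_gap} then yields the claimed membership and closes the induction.

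I expect the main obstacle to be the bookkeeping for the infinite place: one must isolate its multiplicity $-r\la$, reduce it modulo $m$ through $r\equiv-1$, invoke Lemma \ref{sum_mod}, and verify the cancellation of the $e$-terms so that $Q_\infty$ behaves like any other totally ramified place. Once this is done, the presence of $Q_\infty$ as a genuine gap place simply increases the count of active places by one relative to Proposition \ref{precisefamily1}; this is exactly what turns the bound $vj_l+k-r+l\le-1$ into the requirement $vj_l\le r-(k+l+1)$, and hence accounts for the shift from $k+i$ to $k+i+1$ in the definition of $t_i$.
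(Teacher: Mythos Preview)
Your proposal is correct and follows essentially the same route as the paper: the same equivalence $j_i\le t_i\Leftrightarrow vj_i\le r-(k+i+1)$ via Lemma \ref{equivalent}, the same base case through Proposition \ref{novel_gap}, and the same induction in $l$ using Theorem \ref{find_another_pure_gap}. Your explicit tracking of the $e$-cancellation (writing $\la'\la=1+em$ and showing that the $Q_\infty$ contribution $+ej_lr$ cancels against the $r$ finite contributions $-ej_l$) is simply an unpacked version of what the paper compresses into ``by Lemma \ref{sum_mod} and $r=mv-1$''; the resulting bound $vj_l+k-r+l\le-1$ and its verification are identical.
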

\begin{proof}
	For each $0\leq i\leq s$, by Lemma \ref{equivalent}, we have
	\begin{equation}\label{equivalent3}
		j_{i}\leq m-\ceil*{\frac{(k+i+1)m}{r}} \text{ if and only if } vj_i\leq r-(k+i+1).
	\end{equation} 
	We set $G =  \{(mk+j_0,j_1,\cdots,j_s)~|~ 1\leq j_{i}\leq t_i \text{ for } 0\leq i\leq s\}$.
	For any $(mk+j_0,j_1,\cdots,j_s)\in G$ and $\sigma\in S_{s+1}$, by Proposition \ref{novel_gap}, it is obvious that $mk+j_0\in G(Q_{\sigma(0)})$.
	We will show that $(mk+j_0,j_1,\cdots,j_l)\in G_0(Q_{\sigma(0)},Q_{\sigma(1)},\cdots,Q_{\sigma(l)})$ for $1\leq l\leq s$ by induction.
	Let $\lambda^\prime\in \Z$ be the inverse of $\lambda$ modulo $m$. Then $\lambda'$ is also the inverse of $-r\lambda$ modulo $m$.
	Let $\lambda_ 0  = -r\lambda$ and $\lambda_i = \lambda$ for $1\leq i\leq s$.
	For each $1\leq j_1\leq t_1$, by Lemma \ref{sum_mod} and $r = mv-1$, we have
	\begin{align*}
		&~\left\lfloor\frac{mk+j_0-j_1\la^\prime\la_{\sigma(0)}}{m}\right\rfloor +\left\lfloor\frac{j_1-j_1\la^\prime\la_{\sigma(1)}}{m}\right\rfloor +\sum_{i=2}^{r}\left\lfloor\frac{-j_1\la^\prime\la_{\sigma(i)}}{m}\right\rfloor\\
		 = &~k + vj_{1} + \left\lfloor\frac{j_0-j_1}{m}\right\rfloor +\sum_{i=2}^{r}\left\lfloor\frac{-j_1}{m}\right\rfloor \leq k + vj_1 - r+1 .
	\end{align*}
	It follows from (\ref{equivalent3}) that $vj_1\leq r-k-1-1$, and then $k + vj_1 - r+1\leq -1$.
	Therefore, by Theorem \ref{find_another_pure_gap}, we have $(mk+j_0,j_1)\in G_0(Q_{\sigma(0)},Q_{\sigma(1)})$.
	
	Assume that $(mk+j_0,j_1,\cdots,j_{l-1})\in G_0(Q_{\sigma(0)},Q_{\sigma(1)},\cdots,Q_{\sigma(l-1)})$ for some $2\leq l\leq s$.
	For each $1\leq j_l\leq t_l$, by Lemma \ref{sum_mod}, we have
	\begin{align*}
		&~\left\lfloor\frac{mk+j_0-j_l\la^\prime\la_{\sigma(0)}}{m}\right\rfloor + \sum_{i=1}^{l}\left\lfloor\frac{j_i-j_l\la_{\sigma(i)}\la^\prime}{m}\right\rfloor + \sum_{i=l+1}^{r}\left\lfloor\frac{-j_l\la^\prime \la_{\sigma(i)}}{m}\right\rfloor\\
	  = &~k+ vj_l + \sum_{i=0}^{l}\left\lfloor\frac{j_{i}-j_{l}}{m}\right\rfloor +\sum_{i=l+1}^{r}\left\lfloor\frac{-j_{l}}{m}\right\rfloor \leq k + vj_l -r+l.
	\end{align*}
	It follows from (\ref{equivalent3}) that $vj_l\leq r-k-l-1$, and then $k + vj_l -r+l\leq -1$.
	Therefore, by Theorem \ref{find_another_pure_gap}, we have $(mk+j_0,j_1,\cdots,j_l)\in G_0(Q_{\sigma(0)},Q_{\sigma(1)},\cdots,Q_{\sigma(l)})$.
	By induction, we obtain that $(mk+j_0,j_1,\cdots,j_s)\in G_0(Q_{\sigma(0)},Q_{\sigma(1)},\cdots,Q_{\sigma(s)})$.
	Thus we have $ G\subseteq G_0(Q_{\sigma(0)},Q_{\sigma(1)},\cdots,Q_{\sigma(s)})$.
\end{proof}
	Based on the above proposition, we will finish the proof of Proposition \ref{precise_puregap2} in the following.
\begin{proof}[Proof of Proposition \ref{precise_puregap2}]
	If $1\leq s\leq r-v-1$, for each $0\leq k\leq r-v-1-s$, we set $G_k = \{(mk_0+j_0,mk_1+j_1,\cdots,mk_s+j_s)~|~
	k_i\geq 0 \text{ for } 0\leq i\leq s,~\sum_{i=0}^s k_i = k,~ 1\leq j_{\sigma(i)}\leq t_{i,k} \text{ for } 0\leq i\leq s,~\sigma\in S_{s+1}\}$
	and  $\hat{G}_k = \{(mk+j_0,j_1,\cdots,j_s)~|~1\leq j_i\leq t_{i,k} \text{ for } 0\leq i\leq s\}$.
	Let $Q_0 = Q_\infty$.
	According to Proposition \ref{precisefamily2}, we have $\hat{G}_k \subseteq G_0(Q_{\sigma(0)},Q_{\sigma(1)},\cdots,Q_{\sigma(s)})$ for all $\sigma\in S_{s+1}$.
	If follows from  Theorem \ref{bottom_set} that $G_k\subseteq G_0(Q_\infty,Q_1,\cdots,Q_s)$.

	For each $0\leq i\leq s$ and $0\leq k\leq r-v-1-s$, by Lemma \ref{equivalent}, we have
	\begin{equation}\label{equivalent4}
		j_{\sigma(i)}\leq m-\ceil*{\frac{(k+i+1)m}{r}} \text{ if and only if } vj_{\sigma(i)}\leq r-(k+i+1).
	\end{equation} 
	Next we will show that  $$G_0(Q_{\infty},Q_1,\cdots,Q_s)\subseteq \bigcup\limits_{k=0}^{r-v-1-s} G_k.$$
	For any $(a_0,a_1,\cdots,a_s)\in G_0(Q_\infty,Q_1,\cdots,Q_s)$, by Propositions \ref{subpuregap} and \ref{novel_gap}, we can write $a_i = mk_i+j_i$ with $0\leq k_i\leq r-2-\floor*{\frac{r}{m}}$ and $1\leq j_i\leq m-\ceil*{\frac{(k_i+1)m}{r}}$ for $0\leq i\leq s$.
	There exists some $\sigma\in S_{s+1}$ such that $j_{\sigma(0)}\geq j_{\sigma(1)}\geq \cdots\geq j_{\sigma(s)}$.
	Let $\lambda_ 0  = -r\lambda$ and $\lambda_i = \lambda$ for $1\leq i\leq s$.
	Let $\lambda^\prime\in \Z$ be the inverse of $\lambda$ modulo $m$.
	Then for each $1\leq l\leq s$, by Lemma \ref{sum_mod} and $r = mv-1$, we have 
	\begin{align*}
		&~\sum_{i=0}^{s}\left\lfloor\frac{a_{\sigma(i)}-a_{\sigma(l)}\la^\prime\la_{\sigma(i)}}{m}\right\rfloor + \sum_{i=s+1}^{r}\left\lfloor\frac{-a_{\sigma(l)}\la^\prime\la_{\sigma(i)}}{m}\right\rfloor\\
	  = &~\sum_{i=0}^{s}k_{\sigma(i)}+vj_{l} + \sum_{i=0}^{s}\left\lfloor\frac{j_{\sigma(i)}-j_{\sigma(l)}}{m}\right\rfloor+ \sum_{i=s+1}^{r}\left\lfloor\frac{-j_{\sigma(l)}}{m}\right\rfloor\\
	  = &~\sum_{i=0}^{s}k_{i}+vj_{l} +\sum_{i=l+1}^{s}\left\lfloor\frac{j_{\sigma(i)}-j_{\sigma(w)}}{m}\right\rfloor -r+s.
	\end{align*}
	Let $k = \sum_{i=0}^{s}k_{i}$. When $l = s$, by Lemma \ref{simple_criterion}, we must have $k+vj_{\sigma(s)}-r+s\leq -1$,
	which is equivalent to $s\leq r-k-vj_{\sigma(s)}-1$.
	Since $j_{\sigma(s)}\geq 1$, we must have that $k\leq r-v-1-s$. This implies that $s\leq r-v-1$ since $k\geq 0$.
	Thus we have that $G_0(Q_\infty,Q_1,\cdots,Q_s) = \varnothing$ for $r-v\leq s\leq r$.

	Again by Lemma \ref{simple_criterion}, for $0\leq l\leq s$, we have $k+vj_{\sigma(l)}+\sum_{i=l+1}^{s}\left\lfloor\frac{j_{\sigma(i)}-j_{\sigma(l)}}{m}\right\rfloor-r+s\leq -1$,
	which is equivalent to 
	$$rj_{\sigma(l)}\leq r-k-s-1-\sum_{i=l+1}^{s}\left\lfloor\frac{j_{\sigma(i)}-j_{\sigma(l)}}{m}\right\rfloor.$$
	Since $\sum_{i=l+1}^{s}\left\lfloor\frac{j_{\sigma(i)}-j_{\sigma(l)}}{m}\right\rfloor\geq -(s-l)$, we have 
	$$vj_{\sigma(l)}\leq r-k-s-1+(s-l) = r-k-l-1.$$
	It follows from  (\ref{equivalent4}) that $(a_0,a_1,\cdots,a_s)\in G_k$, and then 
	$$G_0(Q_{\infty},Q_1,\cdots,Q_s)\subseteq \bigcup\limits_{k=0}^{r-v-1-s} G_k.$$
	The proof is complete.
\end{proof}
\begin{corollary}
	Suppose that  $mv = r+1$ for some $v \geq 1$ and $1\leq s\leq r$.
	If $2\leq s\leq r-v-1$, let $t_{i} = m-\ceil*{\frac{(i+1)m}{r}}$ for $0\leq i\leq s$.
	Then
	$$\overline{G}_0(Q_\infty,Q_1,\cdots,Q_s) = \{(j_0,j_1,\cdots,j_s) ~|~ 1\leq j_{\sigma(i)}\leq t_{i} \text{ for } 0\leq i\leq s,~\sigma\in S_{s+1}\},$$
	where $S_{s+1}$ is the set of permutations of the set $\{0,1,\cdots,s\}$.

	If $r-v\leq s \leq r$, then $\overline{G}_0(Q_\infty,Q_1,\cdots,Q_s) = \varnothing$.
\end{corollary}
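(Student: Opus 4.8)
The plan is to mirror the proof of Corollary \ref{precise_bottom1}, obtaining the bottom set directly from the full description of $G_0(Q_\infty,Q_1,\cdots,Q_s)$ furnished by Proposition \ref{precise_puregap2}. First I would invoke that proposition to write, in the range $2\leq s\leq r-v-1$,
$$G_0(Q_\infty,Q_1,\cdots,Q_s) = \bigcup_{k=0}^{r-v-1-s} G_k,$$
where $G_k$ denotes the set of tuples $(mk_0+j_0,\cdots,mk_s+j_s)$ with $k_i\geq 0$, $\sum_{i=0}^s k_i = k$, and $1\leq j_{\sigma(i)}\leq t_{i,k}$ for some permutation $\sigma\in S_{s+1}$, with $t_{i,k}=m-\ceil*{\frac{m(k+i+1)}{r}}$.

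Next I would intersect this union with the hypercube $[1,m-1]^{s+1}$, as required by Definition \ref{def_bottom_set}. The key observation is that a coordinate $mk_i+j_i$ with $j_i\geq 1$ satisfies $mk_i+j_i\leq m-1$ if and only if $k_i=0$; indeed, $k_i\geq 1$ together with $j_i\geq 1$ forces $mk_i+j_i\geq m+1>m-1$. Hence membership in $[1,m-1]^{s+1}$ compels $k_0=\cdots=k_s=0$, and in particular $k=\sum_{i=0}^s k_i=0$. Therefore only the single term $G_0$ of the union survives the intersection, and within $G_0$ every multiplicity vanishes automatically.

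Consequently I would conclude that
$$\overline{G}_0(Q_\infty,Q_1,\cdots,Q_s) = \{(j_0,j_1,\cdots,j_s)~|~1\leq j_{\sigma(i)}\leq t_{i,0}\text{ for }0\leq i\leq s,~\sigma\in S_{s+1}\},$$
and substituting $t_{i,0}=m-\ceil*{\frac{m(i+1)}{r}}=t_i$ yields the claimed description. For the vanishing case $r-v\leq s\leq r$, I would simply note that Proposition \ref{precise_puregap2} gives $G_0(Q_\infty,Q_1,\cdots,Q_s)=\varnothing$, whence its subset $\overline{G}_0(Q_\infty,Q_1,\cdots,Q_s)$ is empty as well.

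Since every step is a direct specialization of the already-established Proposition \ref{precise_puregap2}, I do not anticipate a genuine obstacle. The only point requiring care is the clean verification that the intersection with $[1,m-1]^{s+1}$ annihilates all terms of the union with $k\geq 1$, which rests on the elementary bound $mk_i+j_i\geq m+1$ whenever $k_i\geq1$ and $j_i\geq1$.
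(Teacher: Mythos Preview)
Your proposal is correct and follows essentially the same route as the paper's proof: both invoke Proposition \ref{precise_puregap2}, intersect with $[1,m-1]^{s+1}$, and observe that only the $k=0$ term contributes. The paper records the intermediate step as a union over all $k$ of the sets $\{(j_0,\dots,j_s):1\le j_{\sigma(i)}\le t_{i,k}\}$ and then collapses it to the $k=0$ term by nestedness, whereas you argue directly that $k_i\ge 1$ forces a coordinate $\ge m+1$, so the terms with $k\ge 1$ are already empty after intersection; this is arguably the cleaner justification of that same reduction.
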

\begin{proof}
	If $2\leq s\leq r-v-1$, for each $0\leq k\leq r-v-1-s$, let $t_{i,k} = m-\ceil*{\frac{m(k+i+1)}{r}}$ for $0\leq i\leq s$.
	Let $d = r-v-1-s$. According to Proposition \ref{precise_puregap2}, we have 
	\begin{align*}
		\overline{G}_0(Q_\infty,Q_1,\cdots,Q_s) &= \bigcup\limits_{k=0}^{d} \{(j_0,j_1,\cdots,j_s)~|~1\leq j_{\sigma(i)}\leq t_{i,k} \text{ for } 0\leq i\leq s, ~\sigma\in S_{s+1}\}\\
		&= \{(j_0,j_1,\cdots,j_s)~|~1\leq j_{\sigma(i)}\leq t_{i,0} \text{ for } 0\leq i\leq s, ~\sigma\in S_{s+1}\}\\
		&= \{(j_0,j_1,\cdots,j_s) ~|~ 1\leq j_{\sigma(i)}\leq t_{i} \text{ for } 0\leq i\leq s,~\sigma\in S_{s+1}\},
	\end{align*}
	where $S_{s+1}$ is the set of permutations of the set $\{1,\cdots,s\}$.
	
	If $r-v\leq s \leq r$, then $\overline{G}_0(Q_\infty, Q_1,\cdots,Q_s) = \varnothing$ since $G_0(Q_\infty,Q_1,\cdots,Q_s) = \varnothing$.
\end{proof}

\subsection{\texorpdfstring{Pure gaps at $Q_\infty, Q_1,\cdots,Q_s$ for arbitrary $m$ and $r$}{Pure gaps at Q0, Q1, ..., Qs for arbitrary m and r}}
In the previous subsection, in the case where $mv = r+1$ for some $v\geq 1$, we have proved Proposition \ref{precise_puregap2}, which presents an explicit description of $G_0(Q_\infty,Q_1,\cdots,Q_s)$ for $1\leq s\leq r$.
Now for arbitrary $m$ and $r$, we will prove Proposition \ref{precise_puregap3}, which establishes an explicit description of $G_0(Q_\infty,Q_1,\cdots,Q_s)$ for $1\leq s\leq r$.
\begin{proposition}\label{precisefamily3}
	Suppose that $1\leq s\leq  r$. Let $1\leq k \leq r-1$ and $1\leq j_0\leq m-1$ such that $mk-rj_0\in G(Q_\infty)$.
	Let $S_s$ be the set of permutations of the set $\{1,\cdots,s\}.$
	For $1\leq i\leq s$, let 
	$$t_i = \left\{\begin{array}{cc}
		m-\ceil*{\frac{(k+i)m}{r}}+j_0,~&\text{if}~ k+i \neq r,\\
		m-\ceil*{\frac{(k+i)m}{r}}+j_0-1,~&\text{if}~ k+i = r.\\
	\end{array}\right.$$
	Then for all $\sigma\in S_s$,
	$$\{(mk-rj_0,j_1,\cdots,j_s)~|~ 1\leq j_{i}\leq t_i \text{ for } 1\leq i\leq s\}\subseteq G_0(Q_\infty,Q_{\sigma(1)},\cdots,Q_{\sigma(s)}).$$
\end{proposition}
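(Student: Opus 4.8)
The plan is to mirror the inductive arguments used for Propositions \ref{precisefamily1} and \ref{precisefamily2}: I build the pure gap one finite coordinate at a time, invoking Theorem \ref{find_another_pure_gap} so that each extension step reduces to verifying a single inequality. First I would record that the first coordinate $mk-rj_0$ lies in $G(Q_\infty)=G_0(Q_\infty)$ by hypothesis, which supplies the base case. Writing $Q_0=Q_\infty$ with multiplicity $\la_0=-r\la$ and $\la_i=\la$ for $1\leq i\leq r$, and letting $\la'\in\Z$ be the inverse of $\la$ modulo $m$, I would then prove by induction on $l$ that $(mk-rj_0,j_1,\cdots,j_l)\in G_0(Q_\infty,Q_{\sigma(1)},\cdots,Q_{\sigma(l)})$ for every $1\leq l\leq s$ and every choice of $1\leq j_i\leq t_i$. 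Since all finite places share the multiplicity $\la$, the computation does not depend on $\sigma$, so the conclusion holds for all $\sigma\in S_s$ simultaneously.

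The core is the inequality furnished by Theorem \ref{find_another_pure_gap} when one appends the value $j_l$ at $Q_{\sigma(l)}$. The only term differing from the computations in Propositions \ref{precisefamily1} and \ref{precisefamily2} is the $Q_\infty$ contribution $\floor*{\frac{(mk-rj_0)-j_l\la'\la_0}{m}}$; because $\la_0=-r\la$ and $\la'\la\equiv 1\pmod m$, one has $-j_l\la'\la_0\equiv rj_l$ while $-j_l\la'\la\equiv -j_l$, so Lemma \ref{sum_mod} lets me rewrite the full $(r+1)$-term sum as
\begin{align*}
\floor*{\frac{mk-rj_0+rj_l}{m}}+\sum_{i=1}^{l-1}\floor*{\frac{j_i-j_l}{m}}+\floor*{\frac{j_l-j_l}{m}}+(r-l)\floor*{\frac{-j_l}{m}}.
\end{align*}
Using $1\leq j_i,j_l\leq m-1$ to bound each $\floor*{\frac{j_i-j_l}{m}}\leq 0$, $\floor*{\frac{j_l-j_l}{m}}=0$ and $\floor*{\frac{-j_l}{m}}=-1$, this is at most $k+\floor*{\frac{r(j_l-j_0)}{m}}-r+l$. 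Hence the requirement that the sum be $\leq -1$ is equivalent to $\floor*{\frac{r(j_l-j_0)}{m}}\leq r-(k+l)-1$.

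To finish I would apply Lemma \ref{equivalent} with its index taken to be $k+l$ and its argument taken to be $j_l-j_0$. Its hypotheses hold: $1\leq k+l\leq 2r-1$ follows from $1\leq k\leq r-1$ and $1\leq l\leq s\leq r$, while $1-m\leq j_l-j_0\leq m-1$ follows from $1\leq j_0\leq m-1$ together with $j_l-j_0\leq t_l-j_0=m-\ceil*{\frac{(k+l)m}{r}}\leq m-1$ (with an extra $-1$ when $k+l=r$). The lemma then yields $\floor*{\frac{r(j_l-j_0)}{m}}\leq r-(k+l)-1$ exactly when $j_l-j_0\leq m-\ceil*{\frac{(k+l)m}{r}}$ if $k+l\neq r$, and $j_l-j_0\leq m-\ceil*{\frac{(k+l)m}{r}}-1$ if $k+l=r$; after adding $j_0$ these are precisely the conditions $j_l\leq t_l$, which hold by hypothesis. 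Thus Theorem \ref{find_another_pure_gap} closes the induction step, and induction on $l$ gives the claimed inclusion.

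The main obstacle I anticipate is the bookkeeping around the infinite place: correctly carrying the multiplicity $\la_0=-r\la$ through Lemma \ref{sum_mod}, so that the $Q_\infty$ contribution produces $+rj_l$ (and hence the clean combination $mk-rj_0+rj_l=mk+r(j_l-j_0)$) rather than $-j_l$, and verifying that the index shift to $k+l$ and argument shift to $j_l-j_0$ keep us inside the valid range of Lemma \ref{equivalent}. The matching of the boundary case $k+l=r$ with the two-line definition of $t_l$ is the one spot where a sign or an off-by-one could slip, so I would treat that case explicitly.
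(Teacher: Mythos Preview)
Your proposal is correct and follows essentially the same route as the paper: induction on $l$ via Theorem \ref{find_another_pure_gap}, using Lemma \ref{sum_mod} to rewrite the $(r+1)$-term sum so that the $Q_\infty$ contribution becomes $k+\floor*{\frac{r(j_l-j_0)}{m}}$, bounding the remaining terms to obtain $k+\floor*{\frac{r(j_l-j_0)}{m}}-r+l$, and then applying Lemma \ref{equivalent} with index $k+l$ and argument $j_l-j_0$ to recover precisely the condition $j_l\leq t_l$. The paper separates out the $l=1$ case explicitly before the general inductive step, but otherwise your argument matches it in structure and detail.
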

\begin{proof}
	For each $1\leq i\leq s$, it is obvious that $1-m\leq 1-j_0\leq j_{i}-j_0\leq m-\ceil*{\frac{(k+i)m}{r}}\leq m-1$.
	Then by Lemma \ref{equivalent}, we have 
	\begin{align}\label{equivalent5}
		\floor*{\frac{(j_i-j_0)r}{m}}\leq r-(k+i)-1 \text{ if and only if } j_i-j_0 \leq t_{i} - j_0.
	\end{align}
	We set $G =\{(mk-rj_0,j_1,\cdots,j_s)~|~ 1\leq j_i\leq t_i \text{ for } 1\leq i\leq s\}$.
	For any $(mk-rj_0,j_1,\cdots,j_s)\in G$ and $\sigma\in S_s$, we already konw that $mk-rj_0\in G(Q_\infty)$.
	We will show that $(mk-rj_0,j_1,\cdots,j_l)\in G_0(Q_\infty,Q_{\sigma(1)},\cdots,Q_{\sigma(l)})$ for $1\leq l\leq s$ by induction.
	Let $\lambda^\prime\in \Z$ be the inverse of $\lambda$ modulo $m$.
	For each $1\leq j_1\leq t_1$, by Lemma \ref{sum_mod}, we have
	\begin{align*}
		&~\left\lfloor\frac{mk-rj_0+j_1r\la\la^\prime}{m}\right\rfloor+\left\lfloor\frac{j_1-j_1\la\la^\prime}{m}\right\rfloor +\sum_{i=2}^{r}\left\lfloor\frac{-j_1\la\la^\prime}{m}\right\rfloor\\
		=&~ k + \floor*{\frac{(j_1-j_0)r}{m}} +\left\lfloor\frac{j_1-j_1}{m}\right\rfloor + \sum_{i=2}^{r}\left\lfloor\frac{-j_1}{m}\right\rfloor = k+\floor*{\frac{(j_1-j_0)r}{m}} - r+1.
	\end{align*}
	It follows from (\ref{equivalent5}) that $\floor*{\frac{(j_1-j_0)r}{m}}\leq r-k-1-1$, and then $k+\floor*{\frac{(j_1-j_0)r}{m}} - r+1\leq -1$.
	Therefore, by Theorem \ref{find_another_pure_gap}, we have $(mk-rj_0,j_1)\in G_0(Q_\infty,Q_{\sigma(1)})$ .
	
	Assume that $(mk-rj_0,j_1,\cdots,j_{l-1})\in G_0(Q_\infty,Q_{\sigma(1)},\cdots,Q_{\sigma(l-1)})$ for some $2\leq l\leq s$.
	For each $1\leq j_l \leq t_l$, by Lemma \ref{sum_mod}, we have
	\begin{align*}
		&\left\lfloor\frac{mk-rj_0+j_lr\la\la^\prime}{m}\right\rfloor + \sum_{i=1}^{l}\left\lfloor\frac{j_i-j_l\la\la^\prime}{m}\right\rfloor + \sum_{i=l+1}^{r}\left\lfloor\frac{-j_l\la\la^\prime}{m}\right\rfloor\\
	  = ~&k + \floor*{\frac{(j_l-j_0)r}{m}} + \sum_{i=1}^{l}\left\lfloor\frac{j_i-j_l}{m}\right\rfloor +\sum_{i=l+1}^{r}\left\lfloor\frac{-j_l}{m}\right\rfloor\leq k + \floor*{\frac{(j_l-j_0)r}{m}} -r+l.
	\end{align*}
	It follows from (\ref{equivalent5}) that $\floor*{\frac{(j_l-j_0)r}{m}}\leq r-k-l-1$, and then $ k + \floor*{\frac{(j_l-j_0)r}{m}} -r+l\leq -1$.
	Therefore, by Theorem \ref{find_another_pure_gap}, we have 
	$$(mk-rj_0,j_1,\cdots,j_l)\in G_0(Q_\infty,Q_{\sigma(1)},\cdots,Q_{\sigma(l)}).$$
	By induction, we obtain that $(mk-rj_0,j_1,\cdots,j_s)\in G_0(Q_\infty,Q_{\sigma(1)},\cdots,Q_{\sigma(s)})$.
	Thus we have $G\subseteq G_0(Q_\infty,Q_{\sigma(1)},\cdots,Q_{\sigma(s)})$.
\end{proof}

Based on the above proposition, now we finish the proof of Proposition \ref{precise_puregap3} as follows.
\begin{proof}[Proof of Proposition \ref{precise_puregap3}]
		Let $1\leq j_0\leq c$. We define
	$$A_{j_0} = \left\{(mk_0-rj_0,a_1,\cdots,a_s)\in G_0(Q_\infty,Q_1,\cdots,Q_s)~\Big|~\ceil*{\frac{rj_0}{m}}\leq k_0\leq r-1\right\}.$$
	If $1\leq s\leq r-\floor*{\frac{r}{m}}-1$, for each $\ceil*{\frac{rj_0}{m}}\leq k\leq r-1$, we set $G_k = \{(mk_0-rj_0,mk_1+j_1,\cdots,mk_s+j_s)~|~k_0\geq\ceil*{\frac{rj_0}{m}},~
	k_i\geq 0 \text{ for } 0\leq i\leq s,~\sum_{i=0}^s k_i = k,~ 1\leq j_{\sigma(i)}\leq t_{j_0,i,k} \text{ for } 0\leq i\leq s,~ \sigma\in S_{s}\}$
	and  $\hat{G}_k = \{(mk-rj_0,j_1,\cdots,j_s)~|~1\leq j_{\sigma(i)}\leq t_{j_0,i,k} \text{ for } 1\leq i\leq s,~\sigma\in S_{s}\}$.
	It is sufficient to show that 
	\begin{align*}
		A_{j_0} = \bigcup_{k=\ceil*{\frac{rj_0}{m}}}^{d_{j_0}}G_k.
	\end{align*}
	It follows from Lemma \ref{special_gap} that $mk_0-rj_0\in G(Q_\infty)$ for $\ceil*{\frac{rj_0}{m}}\leq k_0\leq d_{j_0}$. According to Proposition \ref{precisefamily3}, we have $\hat{G}_k \subseteq G_0(Q_\infty, Q_1,\cdots,Q_s)$.
	By Theorem \ref{bottom_set}, we have $G_k\subseteq A_{j_0}$ for $\ceil*{\frac{rj_0}{m}}\leq k\leq d_{j_0}$.
	Next we will show that  $$A_{j_0}\subseteq \bigcup\limits_{k=\ceil*{\frac{rj_0}{m}}}^{d_{j_0}} G_k.$$

	For each $1\leq i\leq s$ and $\ceil*{\frac{rj_0}{m}}\leq k\leq d_{j_0}$, it is obvious that $1-m\leq 1-j_0\leq j_{\sigma(i)}-j_0\leq m-\ceil*{\frac{(k+i)m}{r}}\leq m-1$.
	Then by Lemma \ref{equivalent}, we have
	\begin{align}\label{equivalent6}
	\floor*{\frac{(j_{\sigma(i)}-j_0)r}{m}}\leq r-(k+i)-1 \text{ if and only if } j_{\sigma(i)}-j_0 \leq t_{j_0,i,k} - j_0.
	\end{align}

	For any $(mk_0-rj_0,a_1,\cdots,a_s)\in A_{j_0}$,
	by Propositions \ref{subpuregap} and \ref{novel_gap}, we can write $a_i = mk_i+j_i$ with $0\leq k_i\leq r-2-\floor*{\frac{r}{m}}$ and  $1\leq j_i\leq m-\ceil*{\frac{(k_i+1)m}{r}}$ for $1\leq i\leq s$.
	There exists some $\sigma\in S$ such that $j_{\sigma(1)}\geq j_{\sigma(2)}\geq \cdots\geq j_{\sigma(s)}$.
	Let $\lambda^\prime\in \Z$ be the inverse of $\lambda$ modulo $m$ and let $r^\prime\in\Z$ be the inverse of $-r\la$ modulo $m$.
	
	For each $1\leq l\leq s$, by Lemma \ref{sum_mod}, we have 
	\begin{align*}
		&~\floor*{\frac{mk_0-rj_0+a_{\sigma(l)}r\la\la^\prime}{m}}+\sum_{i=1}^{s}\left\lfloor\frac{a_{\sigma(i)}-a_{\sigma(l)}\la\la^\prime}{m}\right\rfloor+ \sum_{i=s+1}^{r}\left\lfloor\frac{-a_{\sigma(l)}\la\la^\prime}{m}\right\rfloor\\
	  = &~k_0+\sum_{i=1}^{s}k_{\sigma(i)} + \floor*{\frac{(j_{\sigma(l)}-j_0)r}{m}}+ \sum_{i=1}^{s}\left\lfloor\frac{j_{\sigma(i)}-j_{\sigma(l)}}{m}\right\rfloor+\sum_{i=s+1}^{r}\left\lfloor\frac{-j_{\sigma(i)}}{m}\right\rfloor\\
	  = &~k + \floor*{\frac{(j_{\sigma(l)}-j_0)r}{m}}+ \sum_{i=l+1}^{s}\left\lfloor\frac{j_{\sigma(i)}-j_{\sigma(l)}}{m}\right\rfloor-r+s.
	\end{align*}
	Let $k = \sum_{i=0}^{s}k_i$. When $l = s$, by Lemma \ref{simple_criterion}, we must have 
	$$k+\floor*{\frac{(j_{\sigma(s)}-j_0)r}{m}}-r+s\leq -1,$$
	which is equivalent to $k\leq r-\floor*{\frac{(j_{\sigma(s)}-j_0)r}{m}}-s-1$.
	Since $j_{\sigma(s)}\geq 1$, we obtain that 
	$$k\leq r-\floor*{\frac{(1-j_0)r}{m}}-s-1 = d_{j_0}.$$
	Note that $k\geq \ceil*{\frac{rj_0}{m}}$, $m \nmid rj_0$ and $m\nmid r(j_0-1)$. we have
	\begin{align*}
		s\leq r-k-\floor*{\frac{(1-j_0)r}{m}}-1&\leq r-\ceil*{\frac{rj_0}{m}}+\ceil*{\frac{r(j_0-1)}{m}}-1\\
		& = r-\left(\floor*{\frac{rj_0}{m}} - \floor*{\frac{r(j_0-1)}{m}}\right)-1.
	\end{align*}
	By Lemma \ref{floor_minus}, we obtain that $s\leq r-\floor*{\frac{r}{m}}-1$.
	Thus we have $G_0(Q_\infty,Q_1,\cdots,Q_s) = \varnothing$ for $r-\floor*{\frac{r}{m}}\leq s\leq r$.
	
	Again by Lemma \ref{simple_criterion}, we have
	$$k+\floor*{\frac{(j_{\sigma(l)}-j_0)r}{m}}+\sum_{i=l+1}^{s}\left\lfloor\frac{j_{\sigma(i)}-j_{\sigma(l)}}{m}\right\rfloor-r+s\leq -1,$$
	which is equivalent to 
	$$\floor*{\frac{(j_{\sigma(l)}-j_0)r}{m}}\leq r-k-s-1-\sum_{i=l+1}^{s}\left\lfloor\frac{j_{\sigma(i)}-j_{\sigma(l)}}{m}\right\rfloor.$$
	Since $\sum_{i=l+1}^{s}\left\lfloor\frac{j_{\sigma(i)}-j_{\sigma(l)}}{m}\right\rfloor\geq -(s-l)$, we have 
	$$\floor*{\frac{(j_{\sigma(l)}-j_0)r}{m}}\leq r-k-s-1+(s-l) = r-k-l-1.$$
	It follows from (\ref{equivalent6}) that $(mk_0-rj_0,a_1,\cdots,a_s)\in G_k$, and then 
	$$A_{j_0}\subseteq \bigcup\limits_{k=\ceil*{\frac{rj_0}{m}}}^{d_{j_0}} G_k.$$
	The proof is complete.
\end{proof}

\begin{corollary}
	Suppose that $1\leq s\leq r$.
	If $1\leq s\leq r-\floor*{\frac{r}{m}}-1$,
	for each  $1\leq j_0\leq r-1-\floor*{\frac{m}{r}}$, let 
	$$t_{j_0,i} = \left\{\begin{array}{cc}
		m-\ceil*{\frac{\left(\ceil*{\frac{rj_0}{m}}+i\right)m}{r}}+j_0,~&\text{if}~ \ceil*{\frac{rj_0}{m}}+i \neq r,\\
		m-\ceil*{\frac{\left(\ceil*{\frac{rj_0}{m}}+i\right)m}{r}}+j_0-1,~&\text{if}~ \ceil*{\frac{rj_0}{m}}+i = r,\\
	\end{array}\right.$$ for all $1\leq i\leq s$, and
	$G_{j_0} = \{(m\ceil*{\frac{rj_0}{m}}-rj_0,j_1,\cdots,j_s)~|~ 1\leq j_{\sigma(i)}\leq t_{j_0,i} \text{ for } 1\leq i\leq s,~\sigma \in S_s\}$,
	where $S_s$ is the set of permutations of the set $\{1,\cdots,s\}$.
	Then
	\begin{align*}
		\overline{G}_0(Q_\infty,Q_1,\cdots,Q_s) = \bigcup_{j_0 = 1}^{m-1-\floor*{\frac{m}{r}}}G_{j_0}.
	\end{align*}

	If $r-\floor*{\frac{r}{m}}\leq s\leq r$, then $\overline{G}_0(Q_\infty,Q_1,\cdots,Q_s) = \varnothing$.
\end{corollary}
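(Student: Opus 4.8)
The plan is to read off $\overline{G}_0(Q_\infty,Q_1,\cdots,Q_s)$ directly from the explicit description of $G_0(Q_\infty,Q_1,\cdots,Q_s)$ supplied by Proposition \ref{precise_puregap3}, by intersecting that description with the cube $[1,m-1]^{s+1}$, following the same pattern used to deduce Corollary \ref{precise_bottom1} from Proposition \ref{precise_puregap1}. The case $r-\floor*{\frac{r}{m}}\le s\le r$ is immediate: Proposition \ref{precise_puregap3} gives $G_0(Q_\infty,Q_1,\cdots,Q_s)=\varnothing$, and hence $\overline{G}_0(Q_\infty,Q_1,\cdots,Q_s)=\varnothing$ too. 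So I may assume $1\le s\le r-\floor*{\frac{r}{m}}-1$ and set $c=m-1-\floor*{\frac{m}{r}}$.

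Next I would fix $j_0$ with $1\le j_0\le c$ and determine which elements of the inner union $\bigcup_{k}G_k$ of Proposition \ref{precise_puregap3} fall into $[1,m-1]^{s+1}$. A generic element reads $(mk_0-rj_0,mk_1+j_1,\cdots,mk_s+j_s)$ with $k_0\ge\ceil*{\frac{rj_0}{m}}$, $k_i\ge 0$ for $1\le i\le s$, and $\sum_{i=0}^s k_i=k$. Since $\gcd(r\la,m)=1$ forces $\gcd(r,m)=1$, one has $m\nmid rj_0$, so $\ceil*{\frac{rj_0}{m}}=\floor*{\frac{rj_0}{m}}+1$ and $mk_0-rj_0=\bigl(m-(rj_0\bmod m)\bigr)+m\bigl(k_0-\ceil*{\frac{rj_0}{m}}\bigr)$. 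This value lies in $[1,m-1]$ precisely when $k_0=\ceil*{\frac{rj_0}{m}}$, where it equals $m\ceil*{\frac{rj_0}{m}}-rj_0$. For $1\le i\le s$, the requirement $1\le mk_i+j_i\le m-1$ together with $j_i\ge 1$ forces $k_i=0$. Substituting $k_0=\ceil*{\frac{rj_0}{m}}$ and $k_i=0$ into $\sum_{i=0}^s k_i=k$ then pins the surviving index to $k=\ceil*{\frac{rj_0}{m}}$.

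With $k=\ceil*{\frac{rj_0}{m}}$, the truncation bounds $t_{j_0,i,k}$ of Proposition \ref{precise_puregap3} specialise to the quantities $t_{j_0,i}$ of the present statement, so the bottom slice of $\bigcup_k G_k$ for this fixed $j_0$ is exactly $G_{j_0}$. Taking the union over $1\le j_0\le c$ and using $\overline{G}_0(Q_\infty,Q_1,\cdots,Q_s)=G_0(Q_\infty,Q_1,\cdots,Q_s)\cap[1,m-1]^{s+1}$ then yields the asserted identity. The one point demanding care is the bookkeeping at the top of the range: the index $k=\ceil*{\frac{rj_0}{m}}$ is present in Proposition \ref{precise_puregap3} only when $\ceil*{\frac{rj_0}{m}}\le d_{j_0}$, an inequality that can fail for a few $j_0$ (and only when $s=r-\floor*{\frac{r}{m}}-1$ with $rj_0\bmod m$ small). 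For exactly those $j_0$ I would verify directly that the bound $t_{j_0,s}$ drops below $1$, so that $G_{j_0}=\varnothing$ and such indices contribute nothing to either side of the claimed equality; once this is settled, the remaining identifications are purely notational.
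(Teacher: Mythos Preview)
Your approach is essentially the same as the paper's: both intersect the description of $G_0(Q_\infty,Q_1,\cdots,Q_s)$ from Proposition \ref{precise_puregap3} with $[1,m-1]^{s+1}$, observe that this forces $k_0=\ceil*{\frac{rj_0}{m}}$ and $k_i=0$ for $1\le i\le s$, and hence only the index $k=\ceil*{\frac{rj_0}{m}}$ survives, where $t_{j_0,i,k}$ specialises to $t_{j_0,i}$.

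Your treatment is in fact more careful than the paper's on the boundary issue you flag. The paper simply writes $\bigcup_{j_0}\bigcup_k G_{j_0,k}=\bigcup_{j_0} G_{j_0,\ceil{rj_0/m}}$ without checking that $\ceil*{\frac{rj_0}{m}}\le d_{j_0}$. Your proposed verification goes through cleanly: by Lemma \ref{equivalent} (applied with $j=1-j_0$ and $k'=\ceil*{\frac{rj_0}{m}}+s$, which lies in $[1,2r-1]$), one has $t_{j_0,s}\ge 1$ if and only if $\floor*{\frac{r(1-j_0)}{m}}\le r-\ceil*{\frac{rj_0}{m}}-s-1$, i.e.\ if and only if $\ceil*{\frac{rj_0}{m}}\le d_{j_0}$. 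Hence whenever the index $k=\ceil*{\frac{rj_0}{m}}$ is absent from the range in Proposition \ref{precise_puregap3}, the set $G_{j_0}$ is empty, exactly as you claim.
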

\begin{proof}
	If $1\leq s\leq r-\floor*{\frac{r}{m}}-1$,
	for each  $1\leq j_0\leq m-1-\ceil*{\frac{m}{r}}$ and $\ceil*{\frac{rj_0}{m}}\leq k\leq r-1-\sum_{i=1}^{s}\floor*{\frac{1-j_0}{m}}-s$, let 
	$$t_{j_0,i,k} = \left\{\begin{array}{cc}
		m-\ceil*{\frac{(k+i)m}{r}}+j_0,~&\text{if}~ k+i \neq r,\\
		m-\ceil*{\frac{(k+i)m}{r}}+j_0-1,~&\text{if}~ k+i = r,\\
	\end{array}\right.$$ for all $1\leq i\leq s$, and $G_{j_0,k} = \{(m\ceil*{\frac{rj_0}{m}}-rj_0,j_1,\cdots,j_s)~|~1\leq j_{\sigma(i)}\leq t_{j_0,i,k} \text{ for } 1\leq i\leq s,~\sigma \in S_s\}$.
	According to Proposition \ref{precise_puregap3}, we have
	\begin{align*}
		\overline{G}_0(Q_\infty,Q_1,\cdots,Q_s) &= \bigcup_{j_0 = 1}^{m-1-\ceil*{\frac{m}{r}}}\bigcup_{k=\ceil*{\frac{rj_0}{m}}}^{r-\floor*{\frac{(1-j_0)r}{m}}-s-1}G_{j_0,k} \\
		&= \bigcup_{j_0 = 1}^{m-1-\floor*{\frac{m}{r}}} G_{j_0,\ceil*{\frac{rj_0}{m}}}=\bigcup_{j_0 = 1}^{m-1-\floor*{\frac{m}{r}}} G_{j_0}.
	\end{align*}
	If $r-\floor*{\frac{r}{m}}\leq s\leq r$, then $\overline{G}_0(Q_\infty,Q_1,\cdots,Q_s) = \varnothing$ since $G_0(Q_\infty,Q_1,\cdots,Q_s) = \varnothing$.
\end{proof}

Pure gaps obtained in Proposition \ref{precisefamily3} are not consecutive in the position of $Q_\infty$.
However, in the case where $m  = ur+1$ for some $u\geq 1$, we are able to obtain a family of pure gaps which are also consecutive in the position of $Q_\infty$.
The following proposition generalizes the result in \cite[Proposition 4.5]{bartoliAlgebraicGeometricCodes2018}.
\begin{proposition}\label{precisefamily4}
	Suppose that $m = ur+1$ for some $u\geq 1$. Let $1\leq s\leq r-2$ and $0\leq c\leq u(r-s-1)-1$.
	Let $t_i = u(r-i)-c-1$ for $1\leq i\leq s$.
	Then
	\begin{align*}
	\{(rc+k,a_1,\cdots,a_s)~|~r-s-1\leq k\leq r-1,~ 1\leq a_i\leq t_i &\text{ for } 1\leq i\leq s\}\\
	&\subseteq G_0(Q_\infty,Q_1,\cdots, Q_s).
	\end{align*}
\end{proposition}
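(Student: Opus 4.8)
The plan is to follow the same inductive scheme used for Propositions \ref{precisefamily1}, \ref{precisefamily2} and \ref{precisefamily3}: keep $Q_\infty$ in the first coordinate, and adjoin the coordinates at $Q_1,\dots,Q_s$ one at a time, each time checking the single inequality furnished by Theorem \ref{find_another_pure_gap}. Set $Q_0=Q_\infty$ and $\la_0=-r\la$, let $\la^\prime\in\Z$ be the inverse of $\la$ modulo $m$, and exploit the hypothesis $m=ur+1$, which gives $\gcd(r\la,m)=1$ and $-u\equiv r^{-1}\pmod m$. Fix $k$ with $r-s-1\le k\le r-1$. Since the bounds $t_i=u(r-i)-c-1$ do not depend on $k$, once the containment is proved for each such $k$ the first coordinate $rc+k$ sweeps the whole interval $[\,rc+r-s-1,\ rc+r-1\,]$, which is exactly what makes the family consecutive at $Q_\infty$.

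First I would verify the base case, namely that $rc+k\in G(Q_\infty)=G_0(Q_\infty)$ for every $k$ in the stated range. Writing $rc+k=mk-r(uk-c)$ and using the description $G(Q_\infty)=\{\,mk^\prime-rj\mid 1\le k^\prime\le r-1,\ 1\le j\le k^\prime u\,\}$ from Proposition \ref{novel_gap}, it suffices to check $1\le k\le r-1$ and $1\le uk-c\le ku$. The first is immediate from $1\le r-s-1\le k\le r-1$ (here $s\le r-2$), and the second follows since $c\ge0$ gives $uk-c\le ku$, while $c\le u(r-s-1)-1\le uk-1$ gives $uk-c\ge1$.

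The core is the inductive step. Assume $(rc+k,a_1,\dots,a_{l-1})\in G_0(Q_\infty,Q_1,\dots,Q_{l-1})$; I want $(rc+k,a_1,\dots,a_l)\in G_0(Q_\infty,Q_1,\dots,Q_l)$ for all $1\le a_l\le t_l$. By Theorem \ref{find_another_pure_gap} this amounts to verifying, for each such $a_l$, one inequality; applying Lemma \ref{sum_mod} together with $\la\la^\prime\equiv1$ and $-r\la\la^\prime\equiv-r\pmod m$ reduces it to
\[
\floor*{\frac{r(c+a_l)+k}{m}}+\sum_{i=1}^{l}\floor*{\frac{a_i-a_l}{m}}+\sum_{i=l+1}^{r}\floor*{\frac{-a_l}{m}}\le-1 .
\]
For the first term, $a_l\le t_l$ gives $c+a_l\le u(r-l)-1$, hence $r(c+a_l)+k\le ru(r-l)-r+(r-1)=(m-1)(r-l)-1<m(r-l)$, so that term is at most $r-l-1$. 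Each middle term is $\le0$ because $0<a_i\le t_i\le m-2$ forces $\floor*{\frac{a_i-a_l}{m}}\in\{-1,0\}$, and the last sum equals $-(r-l)$ since $1\le a_l\le m-2$. Adding these gives at most $(r-l-1)+0-(r-l)=-1$, as required. Induction on $l$ from $1$ to $s$ then yields the claim for the fixed $k$, and taking the union over $k$ completes the proof.

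The main obstacle is precisely the estimate of the first floor term: it is where the three ingredients $t_l=u(r-l)-c-1$, $c\le u(r-s-1)-1$, and $m=ur+1$ must be combined so that the numerator lands strictly below $m(r-l)$. One must also record the side fact $t_l\le u(r-1)-1=m-u-2\le m-2$, which both legitimises the increment clause of Theorem \ref{find_another_pure_gap} (requiring the running value to be $\le m-2$) and guarantees $\floor*{\frac{-a_l}{m}}=-1$. An alternative, shorter route would be to deduce the statement from Proposition \ref{precisefamily3}: for fixed $k$ one sets $j_0=uk-c$, checks that the resulting bounds $t_{j_0,i,k}$ satisfy $t_i\le t_{j_0,i,k}$ (with equality exactly when $k+i\ge r$), and takes the identity permutation; the inductive argument above is, however, self-contained and matches the pattern of the preceding propositions.
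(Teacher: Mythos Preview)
Your proof is correct. However, the paper takes precisely the ``alternative, shorter route'' that you sketch at the end: it fixes $k$, sets $j_0=uk-c$ so that $rc+k=mk-rj_0$, computes
\[
m-\ceil*{\frac{(k+i)m}{r}}+j_0
= ur+1-u(k+i)-\ceil*{\frac{k+i}{r}}+uk-c
= u(r-i)-\ceil*{\frac{k+i}{r}}-c+1,
\]
and then observes that $t_i=u(r-i)-c-1$ is bounded above by this quantity (or by this quantity minus one when $k+i=r$), so that Proposition~\ref{precisefamily3} applied with $j=j_0$ immediately gives the containment for each fixed $k$. Your primary argument instead reruns the inductive machinery of Theorem~\ref{find_another_pure_gap} from scratch, bounding the floor $\floor*{\frac{r(c+a_l)+k}{m}}$ directly via $m=ur+1$; this is self-contained and parallels the proofs of Propositions~\ref{precisefamily1}--\ref{precisefamily3}, but it duplicates work already packaged in Proposition~\ref{precisefamily3}. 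The paper's route is shorter and makes the logical dependence clearer, while yours has the virtue of showing that the special form $m=ur+1$ gives the required floor bound in one line without invoking Lemma~\ref{equivalent}.
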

\begin{proof}
	Suppose that $m = ur+1$ for some $u\geq 1$.
	Note that $mk-rj = (uk-j)r+k$.
	Then by Proposition \ref{special_gap}, we have 
	$G(Q_\infty) = \{(uk-j)r+k~|~1\leq k\leq r-1,~1\leq j\leq uk\}$.
	Let $1\leq s\leq r-2$ and $0\leq c\leq u(r-s-1)-1$.
	For each $r-s-1\leq k\leq r-1$, let $j = uk-c$.
	Since $m-\ceil*{\frac{(k+i)m}{r}}+j= ur+1 - u(k+i)-\ceil*{\frac{k+i}{r}}+j = u(r-i)-\ceil*{\frac{k+i}{r}}-c+1$ for $1\leq i\leq s$, we have
	$$t_i \leq \left\{\begin{array}{cc}
		m-\ceil*{\frac{(k+i)m}{r}}+j,~&\text{if}~ k+i \neq r,\\
		m-\ceil*{\frac{(k+i)m}{r}}+j-1,~&\text{if}~ k+i = r,\\
	\end{array}\right.$$
	Then by Proposition \ref{precisefamily3}, we have 
	$$\{(rc+k,a_1,\cdots,a_s)~|~ 1\leq a_i\leq t_i \text{ for } 1\leq i\leq s\}\subseteq G_0(Q_\infty,Q_1,\cdots, Q_s).$$
	The proof is complete.
\end{proof}

\section{Multi-point codes from Kummer extensions}\label{sec5}

In this section, we provide several examples of multi-point differential AG codes to llustrate our results.
All the codes in our examples have good parameters according to MinT's Tables \cite{mintOnlineDatabaseOptimal}.
Firstly, we consider a function field studied in \cite[Example 2]{garzonBasesRiemannRoch2022}, where the multiplicities are not equal.
\begin{example}\label{new_record_example}
	The function field $F = \mathbb{F}_{25}(x,y)$ defined by $y^8  = (x+1)^3(x^2+x+2)^7$ has genus $g = 7$ and the number of rational places $N = 76$.
	Let $\alpha_1 = -1$, $\alpha_2 = \beta$, where $\beta^2+\beta+2 = 0$. Using Algorithm \ref{alg3}, we have
	$$G_0(Q_1,Q_2) =  \left\{\begin{array}{l}
		(1,1),(1,2),(1,3),(1,4),(2,1),(2,2),(2,3),(2,4),(2,9),\\
		(4,1),(4,2),(4,3),(5,1),(5,2),(5,3),(7,1),(7,2),(10,1)
	\end{array}\right\}.$$
	Take $G_1 = 3Q_1+17Q_2$ and $G_2 =  19Q_1+Q_2$. Consider the divisor $D$ as the sum of $n$ rational places with exceptions of $Q_1$ and $Q_2$, where $69\leq n\leq 74$.
	We obtain the codes $C_\Omega(D,G_1)$ and  $C_\Omega(D,G_2)$ with parameters $[n,n-14, \geq 10]$, which both exceed the minimum distance according to MinT's Tables \cite{mintOnlineDatabaseOptimal}.
	
	In \cite[Example 2]{garzonBasesRiemannRoch2022}, the authors also present codes with parameters $[n,n-14, \geq 10]$ for $69\leq n\leq 73$. 
	Here, the length of our codes can attain 74. We obtain a new $[74, 60, \geq 10]$-code with the best known parameters.
\end{example}

Subsequently, we construct three families of multi-point differential AG codes based on the findings obtained in the previous section.
The following three propositions can be easily derived from Theorem \ref{construction} based on Propositons \ref{precisefamily1}, \ref{precisefamily2} and \ref{precisefamily4}, respectively.
We only prove Proposition \ref{construction1}. The proofs of Propositons \ref{construction2} and \ref{construction3} are similar, and they will be omitted.

\begin{proposition}\label{construction1}
	Suppose that the function field $F = K(x,y)$ defined by (\ref{equation2}) has genus $g$ and the number of rational places $N$.
	Let $2\leq s\leq r-\floor*{\frac{r}{m}}-1$ and $0\leq k\leq r-\floor*{\frac{r}{m}}-1-s$.
	Let $Q_1,\cdots,Q_s$ be $s$ totally ramified places distinct to $Q_\infty$.
	Define the divisors
	$$ G = \left((2k+1)m - \ceil*{\frac{(k+1)m}{r}}\right)Q_1+\sum_{i=2}^s \left(m-\ceil*{\frac{(k+i)m}{r}}\right)Q_i \text{,~and}$$
	$$ D = \sum_{\substack{P\in \P_1(F)\\P\notin\{Q_1,\cdots,Q_s\}}}P.$$
	If $2g-2<\deg(G)<n\leq N-s$, then the differential AG code $C_{\Omega}(D,G)$ has parameters $[N-s,k_\Omega,d_\Omega]$, where
	$$k_\Omega =n+g-1-\deg G{,~~}d_\Omega \geq \deg G-(2g-2) + sm - \sum_{i=1}^s \ceil*{\frac{(k+i)m}{r}}.$$
\end{proposition}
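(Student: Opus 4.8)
The plan is to realize $G$ as the divisor $\sum_{i=1}^s(a_i+b_i-1)Q_i$ attached to a suitable pair of pure gaps and then to invoke Theorem~\ref{construction} directly. First I would select, taking $\sigma=\mathrm{id}$ in Proposition~\ref{precisefamily1}, the two boundary tuples
$$\mathbf{a}=(mk+1,1,\ldots,1),\qquad \mathbf{b}=\left(mk+t_1,t_2,\ldots,t_s\right),$$
where $t_i=m-\ceil*{\frac{(k+i)m}{r}}$ for $1\leq i\leq s$. Since $0\leq k\leq r-\floor*{\frac{r}{m}}-1-s$ forces $k+i\leq r-\floor*{\frac{r}{m}}-1<r$ for each $1\leq i\leq s$, Lemma~\ref{equivalent} yields $t_i\geq 1$, so $a_i\leq b_i$ holds componentwise.

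The key structural input is that every intermediate tuple is a pure gap. Indeed, any $\mathbf{c}=(c_1,\ldots,c_s)$ with $a_i\leq c_i\leq b_i$ has the form $(mk+j_1,j_2,\ldots,j_s)$ with $1\leq j_i\leq t_i$, which is precisely the family shown to lie in $G_0(Q_1,\ldots,Q_s)$ by Proposition~\ref{precisefamily1}. Hence the hypotheses of Theorem~\ref{construction} are satisfied. I would also remark that $\supp(D)\cap\supp(G)=\varnothing$, since $\supp(G)\subseteq\{Q_1,\ldots,Q_s\}$ while $D$ is the sum of the remaining rational places, which in particular makes the places in the support of $D$ pairwise distinct from $Q_1,\ldots,Q_s$ as Theorem~\ref{construction} requires.

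Next I would verify that $G=\sum_{i=1}^s(a_i+b_i-1)Q_i$ agrees with the stated divisor: the coefficient of $Q_1$ is $(mk+1)+(mk+t_1)-1=(2k+1)m-\ceil*{\frac{(k+1)m}{r}}$, and for $2\leq i\leq s$ the coefficient is $1+t_i-1=m-\ceil*{\frac{(k+i)m}{r}}$. The minimum-distance estimate of Theorem~\ref{construction} then reads $d_\Omega\geq\deg(G)-(2g-2)+s+\sum_{i=1}^s(b_i-a_i)$; since $b_i-a_i=t_i-1$ for every $i$, the correction term collapses to $s+\sum_{i=1}^s(t_i-1)=\sum_{i=1}^s t_i=sm-\sum_{i=1}^s\ceil*{\frac{(k+i)m}{r}}$, giving the claimed bound. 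Finally, the dimension $k_\Omega=n+g-1-\deg(G)$ follows immediately from the range $2g-2<\deg(G)<n$ recorded in Section~\ref{sec2}.

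No genuine obstacle remains once Proposition~\ref{precisefamily1} is in hand; the argument is a bookkeeping application of Theorem~\ref{construction}. The only points demanding care are confirming $t_i\geq 1$ (equivalently $k+i<r$) so that $\mathbf{a}\leq\mathbf{b}$ and the gap family is nonempty, and checking that the two coefficient computations together with the telescoping of $\sum_{i=1}^s(b_i-a_i)$ reproduce the stated $G$ and $d_\Omega$ exactly.
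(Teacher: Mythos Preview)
Your proposal is correct and follows essentially the same approach as the paper: choose $\mathbf{a}=(mk+1,1,\ldots,1)$ and $\mathbf{b}=(mk+t_1,t_2,\ldots,t_s)$, invoke Proposition~\ref{precisefamily1} to guarantee that all intermediate tuples are pure gaps, and apply Theorem~\ref{construction} to read off the parameters. If anything, you include slightly more justification than the paper (the verification that $t_i\geq 1$ via Lemma~\ref{equivalent} and the remark on $\supp(D)\cap\supp(G)=\varnothing$), but the argument is identical in substance.
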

\begin{proof}
	By Proposition \ref{precisefamily1}, we know that
	$$\left\{(mk+j_1,\cdots,j_s)~|~ 1\leq j_{i}\leq m-\ceil*{\frac{(k+i)m}{r}} \text{ for } 1\leq i\leq s\right\}\subseteq G_0(Q_{1},\cdots, Q_{s}).$$
	Let $(a_1,a_2,\cdots,a_s) = (mk+1,1,\cdots,1)$ and $(b_1,b_2,\cdots,b_s) = (mk+m-\ceil*{\frac{(k+1)m}{r}},m-\ceil*{\frac{(k+2)m}{r}},\cdots,m-\ceil*{\frac{(k+s)}{r}})$.
	Define the divisors 
	$$G =\sum_{i=1}^{s}(a_i+b_i-1)Q_i =  \left((2k+1)m - \ceil*{\frac{(k+1)m}{r}}\right)Q_1+\sum_{i=2}^s \left(m-\ceil*{\frac{(k+i)m}{r}}\right)Q_i,$$
	$$\text{ and }D = \sum_{\substack{P\in \P_1(F)\\P\notin\{Q_1,\cdots,Q_s\}}}P.$$
	According to Theorem \ref{construction}, if $2g-2<\deg(G)<n\leq N-s$, we have the differential AG code $C_{\Omega}(D,G)$ has parameters $[n,k_\Omega,d_\Omega]$, where
	$k_\Omega = n+g-1-\deg G$, and 
	$$d_\Omega\geq \deg G-(2g-2) + s + \sum_{i=1}^s(b_i-a_i) = \deg G-(2g-2) + sm - \sum_{i=1}^s \ceil*{\frac{(k+i)m}{r}}.$$
\end{proof}
\begin{proposition}\label{construction2}
	Suppose that the function field $F = K(x,y)$ defined by (\ref{equation2}) has genus $g$ and the number of rational places $N$.
	Suppose that $mv=r+1$ for some $v\geq 1$.
	Let $1\leq s \leq r-1-v$ and $0\leq k\leq r-v-s-1$.
	Let $Q_1,\cdots,Q_s$ be $s$ totally ramified places distinct to $Q_\infty$.
	Define the divisors
	$$G = \left((2k+1)m-\ceil*{\frac{(k+1)m}{r}}\right)Q_\infty + \sum_{i=1}^s \left(m-\ceil*{\frac{(k+i+1)m}{r}}\right) Q_i,\text{~and}$$
	$$D = \sum_{\substack{P\in \P_1(F)\\P\notin\{Q_\infty,Q_1,\cdots,Q_s\}}}P.$$
	If $2g-2<\deg(G)<n\leq N-s-1$, then the differential AG code $C_{\Omega}(D,G)$ has parameters $[n,k_\Omega,d_\Omega]$, where
	$$k_\Omega = n+g-1-\deg G{,~~}d_\Omega \geq \deg G-(2g-2) + (s+1)m -\sum_{i=0}^s\ceil*{\frac{(k+i+1)m}{r}}.$$
\end{proposition}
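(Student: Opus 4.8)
The plan is to mirror the proof of Proposition \ref{construction1}, replacing the consecutive family of pure gaps coming from Proposition \ref{precisefamily1} with the one from Proposition \ref{precisefamily2}, and then bookkeeping the extra place $Q_\infty$. Writing $Q_0 = Q_\infty$, the hypotheses here ($1\leq s\leq r-1-v$ and $0\leq k\leq r-v-s-1$) are exactly those of Proposition \ref{precisefamily2}, so that proposition supplies
\begin{equation*}
\left\{(mk+j_0,j_1,\cdots,j_s)~\Big|~1\leq j_i\leq m-\ceil*{\frac{(k+i+1)m}{r}}\text{ for }0\leq i\leq s\right\}\subseteq G_0(Q_\infty,Q_1,\cdots,Q_s).
\end{equation*}
This is precisely a rectangular box of pure gaps in $\N^{s+1}$, which is the input that Theorem \ref{construction} demands.

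First I would select the two corner pure gaps $(a_0,a_1,\cdots,a_s) = (mk+1,1,\cdots,1)$ and $(b_0,b_1,\cdots,b_s) = \left(mk+m-\ceil*{\frac{(k+1)m}{r}},\,m-\ceil*{\frac{(k+2)m}{r}},\cdots,m-\ceil*{\frac{(k+s+1)m}{r}}\right)$. Both lie in the box above, so $a_i\leq b_i$ for every $i$, and every intermediate tuple is again a pure gap. Next I would set $G = \sum_{i=0}^s (a_i+b_i-1)Q_i$ with $Q_0 = Q_\infty$ and verify that its coefficient at $Q_\infty$ equals $a_0+b_0-1 = (2k+1)m-\ceil*{\frac{(k+1)m}{r}}$ and its coefficient at each $Q_i$ for $1\leq i\leq s$ equals $a_i+b_i-1 = m-\ceil*{\frac{(k+i+1)m}{r}}$, thereby recovering the divisor $G$ in the statement.

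With these choices fixed, I would invoke Theorem \ref{construction} applied to the $s+1$ distinct places $Q_\infty,Q_1,\cdots,Q_s$. The one point demanding attention is that the symbol $s$ appearing in that theorem must be read as $s+1$ here, since we now work with $s+1$ places; this yields
\begin{equation*}
d_\Omega \geq \deg(G)-(2g-2)+(s+1)+\sum_{i=0}^s (b_i-a_i).
\end{equation*}
A direct computation gives $\sum_{i=0}^s (b_i-a_i) = (s+1)m-\sum_{i=0}^s\ceil*{\frac{(k+i+1)m}{r}}-(s+1)$, so the two copies of $(s+1)$ cancel and the claimed lower bound on $d_\Omega$ follows. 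For the dimension, since $2g-2<\deg(G)<n$ by hypothesis, the dimension formula recalled in the preliminaries immediately gives $k_\Omega = n+g-1-\deg(G)$.

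I do not expect a genuine obstacle, as the argument is structurally identical to Proposition \ref{construction1}; the only steps requiring care are the index shift (replacing $s$ by $s+1$ in Theorem \ref{construction} so as to account for $Q_\infty$) and the routine arithmetic that both confirms the coefficients of $G$ and produces the cancellation in the minimum-distance estimate.
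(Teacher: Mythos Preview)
Your proposal is correct and is exactly the argument the paper intends: it explicitly says the proofs of Propositions \ref{construction2} and \ref{construction3} are similar to that of Proposition \ref{construction1} and omits them. Your choice of the corner pure gaps from Proposition \ref{precisefamily2}, the verification of the coefficients of $G$, the substitution $s\mapsto s+1$ when invoking Theorem \ref{construction}, and the cancellation yielding the minimum-distance bound are all precisely what the omitted proof would contain.
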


\begin{proposition}\label{construction3}
	Suppose that the function field $F = K(x,y)$ defined by (\ref{equation2}) has genus $g$ and the number of rational places $N$.
	Suppose that $m=ur+1$ for some $u\geq 1$.
	Let $1\leq s\leq r-2$ and $0\leq c\leq u(r-s-1)-1$.
	Let $Q_1,\cdots,Q_s$ be  $s$ totally ramified places distinct to $Q_\infty$.
	Define the divisors
	$$G = \left((2c+2)r-s-3\right)Q_\infty + \sum_{i=1}^s \left(u(r-i)-c-1\right) Q_i,\text{~and}$$
	$$D = \sum_{\substack{P\in \P_1(F)\\P\notin\{Q_\infty,Q_1,\cdots,Q_s\}}}P.$$
	If $2g-2<\deg(G)<n\leq N-s-1$, then the differential AG code $C_{\Omega}(D,G)$ has parameters $[n,k_\Omega,d_\Omega]$, where
	$$k_\Omega = n+g-1-\deg G,~~d_\Omega \geq \deg G-(2g-2)+1+(ur-c)s-\frac{s(s+1)u}{2}.$$
\end{proposition}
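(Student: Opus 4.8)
The plan is to mirror the proof of Proposition \ref{construction1}, replacing the family of consecutive pure gaps coming from Proposition \ref{precisefamily1} by the one furnished by Proposition \ref{precisefamily4}, and then feeding two opposite corners of that family into Theorem \ref{construction}. Since $m = ur+1$, Proposition \ref{precisefamily4} tells us that for $1\le s\le r-2$ and $0\le c\le u(r-s-1)-1$ the set
$$\{(rc+k,a_1,\ldots,a_s)\mid r-s-1\le k\le r-1,\ 1\le a_i\le t_i \text{ for } 1\le i\le s\},$$
with $t_i=u(r-i)-c-1$, is contained in $G_0(Q_\infty,Q_1,\ldots,Q_s)$. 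The first thing I would record is that this set is a Cartesian product of intervals of consecutive integers, so it satisfies the hypothesis of Theorem \ref{construction}: every tuple lying coordinatewise between its two extreme corners is again a pure gap.

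Writing $Q_0=Q_\infty$, I would take the lower corner $(a_0,a_1,\ldots,a_s)=(rc+r-s-1,1,\ldots,1)$ and the upper corner $(b_0,b_1,\ldots,b_s)=(rc+r-1,t_1,\ldots,t_s)$. Both lie in the box above, and $a_i\le b_i$ for every $i$: for $i=0$ because $s\ge 1$, and for $1\le i\le s$ because $c\le u(r-s-1)-1$ forces $t_s\ge 1$ and hence every $t_i\ge 1$. I would then check that the divisor $G=\sum_{i=0}^s(a_i+b_i-1)Q_i$ produced by Theorem \ref{construction} coincides with the $G$ of the statement: the $Q_\infty$-coefficient is $(rc+r-s-1)+(rc+r-1)-1=(2c+2)r-s-3$, while each $Q_i$-coefficient is $1+(u(r-i)-c-1)-1=u(r-i)-c-1$, exactly as asserted.

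It then remains to evaluate the distance bound. Applying Theorem \ref{construction} to the $s+1$ places $Q_\infty,Q_1,\ldots,Q_s$ gives $d_\Omega\ge \deg(G)-(2g-2)+(s+1)+\sum_{i=0}^s(b_i-a_i)$. The $i=0$ term contributes $b_0-a_0=s$, and the remaining terms contribute $\sum_{i=1}^s(u(r-i)-c-2)=usr-\tfrac{us(s+1)}{2}-s(c+2)$ using $\sum_{i=1}^s(r-i)=sr-\tfrac{s(s+1)}{2}$. Substituting and collecting, the additive correction becomes $(s+1)+s+usr-\tfrac{us(s+1)}{2}-s(c+2)=1+(ur-c)s-\tfrac{s(s+1)u}{2}$, which yields the claimed inequality for $d_\Omega$. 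The dimension $k_\Omega=n+g-1-\deg(G)$ is immediate from the standard differential-code formula quoted in Section \ref{sec2}, valid precisely under the hypothesis $2g-2<\deg(G)<n$; and $n\le N-s-1$ is consistent with $D$ omitting the $s+1$ chosen places.

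The genuinely delicate points, as opposed to routine bookkeeping, are only two. First, one must confirm that the index shift to $s+1$ places correctly turns the ``$+s$'' of Theorem \ref{construction} into ``$+(s+1)$'' with the sum $\sum_{i=0}^s(b_i-a_i)$ running over the extra coordinate $i=0$ attached to $Q_\infty$. Second, one must perform the range check $t_i\ge 1$ guaranteeing that the lower corner really sits coordinatewise below the upper one. Everything else is algebraic simplification of a telescoping sum, so I anticipate no conceptual obstacle; the argument is structurally identical to that of Proposition \ref{construction1}, which is exactly why the authors omit it.
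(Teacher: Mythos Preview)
Your proposal is correct and follows exactly the template the paper has in mind: take the rectangular block of consecutive pure gaps from Proposition~\ref{precisefamily4}, feed its two extreme corners into Theorem~\ref{construction} (applied to the $s+1$ places $Q_\infty,Q_1,\ldots,Q_s$), verify that the resulting divisor $G$ agrees with the one in the statement, and simplify the distance bound. The paper omits this proof precisely because it is a routine repetition of the argument for Proposition~\ref{construction1}, and your write-up carries this out accurately, including the check $t_s\ge 1$ and the arithmetic yielding $1+(ur-c)s-\tfrac{s(s+1)u}{2}$.
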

\begin{example}\label{example1}
	Consider the function field $F = \mathbb{F}_{q^t}(x,y)$ defined by $y^m  = (x^{q^{t/2}}-x)^{q^{t/2}-1}$, where $t$ is even, $q$ is a prime power, $m\mid (q^t-1)$ and $\gcd(m,q^{t/2}-1)=1$.
	From \cite{garciaConstructionCurvesFinite2001}, we know that $F$ has genus $g = (q^{t/2}-1)(m-1)/2$ and $N = (q^t-q^{t/2})m+q^{t/2}+1$ rational places over $\mathbb{F}_{q^{2t}}$.
	Let $n = N-s$.
	Table \ref{table1_1} shows the parameters of the multi-point AG codes constructed using Proposition \ref{construction1}.
	Table \ref{table1_2} shows the parameters of the multi-point AG codes constructed using Proposition \ref{construction2}.
	We observe that some multi-point AG codes have better parameters compared to the corresponding ones in the MinT's Table \cite{mintOnlineDatabaseOptimal}.
\end{example}

\begin{table}[h]
	\centering
	\caption{Multi-point AG codes constructed using Proposition \ref{construction1} in Example \ref{example1}}
	\begin{tabularx}{\textwidth}{ccccccc}
		\toprule
		$q$ & $t$ & $m$ & $s$ & $k$ & $[n,k_\Omega,d_\Omega]$ & Improvement on $d_\Omega$ compared with \cite{mintOnlineDatabaseOptimal}\\ \midrule
		8 & 2 & 9 & 2 & 5 & $[511,445,\geq 42]$ & 3\\ 
		8 & 2 & 9 & 3 & 4 & $[510,459,\geq 30]$ & 2\\ 
		8 & 2 & 3 & 2 & 3 & $[175,161,\geq 10]$ & 1\\
		9 & 2 & 5 & 2 & 5 & $[368,331,\geq 24]$ & 2\\
		9 & 2 & 5 & 3 & 4 & $[367,338,\geq 18]$ & 1\\
		5 & 2 & 3 & 2 & 1 & $[64,59,\geq 4]$ & 0\\ \botrule
	\end{tabularx}
	\label{table1_1}
\end{table}
\begin{table}[h]
	\centering
	\caption{Multi-point AG codes constructed using Proposition \ref{construction2} in Example \ref{example1}}
	\begin{tabularx}{\textwidth}{ccccccc}
		\toprule
		$q$ & $t$ & $m$ & $s$ & $k$ & $[n,k_\Omega,d_\Omega]$& Improvement on $d_\Omega$ compared with \cite{mintOnlineDatabaseOptimal}\\  \midrule
		8 & 2 & 9 & 1 & 5 & $[511,445,\geq 42]$ & 3\\
		8 & 2 & 9 & 2 & 4 & $[510,459,\geq 30]$ & 2\\
		8 & 2 & 3 & 1 & 3 & $[175,161,\geq 10]$ & 1\\
		9 & 2 & 5 & 1 & 5 & $[368,331,\geq 24]$ & 2\\
		9 & 2 & 5 & 2 & 4 & $[367,338,\geq 18]$ & 1\\
		5 & 2 & 3 & 1 & 1 & $[64,59,\geq 4]$ & 0\\ \botrule
	\end{tabularx}
	\label{table1_2}
\end{table}

\begin{example}\label{example2}
	Consider the function field $F = \mathbb{F}_{q^2}(x,y)$ defined by $y^m = x^q+x$, where $q$ is a prime power and $m\mid q+1$.
	The function field $F$ has genus $g = (m-1)(q-1)/2$ and the number of rational places is $N = q(1+(q-1)m)+1$ over $\mathbb{F}_{q^2}$.
	Let $n = N-s$.
	Table \ref{table2_1} shows the parameters of the multi-point AG codes constructed using Proposition \ref{construction1}.
	Table \ref{table2_2} shows the parameters of the multi-point AG codes constructed using Proposition \ref{construction2}.
	We observe that some multi-point AG codes have better parameters compared to the corresponding ones in the MinT's Table \cite{mintOnlineDatabaseOptimal}.
\end{example}

\begin{table}[h]
	\centering
	\caption{Multi-point AG codes constructed using Proposition \ref{construction1} in Example \ref{example2}}
\begin{tabularx}{\textwidth}{cccccc}
	\toprule
	$q$ & $m$ & $s$ & $k$ & $[n,k_\Omega,d_\Omega]$& Improvement on $d_\Omega$ compared with \cite{mintOnlineDatabaseOptimal}\\  \midrule
	5 & 6 & 2 & 2 & $[124,106,\geq 12]$ & 1\\
	7 & 8 & 2 & 4 & $[342,295,\geq 30]$ & 3\\
	7 & 4 & 2 & 3 & $[174,156,\geq 12]$ & 1\\
	8 & 9 & 2 & 5 & $[511,445,\geq 42]$ & 3\\
	8 & 3 & 2 & 3 & $[175,161,\geq 10]$ & 1\\
	9 & 5 & 2 & 5 & $[368,331,\geq 24]$ & 2\\
	9 & 5 & 3 & 4 & $[367,338,\geq 18]$ & 1\\
	9 & 2 & 2 & 2 & $[152,145,\geq 6]$ & 0\\ \botrule
\end{tabularx}
\label{table2_1}
\end{table}

\begin{table}[h]
	\centering
	\caption{Multi-point AG codes constructed using Proposition \ref{construction2} in Example \ref{example2}}
\begin{tabularx}{\textwidth}{cccccc}
	\toprule
	$q$ & $m$ & $s$ & $k$ & $[n,k_\Omega,d_\Omega]$& Improvement on $d_\Omega$ compared with \cite{mintOnlineDatabaseOptimal}\\ \midrule
	5 & 6 & 1 & 2 & $[124,106,\geq 12]$ & 1\\ 
	7 & 8 & 1 & 4 & $[342,295,\geq 30]$ & 3\\ 
	7 & 4 & 1 & 3 & $[174,156,\geq 12]$ & 1\\ 
	8 & 9 & 1 & 5 & $[511,445,\geq 42]$ & 3\\ 
	8 & 3 & 1 & 3 & $[175,161,\geq 10]$ & 1\\
	9 & 5 & 1 & 5 & $[368,331,\geq 24]$ & 2\\ 
	9 & 5 & 2 & 4 & $[367,338,\geq 18]$ & 1\\ 
	9 & 2 & 1 & 2 & $[152,145,\geq 6]$ & 0\\ \botrule
\end{tabularx}
\label{table2_2}
\end{table}

Some AG codes presented in Tables \ref{table1_1} and \ref{table2_1} can also be found in \cite{castellanosSetPure2024}, where the infinite place was not taken into consideration.
Here, we incorporate the infinite place in the construction of AG codes and obtain Tables \ref{table1_2} and \ref{table2_2}.
It is observed that the parameters of the AG codes in Tables \ref{table1_1} and \ref{table1_2} are identical,
and the same holds true for the parameters of the AG codes in Tables \ref{table2_1} and \ref{table2_2}.
Finally, we present an example in accordance with Proposition \ref{construction3}.

\begin{example}
	Consider the function field $F = \mathbb{F}_{q^2}(x,y)$ defined by $y^{q+1} = \sum_{i=1}^t x^{q/2^i}$, where $q = 2^t$ and $t\geq 2$.
	The function field $F$ has genus $g = q(q-2)/2$ and the number of rational places is $N = 1+q^2+2gq$ over $\mathbb{F}_{q^2}$ \cite{abdnMaximalCurves1999}.
	When $t = 3$, we have the function field $F = \mathbb{F}_{64}(x,y)$ defined by $y^{9} = x^4+x^2+x$.
	Take $s=1$ and $c=3$ in Proposition \ref{construction3}.
	Then we obtain a multi-point AG code with parameters $[255,236,\geq 12]$,  which attains the minimum distance according to MinT's Tables \cite{mintOnlineDatabaseOptimal}.
\end{example}

\section{Acknowledgements}
This work is supported by Guangdong Basic and Applied Basic Research Foundation(No. 2025A1515011764), the National Natural Science Foundation of China (No. 12441107), 
Guangdong Major Project of Basic and Applied Basic Research (No. 2019B030302008) and Guangdong Provincial Key Laboratory of Information Security Technology (No. 2023B1212060026). 
\bibliography{refs}
\end{document}